\newtheorem{theorem}{Theorem}[section]
\newtheorem{prop}[theorem]{Proposition}
\newtheorem{cor}[theorem]{Corollary}
\newtheorem{lemma}[theorem]{Lemma}
\newtheorem{algorithm}[theorem]{Algorithm}
\newtheorem{remark}[theorem]{Remark}
\newtheorem{define}[theorem]{Definition}
\newtheorem{notation}[theorem]{Notation}
\newcommand{\ord}{\mbox{\rm ord}}
\definecolor{highlight}{rgb}{.5,0,.5}
\newcommand{\zero}{{\mbox{\rm Zero}}}
\newcommand{\mat}{{\mbox{\rm Mat}}}
\newcommand{\gal}{{\mbox{\rm Gal}}}
\newcommand{\rank}{\mbox{\rm {rank}}}
\newcommand{\Tr}{\mbox{\rm {Tr}}}
\def\gl{{\mathfrak{g}l}}
\def\dsA{{A^{\oplus n}}}
\def\ssA{{S^{\leq d}A^{\oplus n}}}
\def\ssV{{(V^n)^{\circledS \leq d}}}
\def\diag{\hbox{\rm diag}}
\def\GL{{\rm GL}}
\def\stab{{\rm stab}}
\def\Zero{{\rm Zero}}
\def\num{{\rm num}}
\def\Mat{{\rm Mat}}
\def\sp{{\rm span}}
\def\calH{{\mathcal H}}
\def\bZ {{\mathbb{Z}}}
\def\bN{{\mathbb{N}}}
\def\calN{{\mathcal{N}}}
\def\calF{{\mathcal F}}
\def\calW{{\mathcal W}}
\def\calH{{\mathcal H}}
\def\calG{{\mathcal G}}
\def\calL{{\mathcal L}}
\def\calS{{\cal S}}
\def\bfb{{\mathbf b}}
\def\bfc{{\mathbf c}}
\def\bfe{{\mathbf e}}
\def\bfp{{\mathbf p}}
\def\bfu{{\mathbf u}}
\def\bfv{{\mathbf v}}
\def\bfw{{\mathbf w}}
\def\bfx{{\mathbf x}}
\def\bfm{{\mathbf m}}
\def\bfh{{\mathbf h}}
\def\bfz{{\mathbf z}}
\newcommand{\Rmnum}[1]{\expandafter\@slowromancap\romannumeral #1@}
\begin{document}

\title{Hrushovski's Algorithm for\\Computing the Galois Group of a Linear Differential Equation }
\author{Ruyong Feng\footnote{ryfeng@amss.ac.cn. This work is partially supported by a National Key Basic Research Project of China (2011CB302400) and
by a grant from NSFC (60821002).} \\ KLMM, AMSS, Chinese Academy of Sciences, \\Beijing 100190, China
}
\date{} \maketitle

\begin{abstract} We present a detailed and simplified version of Hrushovski's algorithm that determines the Galois group of a linear differential equation. There are three major ingredients in this algorithm. The first is to look for a degree bound for proto-Galois groups, which enables one to compute one of them. The second is to determine the identity component of the Galois group that is the pullback of a torus to the proto-Galois group. The third is to recover the Galois group from its identity component and a finite Galois group.     \end{abstract}


\section{Introduction}
\label{sec-preliminary}
In \cite{hrushovski}, Hrushovski developed an algorithm to compute the Galois groups for general linear differential equations. To the best of my knowledge, this is the first algorithm that works for all linear differential equations with rational function coefficients. Before Hrushovski's results, the known algorithms were only valid for linear differential equations of special types, for instance, low order or completely reducible equations. The algorithm due to Kovacic (\cite{kovacic}) deals with the second order equations. In \cite{sing-ulmer}, the authors determined the structural properties of the Galois groups of second and third order linear differential equations. In many cases these properties can be used to determine the Galois groups. In \cite{compoint-singer}, the authors gave an algorithm to compute the Galois group of linear differential equations that are completely reducible. The reader is referred to \cite{singer09, put} for the survey of algorithmic aspects of Galois groups  and the references given there for more results. In particular, in \cite{singer09}, the author gave a clear explanation of the method based on Tannakian philosophy and introduced the various techniques that were used in the known algorithms.

Throughout the paper, $C$ denotes an algebraically closed field of characteristic zero and $k=C(t)$ is the differential field with the usual derivation $\delta=\frac{d}{dt}$. The algebraic closure of $k$ is denoted by $\bar{k}$. Linear differential equations we consider here will be of the matrix form:
\begin{equation}
\label{lodes}
   \delta(Y)=AY,
\end{equation}
where $Y$ is a vector with $n$ unknowns and $A$ is an $n\times n$ matrix with entries in $k$.  Denote the Picard-Vessiot extension field of $k$ for (\ref{lodes}) by $K$ and the solution space of (\ref{lodes}) by $V$. Then the Galois group of (\ref{lodes}) over $k$, denoted by  $\gal(K/k)$, are defined as the group of differential automorphisms of $K$ that keep all elements of $k$ fixed. For brevity, we usually use $\calG$ to denote this group. $\calG$ is a subgroup of $\GL(V)$. A matrix in $\GL_n(K)$ whose columns form a basis of $V$ is called a fundamental matrix of (\ref{lodes}).

Elements of $V^n$ are vectors with $n^2$ coordinates. For the sake of convenience, elements of $V^n$ are also written in the matrix form. In such a case, $V^n=\{Fh | h\in \Mat_n(C)\}$, where $F$ is a fundamental matrix of (\ref{lodes}). Set $$V^n_{inv}=\{Fh|h\in \GL_n(C)\}.$$
Then $V_{inv}^n$ is an open subset of $V^n$. Note that if $F=I_n$, then $V^n_{inv}=\GL_n(C)$. In this paper, we will always use $X$ to denote the $n\times n$ matrix whose entries are indeterminates $x_{i,j}$. Without any possible ambiguity, we will also use $X$ to denote the set of indeterminates. Let $Z$ be a subset of $V^n_{inv}$.
 $Z$ is said to be a Zariski closed subset of $V_{inv}^n$ if there are polynomials $P_1(X),\cdots, P_m(X)$ such that
$Z=\zero(P_1(X),\cdots, P_m(X))\cap V_{inv}^n$. In this case, we also say that $Z$ is defined by $P_1(X), \cdots,P_m(X)$. We will use $N_d(V_{inv}^n)$ to denote the set of all subsets of $V_{inv}^n$, which are defined by finitely many polynomials with degree not greater than $d$. Note that here we have no requirement for the coefficients of these polynomials. Suppose that $\tilde{k}$ is an extension field of $k$ and $Z\subseteq V_{inv}^n$. If $Z$ can be defined by polynomials with coefficients in $\tilde{k}$, then $Z$ is said to be $\tilde{k}$-{\it definable}. The {\it stabilizer} of $Z$, denoted by $\stab(Z)$, is defined as the subgroup of $\GL(V)$ whose elements keep $Z$ unchange. Let $Z\in N_d(V_{inv}^n)$. In case that we emphasize the degree $d$ of defining polynomials, we will also say ``$Z$ is bounded by $d$".

Let $F$ be a fundamental matrix. For any $\sigma\in \GL(V)$, there exists $[\sigma]\in \GL_n(C)$ such that $\sigma(F)=F[\sigma]$. The map
$\phi_F: \GL(V) \rightarrow \GL_n(C)$, given by $\phi_F(\sigma)=[\sigma]$, is a group isomorphism and $\phi_F(\calG)$ is an algebraic subgroup of $\GL_n(C)$. Let $\calH$ be a subgroup of $\GL_n(V)$. For ease of notations, we use $\calH_F$ to denote $\phi_F(\calH)$. $\calH$ is said to be an algebraic subgroup if so is $\calH_F$. Assume that $\calH$ is an algebraic subgroup. Then $\calH^\circ, \calH^t$ are used to denote the pre-images of $\calH_F^\circ$ and $\calH_F^t$, where $\calH_F^\circ$ denotes the identity component of $\calH_F$ and $\calH_F^t$ is the intersection of kernels of all characters of $\calH_F$. $\calH$ is said to be bounded by $d$ if so is $\calH_F$.

The key point of Hrushovski's algorithm is that one can compute an integer $\tilde{d}$ such that there is an algebraic subgroup $\calH$ (or $\calH_F$) of $\GL(V)$ bounded by $\tilde{d}$ satisfying
$$
   (*): (\calH^\circ)^t\unlhd \calG^\circ \leq \calG \leq \calH,\,\,\mbox{or}\,\,\,\, (\calH_F^\circ)^t\unlhd \calG_F^\circ \leq \calG_F \leq \calH_F.
$$
For simplicity of presentation, we introduce the following notion.
\begin{define}
\label{define-pregaloisgroup}
The algebraic group $\calH$ $($or $\calH_F$$)$ in $(*)$ is called a {\it proto-Galois group} of (\ref{lodes}).
\end{define}
Roughly speaking, Hrushovski's approach includes the following steps.
\begin{itemize}
  \item [$(S1)$](proto-Galois groups). One can compute an integer $\tilde{d}$ only depending on $n$ such that there is a proto-Galois group of (\ref{lodes}), which is bounded by $\tilde{d}$.
Let $\calH$ be the intersection of the stabilizers of $k$-definable elements of $N_{\tilde{d}}(V^n_{inv})$. Then $\calH$ is a desired proto-Galois group of (\ref{lodes}).
\item [$(S2)$] (The toric part).
  Compute $\calH_F^\circ$ and let $\chi_1,\cdots, \chi_l$ be generators of the character group of $\calH_F^\circ$. Then the map $\varphi=(\chi_1,\cdots,\chi_l)$ is a morphism from $\calH_F^\circ$ to $(C^*)^l$. $\varphi(\calG_F^\circ)$ is the Galois group of some exponential extension $E$ of $\tilde{k}$ over $\tilde{k}$, where $\tilde{k}$ is an algebraic extension of $k$. One can find $E$ by computing the hyperexponential solutions of some symmetric power system of (\ref{lodes}). Pulling $\varphi(\calG_F^\circ)$ back to $\calH_F^\circ$, one gets $\calG_F^\circ$.
\item [$(S3)$](The finite part).
  Find a finite Galois extension $k_G$ of $k$ and a $k_G$-definable subset $Z$ of $V^n_{inv}$ such that $\calG^\circ=\stab(Z)$. Let $Z_1=Z, Z_2, \cdots, Z_m$ be the orbit of $Z$ under the action of
  $\gal(k_G/k)$.
  Then $\calG=\cup_{i=1}^m\{\sigma\in \GL(V)| \sigma(Z)=Z_i \}$.
\end{itemize}
We follow Hrushovski's approach, but take out the logical language and elaborate the details of the proofs that were only sketched in his paper. We hope this will be helpful for the reader to understand Hrushovski's approach. Meanwhile, we simplify the first step of his approach. In spite of calculating all $k$-definable elements, we only need compute one $k$-definable element to obtain a proto-Galois group. Hrushovski showed in the part \Rmnum{3} of \cite{hrushovski} how to compute the integer $\tilde{d}$ as claimed in (S1). As well as providing the detailed explanations of his proofs, we present an explicit estimate of the integer $\tilde{d}$.

The paper is organized as follows. In Appendix A, we describe a method to find a bound for $k$-definable elements of $N_d(V^n_{inv})$. In Appendix B, an explicit estimate of the integer $\tilde{d}$ that bounds the proto-Galois groups is presented. These bounds will guarantee the termination of the algorithm. In Sections 2 and 3, we show how to compute a proto-Galois group and then the Galois group.  Some computation details are omitted in these sections and will be completed in Section 4.

{\bf Acknowledgements}. Special thanks go to Michael F. Singer for his numerous significant suggestions that improve the paper a lot. In preparing the paper, the author was invited by Michael F. Singer to visit North Carolina State University for two weeks. The author thanks him for the invitation and financial support. Many thanks also go to Shaoshi Chen, Ziming Li and Daniel Rettstadt for their valuable discussions.
\section{proto-Galois groups}
\label{sec-pregaloisgroup}
This section will be devoted to finding a proto-Galois group of (\ref{lodes}). Let $d \in \bZ_{\geq 0}\cup \{\infty\}$ and $I \subseteq k[x_{1,1},\cdots, x_{n,n}]$. $I_{\leq d}$ denotes the set of polynomials in $I$ with degree not greater than $d$. Set
\begin{equation}
\label{eqn-partialalgrelations}
   I_{F,d}=\left\{ P(X)\in k[x_{1,1},\cdots, x_{n,n}]_{\leq d}\,\,\left |\,\, P(F)=0\right.\right\} \,\,\mbox{and}\,\,Z_{F,d}=\Zero(I_{F,d})\bigcap V_{inv}^n
\end{equation}
When $d=\infty$, we use $I_F$ and $Z_F$ for short. $I_F$ consists of algebraic relations among entries of the fundamental matrix $F$. The Galois group of (\ref{lodes}) is considered as the subgroup of $\GL(V)$ that preserves $I_F$. Precisely,
$$
  \calG=\{\sigma\in \GL(V)\,\,|\,\, \sigma(Z_F)=Z_F\}.
$$
Hence once $I_F$ is computed, $\calG$ will be determined. In general, it is hard to calculate $I_F$.
While given a nonnegative integer $d$, the results in Appendix A enable us to compute $I_{F,d}$ (see Section \ref{subsec-computpregaloisgroup}). Moreover, we shall show that if $d$ is large enough, then $\stab(Z_{F,d})$ will be a proto-Galois group of (\ref{lodes}). Corollary \ref{cor-boundeddegree} in Appendix B tells us how large the integer $d$ is enough.
Let us start with two lemmas.
 \begin{lemma}
 \label{lem-definable}
   Assume that $U$ is a $C$-definable Zariski closed subset of $\GL_n(C)$ such that $\calG\subseteq \stab(FU)$. Then $FU$ is a $k$-definable Zariski closed subset of $V^n_{inv}$. Moreover if $U$ is bounded by $d$ then so is $FU$.
 \end{lemma}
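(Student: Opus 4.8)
The plan is to reduce the lemma to the standard descent principle of Picard--Vessiot theory: a Zariski closed subset of $V^n_{inv}$ that is cut out by polynomials with coefficients in $K$ and is stable under $\calG$ is already definable over $k=K^{\calG}$. The hypothesis $\calG\subseteq\stab(FU)$ is precisely what will supply the $\calG$-stability, and the linear substitution $X\mapsto F^{-1}X$ is what produces explicit equations over $K$ to begin with.

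We may assume from the start that $U$ is bounded by some $d$: if $U$ is bounded by $d$ in the hypothesis, fix that $d$ and choose $Q_1,\dots,Q_s\in C[X]$ with $U=\zero(Q_1,\dots,Q_s)\cap\GL_n(C)$ and $\deg Q_i\le d$; otherwise choose any such $Q_i$ and put $d=\max_i\deg Q_i$. In either case $\deg Q_i\le d$, and it suffices to prove that $FU$ is $k$-definable and bounded by $d$. For a point $Y\in V^n_{inv}$, writing $Y=Fh$ with $h=F^{-1}Y\in\GL_n(C)$, one has $Q_i(F^{-1}Y)=Q_i(h)$, so
\[
  FU=\zero(Q_1(F^{-1}X),\dots,Q_s(F^{-1}X))\cap V^n_{inv}.
\]
Each $Q_i(F^{-1}X)$ is obtained from $Q_i$ by the linear substitution $X\mapsto F^{-1}X$, hence lies in $K[X]$ with degree at most $\deg Q_i\le d$. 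So $FU$ is Zariski closed in $V^n_{inv}$ and already belongs to $N_d(V^n_{inv})$; the remaining point is to replace the coefficients, which a priori lie in $K$, by coefficients in $k$.

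Let $\mathfrak{a}=\{P\in K[X]_{\le d}\ :\ P \text{ vanishes on } FU\}$, a $K$-subspace of the finite-dimensional space $K[X]_{\le d}=K\otimes_k k[X]_{\le d}$, on which $\calG=\gal(K/k)$ acts $K$-semilinearly via its action on coefficients (fixing $k[X]_{\le d}$). I claim $\mathfrak{a}$ is $\calG$-stable. Indeed, for $\sigma\in\calG$, $P\in\mathfrak{a}$ and $y\in FU$, we have $\sigma^{-1}(y)\in\sigma^{-1}(FU)=FU$ since $\sigma^{-1}\in\calG\subseteq\stab(FU)$; as $\sigma$ is a field automorphism, acting coordinatewise on matrix entries, $\sigma(P)(y)=\sigma(P(\sigma^{-1}(y)))=\sigma(0)=0$, and $\deg\sigma(P)\le d$, so $\sigma(P)\in\mathfrak{a}$. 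Because $K^{\calG}=k$, the Galois descent principle for Picard--Vessiot extensions now tells us that the $\calG$-stable $K$-subspace $\mathfrak{a}$ is spanned over $K$ by its intersection with $k[X]_{\le d}$; choose a spanning family $P_1,\dots,P_r\in k[X]_{\le d}$.

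It remains to check $FU=\zero(P_1,\dots,P_r)\cap V^n_{inv}$. Since the $P_j$ span $\mathfrak{a}$ over $K$, $\zero(P_1,\dots,P_r)=\zero(\mathfrak{a})$. The inclusion $FU\subseteq\zero(\mathfrak{a})\cap V^n_{inv}$ is the definition of $\mathfrak{a}$. Conversely, each $Q_i(F^{-1}X)$ lies in $\mathfrak{a}$ (it has degree $\le d$ and vanishes on $FU$), so if $Y=Fh\in\zero(\mathfrak{a})\cap V^n_{inv}$ then $Q_i(h)=Q_i(F^{-1}Y)=0$ for all $i$, whence $h\in U$ and $Y\in FU$. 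Thus $FU=\zero(P_1,\dots,P_r)\cap V^n_{inv}$ with the $P_j\in k[X]$ of degree $\le d$, which is exactly the assertion. The only ingredient beyond routine bookkeeping is the descent principle — classical, and the place where $k=K^{\calG}$ enters — so I anticipate no real obstacle; the one thing to be careful about is keeping track that $\mathfrak{a}$ genuinely contains the $Q_i(F^{-1}X)$, which is what forces $\zero(\mathfrak{a})\cap V^n_{inv}$ down to $FU$ rather than something larger.
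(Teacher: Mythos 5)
Your proof is correct and takes essentially the same route as the paper: pass to the $K$-vector space of degree-$\le d$ polynomials over $K$ vanishing on $FU$ (your $\mathfrak{a}$, the paper's $\tilde I_{\le d}$), observe it is $\calG$-stable, and descend the coefficients to $k$ using $K^{\calG}=k$. The only difference is that you invoke Picard--Vessiot Galois descent as a known black-box principle, whereas the paper proves exactly that descent step inline via the minimal-number-of-monomials argument it attributes to van der Put--Singer.
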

 \begin{proof}
     Assume that $U$ is bounded by $d$. Let
     $$
         J=\left\{P(X)\in C[x_{1,1},x_{1,2},\cdots,x_{n,n}]_{\leq d}\,\,|\,\,\forall \,\,u\in U, P(u)=0 \right\}.
     $$
      Since $U$ is bounded by $d$,
     $U=\Zero(J) \cap \GL_n(C)$. Let
     $
        \tilde{I}
     $ be the ideal in $K[x_{1,1},x_{1,2},\cdots,x_{n,n}]$ generated by $\{P(F^{-1}X)| P(X)\in J\}$. Set
    $$
      \tilde{I}_{\leq d}=\tilde{I}\cap K[x_{1,1},x_{1,2},\cdots, x_{n,n}]_{\leq d}.
    $$
    Then $\{P(F^{-1}X)| P(X)\in J\}\subseteq \tilde{I}_{\leq d}$ and $FU=\Zero(\tilde{I}_{\leq d})\cap V^n_{inv}$. Moreover, assume that $P(X)$ is a polynomial in $K[x_{1,1},x_{1,2},\cdots,x_{n,n}]_{\leq d}$ satisfying that $P(Fu)=0$ for any $u\in U$. Then one can easily see that $P(FX)$ is a $K$-linear combination of finitely many elements in $J$. Therefore $P(X)\in \tilde{I}_{\leq d}$. Now let
    $$
       I_{\leq d}=\tilde{I}_{\leq d}\cap k[x_{1,1},x_{1,2},\cdots, x_{n,n}].
    $$
    We will show that $I_{\leq d}$ generates $\tilde{I}$. For this, it suffices to prove that $I_{\leq d}$ generates $\tilde{I}_{\leq d}$, since $\tilde{I}_{\leq d}$ generates $\tilde{I}$.
    We will use the similar argument as in (p.23, \cite{put-singer}) to prove this. Use $\langle I_{\leq d}\rangle$ to denote the ideal of $K[x_{1,1},x_{1,2},\cdots,x_{n,n}]$ generated by $I_{\leq d}$.
    Assume that $\tilde{I}_{\leq d}$ is not a subset of $\langle I_{\leq d}\rangle$. Pick $Q(X)\in \tilde{I}_{\leq d}\setminus \langle I_{\leq d}\rangle$ such that $\num(Q(X))$, the number of the monomials of $Q(X)$, is minimal. If $\num(f)=1$, it is clear that $Q(X)\in \langle I_{\leq d}\rangle$. Hence $\num(Q(X))>1$. Without loss of generality, we may assume that one of the coefficients of $Q(X)$ equals one and one of them, say $c$, is not in $k$. Let $\sigma\in \calG$. We use $Q_\sigma(X)$ to denote the image of $Q(X)$ by applying $\sigma$ to the coefficients of $Q(X)$. For every $u\in U$, since $Q(\sigma^{-1}(Fu))=0$,
    $$\sigma(Q(\sigma^{-1}(Fu))=Q_\sigma(\sigma(\sigma^{-1}(Fu)))=Q_\sigma(Fu)=0.$$
    It implies that $Q_\sigma(X)\in \tilde{I}_{\leq d}$ for all $\sigma\in \calG$.
     Then the minimality of $\num(Q(X))$ implies that both $Q(X)-Q_\sigma(X)$ and $c^{-1}Q(X)-\sigma(c)^{-1}Q_\sigma(X)$ are in $\langle I_{\leq d}\rangle$. Therefore
     $$\forall \,\, \sigma\in \calG,\,\,(\sigma(c)^{-1}-c^{-1})Q(X)\in \langle I_{\leq d}\rangle.$$
      Since $c\notin k$, there is $\sigma\in \calG$ such that $\sigma(c)\neq c$. So $Q(X)\in \langle I_{\leq d}\rangle$, a contradiction. Hence $I_{\leq d}$ generates $\tilde{I}_{\leq d}$.
 \end{proof}
 The correctness of the statement below is almost obvious. However, since it will be used frequently, we state it as a lemma.
  \begin{lemma}
     \label{lem-stabilizer}
     Let $H$ be a subgroup of $\GL_n(C)$ and $\calN=\stab(FH)$. Then $\calN_F=H$.
  \end{lemma}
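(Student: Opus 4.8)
The plan is to compute directly the image of $FH$ under an arbitrary $\sigma\in\GL(V)$ and to read off from that computation exactly when the image equals $FH$. So I would fix $\sigma\in\GL(V)$ and let $[\sigma]\in\GL_n(C)$ be the matrix with $\sigma(F)=F[\sigma]$. For any $h\in H$ the entries of $h$ lie in $C$ and are therefore fixed by $\sigma$, hence $\sigma(Fh)=\sigma(F)h=F[\sigma]h$. Consequently $\sigma(FH)=F[\sigma]H$ as subsets of $V^n_{inv}$; this is the only place where the action of $\sigma$ enters.

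The second step is to convert the stabilizer condition $\sigma(FH)=FH$ into a condition on $[\sigma]$. Since $F\in\GL_n(K)$ is invertible, left multiplication by $F^{-1}$ is a bijection on the set of $n\times n$ matrices over $K$, so $F[\sigma]H=FH$ is equivalent to $[\sigma]H=H$. Because $H$ is a subgroup of $\GL_n(C)$, the equality $[\sigma]H=H$ holds precisely when $[\sigma]\in H$: one direction follows by evaluating both sides at the identity matrix, and the converse from the inclusions $[\sigma]H\subseteq H$ and $[\sigma]^{-1}H\subseteq H$. Thus $\sigma\in\stab(FH)$ if and only if $[\sigma]\in H$.

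Finally, since $\phi_F\colon\GL(V)\to\GL_n(C)$, $\sigma\mapsto[\sigma]$, is a group isomorphism, the equivalence just established says exactly that $\phi_F(\calN)=\phi_F(\stab(FH))=H$, that is, $\calN_F=H$, which is the assertion.

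There is no genuine obstacle here — this is why the statement was flagged as almost obvious. The only two points that must be kept straight are that every element of $\GL(V)$ fixes the constant matrices $h\in H$ (so that $\sigma$ only acts on the factor $F$), and that $F$ is invertible over $K$ (so that the cancellation $F[\sigma]H=FH\Rightarrow[\sigma]H=H$ is legitimate); both are immediate from the setup.
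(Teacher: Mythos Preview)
Your proof is correct and follows essentially the same route as the paper's: both reduce membership in $\stab(FH)$ to the coset equality $[\sigma]H=H$, which is equivalent to $[\sigma]\in H$ since $H$ is a group. One small wording point: the identity $\sigma(Fh)=\sigma(F)h$ holds because $\sigma\in\GL(V)$ is $C$-linear on $V$ (and acts columnwise on $V^n$), not because $\sigma$ ``fixes'' constants in the field-automorphism sense---but the conclusion is the same.
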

  \begin{proof}
    One can easily see that $\calN_F\subseteq H$. Assume that $h\in H$. Then there is an element $\sigma_h$ in $\GL(V)$ satisfying that
    $\sigma_h(F)=Fh$. Now for any $h'\in H$,
    $$
       \sigma_h(Fh')=Fhh'\in FH \,\,\mbox{and}\,\,\sigma_h(Fh^{-1}h')=Fh'.
    $$
   They imply that $\sigma_h\in \calN$ and then $h\in \calN_F$. This concludes the lemma.
  \end{proof}

\begin{prop}
\label{prop-pregaloisgroup}
 Let $\tilde{d}$ be as in Corollary \ref{cor-boundeddegree} of Appendix B
  and $\calH=\stab(Z_{F,\tilde{d}})$. Then $\calH$ is a proto-Galois group of (\ref{lodes}) and moreover, $Z_{F,\tilde{d}}=F\calH_F$.
\end{prop}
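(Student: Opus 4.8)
The plan is to deduce both assertions from the two preliminary lemmas together with the defining property of $\tilde d$ from Corollary~\ref{cor-boundeddegree}, which I will take to say that there is some proto-Galois group of (\ref{lodes}) bounded by $\tilde d$. Write $\calH = \stab(Z_{F,\tilde d})$. The first observation is that $\calH$ always contains $\calG$: every $\sigma\in\calG$ fixes each polynomial in $I_{F,\tilde d}$ (these have coefficients in $k$ and vanish at $F$), hence permutes the zero set, and preserves $V^n_{inv}$, so $\sigma(Z_{F,\tilde d})=Z_{F,\tilde d}$. Thus $\calG\le\calH$, which is the rightmost inclusion in $(*)$. It remains to establish the left part $(\calH^\circ)^t\unlhd\calG^\circ$, and separately the identity $Z_{F,\tilde d}=F\calH_F$.

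For the identity $Z_{F,\tilde d}=F\calH_F$: first note $Z_{F,\tilde d}$ is itself a $k$-definable Zariski closed subset of $V^n_{inv}$ bounded by $\tilde d$, by its very definition via $I_{F,\tilde d}$. One inclusion is easy, since $F=F\cdot I_n\in Z_{F,\tilde d}$ (every $P\in I_{F,\tilde d}$ satisfies $P(F)=0$), and then for $[\sigma]\in\calH_F$ with $\sigma\in\calH$ we get $F[\sigma]=\sigma(F)\in\sigma(Z_{F,\tilde d})=Z_{F,\tilde d}$; so $F\calH_F\subseteq Z_{F,\tilde d}$. The reverse inclusion is where I expect to use the structure more carefully: I would set $H = F^{-1}Z_{F,\tilde d}\subseteq\GL_n(C)$ — one must check this lands in $\GL_n(C)$, i.e.\ that the relations in $I_{F,\tilde d}$ translated through $F$ have $C$-coefficients; this is exactly the content of the argument inside Lemma~\ref{lem-definable} run in the opposite direction (the descent-of-coefficients argument forces the defining ideal of $F^{-1}Z_{F,\tilde d}$ to be generated over $C$). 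Granting that $H$ is a $C$-definable Zariski closed subset of $\GL_n(C)$ bounded by $\tilde d$ with $Z_{F,\tilde d}=FH$, Lemma~\ref{lem-stabilizer} gives $\calH_F=\stab(FH)_F=H$, hence $Z_{F,\tilde d}=F\calH_F$.

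For the proto-Galois inclusion $(\calH^\circ)^t\unlhd\calG^\circ$: let $\calH'$ be a proto-Galois group bounded by $\tilde d$, guaranteed by Corollary~\ref{cor-boundeddegree}, so $(\calH'^\circ)^t\unlhd\calG^\circ\le\calG\le\calH'$ and $\calH'_F$ is cut out by polynomials of degree $\le\tilde d$. The idea is that $\calH=\stab(Z_{F,\tilde d})$ is the \emph{largest} group fixing \emph{every} $k$-definable closed set bounded by $\tilde d$ that contains $F$ — in particular $\calH$ is contained in the stabilizer of $F\calH'_F$: indeed $\calG\le\calH'=\stab(F\calH'_F)$ by Lemma~\ref{lem-stabilizer}, so by Lemma~\ref{lem-definable} the set $F\calH'_F$ is $k$-definable and bounded by $\tilde d$, hence is among the relations defining $I_{F,\tilde d}$, forcing $Z_{F,\tilde d}\subseteq F\calH'_F$; but then any $\sigma$ preserving $Z_{F,\tilde d}$ and containing $F$ in it must preserve $F\calH'_F$, giving $\calH\le\stab(F\calH'_F)=\calH'$. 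Therefore $\calH\le\calH'$, and taking identity components and toric parts (which are monotone under inclusion of algebraic groups with the appropriate care) yields $(\calH^\circ)^t\le(\calH'^\circ)^t\unlhd\calG^\circ$; combining with $\calG\le\calH$ shows $\calH$ itself is a proto-Galois group.

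The main obstacle I anticipate is the ``$C$-coefficient descent'' needed to see that $F^{-1}Z_{F,\tilde d}$ is genuinely a subset of $\GL_n(C)$ defined over $C$ — i.e.\ running the Galois-descent argument of Lemma~\ref{lem-definable} in reverse, showing the ideal $I_{F,\tilde d}$ is, after the substitution $X\mapsto FX$, spanned by polynomials over $C$. One must be careful that this uses that $I_{F,\tilde d}$ is $\calG$-stable (clear, since it is defined over $k$) and that $Z_{F,\tilde d}$ is nonempty containing $F$. The monotonicity $(\calH^\circ)^t\le(\calH'^\circ)^t$ also deserves a line: it follows because $\calH\le\calH'$ implies $\calH^\circ\le\calH'^\circ$, and the toric part is characterized as the common kernel of all characters, so any character of $\calH'^\circ$ restricts to one of $\calH^\circ$, whence $(\calH^\circ)^t\subseteq(\calH'^\circ)^t$.
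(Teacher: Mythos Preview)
Your overall strategy matches the paper's, and the second half of your argument (using Lemma~\ref{lem-definable} to see that $F\calH'_F$ is $k$-definable and bounded by~$\tilde d$, hence $Z_{F,\tilde d}\subseteq F\calH'_F$, and then deducing $\calH_F\subseteq\calH'_F$ from $F\in Z_{F,\tilde d}$) is correct and essentially what the paper does. Your monotonicity argument for $(\calH^\circ)^t\subseteq(\calH'^\circ)^t$ via restriction of characters is also fine; the paper uses the equivalent characterization of $(\,\cdot\,)^t$ as the subgroup generated by unipotents.

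The genuine gap is in your proof of $Z_{F,\tilde d}=F\calH_F$. You set $H=F^{-1}Z_{F,\tilde d}$ and then invoke Lemma~\ref{lem-stabilizer} to conclude $\calH_F=H$. But Lemma~\ref{lem-stabilizer} is stated and proved only for a \emph{subgroup} $H$ of $\GL_n(C)$; for an arbitrary closed subset the left-stabilizer $\{g:gH=H\}$ can be strictly smaller than $H$. You never verify that $H$ is a group, and this is precisely the nontrivial point. Moreover, the obstacle you flag --- that one must ``run Lemma~\ref{lem-definable} in reverse'' to see that $H$ lands in $\GL_n(C)$ with $C$-coefficients --- is not the issue: since $V^n_{inv}=F\cdot\GL_n(C)$, the set $H$ is automatically a subset of $\GL_n(C)$, and $C$-definability is immediate (expand $P(FX)$ over a $C$-basis of the coefficient span). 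What actually requires work is the group structure. The paper handles this directly: $I_n\in H$ since $F\in Z_{F,\tilde d}$; closure under multiplication follows because $P(Fh_2)=0$ forces $P(Xh_2)\in I_{F,\tilde d}$, whence $P(Fh_1h_2)=0$; and closure under inverses is obtained from the descending chain $\cdots\subseteq Hh^2\subseteq Hh\subseteq H$ of Zariski closed sets, which must stabilize, giving $Hh=H$ and hence $h^{-1}\in H$. Once $H$ is known to be a group, Lemma~\ref{lem-stabilizer} applies and gives $\calH_F=H$ as you want.
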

\begin{proof}
  We first show that $Z_{F,\tilde{d}}=F\calH_F$.
  Assume that $Z_{F,\tilde{d}}=FH$.
  If $H$ is a group, then one has that $\calH_F=H$ by Lemma \ref{lem-stabilizer}. That is to say, $Z_{F,\tilde{d}}=FH=F\calH_F$. Hence it suffices to show that $H$ is a group.
    As $F\in Z_{F,\tilde{d}}$, we have that $I_n\in H$. Suppose that $h_1,h_2\in H$. For any $P(X)\in I_{F,\tilde{d}}$, the equality $P(Fh_2)=0$ implies that $P(Xh_2)$ is an element of $I_{F,\tilde{d}}$. Hence $P(Fh_1h_2)=0$ and then $h_1h_2\in H$. It remains to prove that for any $h\in H$, $h^{-1}\in H$. As $H$ is closed under the multiplication, $Hh\subseteq H$. Multiplying both sides of $Hh\subseteq H$ by $h$ repeatedly, we have that
   $$
      \cdots \subseteq Hh^3\subseteq Hh^2\subseteq Hh \subseteq H.
   $$
   It is easy to verify that $H$ is a Zariski closed subset of $\GL_n(C)$ and so is $Hh^i$ for all positive integer $i$.
   The stability of the above sequence indicates that $Hh^{i_0+1}=Hh^{i_0}$ for some $i_0 \geq 0$ and therefore $Hh=H$. So $h^{-1}\in H$.

   Now we prove that $\calH$ is a proto-Galois group of (\ref{lodes}). First of all, as $Z_{F,\tilde{d}}$ is $k$-definable, $\calG\subseteq \stab(Z_{F,\tilde{d}})=\calH$. By Corollary \ref{cor-boundeddegree}, there is a proto-Galois group $\tilde{H}$ of (\ref{lodes}) bounded by $\tilde{d}$. Since $\calG_F\subseteq \tilde{H}$, $\calG\subseteq \stab(F\tilde{H})$ and then by Lemma~\ref{lem-definable}, $F\tilde{H}$ is a $k$-definable element of $N_{\tilde{d}}(V_{inv}^n)$. Then there are polynomials $Q_1(X), \cdots, Q_s(X)$ in
   $k[x_{1,1},x_{1,2},\cdots,x_{n,n}]_{\leq \tilde{d}}$ such that $F\tilde{H}=\Zero(Q_1(X),\cdots, Q_s(X))\cap V_{inv}^n$. It follows from $Q_i(F)=0$ that $Q_i\in I_{F,\tilde{d}}$ for all $i$ with $1\leq i \leq s$. Hence $F\calH_F=Z_{F,\tilde{d}}\subseteq F\tilde{H}$.
   This implies that $\calH_F\subseteq \tilde{H}$. Note that $(\calH_F^\circ)^t$ is generated by all unipotent elements of $\calH_F^\circ$ that are in $\tilde{H}^\circ$. Hence
   $$(\calH_F^\circ)^t\subseteq (\tilde{H}^\circ)^t\subseteq \calG_F^\circ\subseteq \calH_F^\circ.$$ Then the conclusion follows from the fact that $(\calH_F^\circ)^t$ is a normal subgroup of $\calH_F^\circ$.
\end{proof}

\section{Recovering Galois groups}
\label{sec-galoisgroup}
 Throughout this section, $I_{F,\tilde{d}}, Z_{F,\tilde{d}}$ and $\calH$ are as in Proposition \ref{prop-pregaloisgroup}. We will first compute a Zariski closed subset of $Z_{F,\tilde{d}}$ whose stabilizer is $\calG^\circ$. Then using the Galois group of finite extension, we construct $I_{\tilde{F}}$ and then the Galois group $\calG$, where $I_{\tilde{F}}$ is defined in (\ref{eqn-partialalgrelations}) with some fundamental matrix $\tilde{F}$ and $d=\infty$. Note that $\calG^\circ$ is defined as the pre-image of $\calG_F^\circ$ under the map $\phi_F$ in Section~\ref{sec-preliminary}. It is well-known that $\calG^\circ$ is equal to $\gal(\bar{k}K/\bar{k})$ where $\bar{k}K$ is the Picard-Vessiot extension field of $\bar{k}$ for (\ref{lodes}).

\subsection{Identity component $\calG^\circ$}
Decomposing $\calH_F$ into irreducible components, we obtain its identity component $\calH_F^\circ$. The defining equations of $\calH_F^\circ$ will lead to a Zariski closed subset $Z_{\calH^\circ}$ of $Z_{F,\tilde{d}}$ such that the stabilizer of $Z_{\calH^\circ}$ is $\calH^\circ$. Let $\chi_1,\cdots, \chi_l$ be generators of $X(\calH_F^\circ)$, where $X(\calH_F^\circ)$ is the group of characters of $\calH_F^\circ$. We will show that each character corresponds to a hyperexponential element over $\bar{k}$. Assuming we can find $\chi_1,\cdots,\chi_l$ (and we will show how this can be done), the results in \cite{compoint-singer} allow us to find algebraic relations among hyperexponential elements associated with $\chi_1,\cdots, \chi_l$. These relations together with $Z_{\calH^\circ}$ produce a Zariski closed subset $Z$ of $Z_{\calH^\circ}$ such that the identity component of $\stab(Z)$ is $\calG^\circ$. Using the similar argument as constructing $Z_{\calH^\circ}$, we are able to find a Zariski closed subset whose stabilizer is ${\calG^\circ}$.

Let $\bar{k}K$ be the Picard-Vessiot extension field of $\bar{k}$ for (\ref{lodes}) and $H$ a subgroup of $\GL_n(C)$. For brevity, we will use $H(\bar{k}K)$ to denote $\Zero(I(H))\cap \GL_n(\bar{k}K)$ where $I(H)$ is the vanishing ideal of $H$ in $C[x_{1,1},x_{1,2},\cdots, x_{n,n}]$. Let $\calN$ be a subgroup of $\GL_n(V)$. Suppose that $F\calN_F$ is $\bar{k}$-definable and $\Phi$ is the vanishing ideal of $F\calN_F$ in $\bar{k}[x_{1,1},x_{1,2},\cdots,x_{n,n}]$. Let $\gamma$ be an element in $\Zero(\Phi)\cap \GL_n(\bar{k})$. Then we have the following porposition.
\begin{prop}
  \label{prop-Hcirc}
  \begin{itemize}
   \item [$(a)$] For every $\bar{F}\in F\calN_F$, there is $g_{\bar{F}} \in\calN_F$ such that $$\left(\gamma g_{\bar{F}}\right)^{-1} \bar{F}\in \calN_F^\circ(\bar{k}K);$$
   \item [$(b)$] Let $g_{\bar{F}}$ be an element in $\calN_F$ such that $(a)$ holds for $\bar{F}$ and let $\alpha=\gamma g_{\bar{F}}$. Set
   $$
      Z_{\alpha}=\Zero\left(\left\{P(\alpha^{-1} X)\,\,|\,\, P(X)\in I(\calN_F^\circ)\right\} \right)\bigcap V_{inv}^n.
   $$
   Then $\stab(Z_{\alpha})=\calN^\circ$ and $Z_{\alpha}=\bar{F}\calN_F^\circ$, which is a Zariski closed subset of $F\calN_F$.
  \end{itemize}
\end{prop}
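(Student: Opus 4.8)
Below is a plan for proving Proposition~\ref{prop-Hcirc}.

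\medskip

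The plan is to transport everything to the linear algebraic group $\calN_F$ and its identity component $\calN_F^\circ$, using that $\calN_F(C)$ is Zariski dense in $\calN_F$ (as $C$ is algebraically closed of characteristic zero) and that the component group $\pi:=\calN_F/\calN_F^\circ$ is finite. First note that $\gamma$ really exists: $F=F\cdot\id\in F\calN_F$, so $\Phi$ is proper, and since $F\calN_F\subseteq\GL_n$ and $\bar{k}$ is algebraically closed, $\Zero(\Phi)\cap\GL_n(\bar{k})\neq\emptyset$. Also, because $F\calN_F$ is $\bar{k}$-definable, its defining polynomials have coefficients fixed by $\gal(\bar{k}K/\bar{k})=\calG^\circ$, so $\calG^\circ$ stabilizes $F\calN_F$; by Lemma~\ref{lem-stabilizer} this says $\calG^\circ\subseteq\calN$, i.e. $\calG_F^\circ\subseteq\calN_F$, a fact used below.

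For $(a)$, the crux is to show $\beta:=F^{-1}\gamma\in\calN_F(\bar{k}K)$. Consider the vanishing ideal $\tilde\Phi\subseteq\bar{k}K[x_{1,1},\dots,x_{n,n}]$ of $F\calN_F$ over $\bar{k}K$. On one hand, a polynomial $Q(X)$ vanishes on $F\calN_F$ iff $Q(FX)$ vanishes on the Zariski-dense subset $\calN_F(C)$ of $\calN_F$, i.e. iff $Q(FX)$ lies in the extension of $I(\calN_F)$ to $\bar{k}K[X]$; thus $\tilde\Phi=\langle\{P(F^{-1}X)\mid P\in I(\calN_F)\}\rangle$. On the other hand, repeating the descent argument of Lemma~\ref{lem-definable} with $\bar{k}$ (and $\calG^\circ=\gal(\bar{k}K/\bar{k})$) in place of $k$ (and $\calG$) --- legitimate because $\calG^\circ\subseteq\stab(F\calN_F)$ --- shows $\Phi$ generates $\tilde\Phi$. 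Hence each $P(F^{-1}X)$ lies in the ideal generated by $\Phi$, so it vanishes at $\gamma\in\Zero(\Phi)$; that is, $P(\beta)=0$ for all $P\in I(\calN_F)$, i.e. $\beta\in\calN_F(\bar{k}K)$. Now write $\bar{F}=Fh_0$ with $h_0\in\calN_F$, so $\gamma^{-1}\bar{F}=\beta^{-1}h_0\in\calN_F(\bar{k}K)$; since the quotient map $\calN_F(C)\to\pi$ is surjective, pick $g_{\bar{F}}\in\calN_F$ in the same $\pi$-coset as $\beta^{-1}h_0$. Then $g_{\bar{F}}^{-1}\beta^{-1}h_0=(\gamma g_{\bar{F}})^{-1}\bar{F}$ lies in the kernel of $\calN_F(\bar{k}K)\to\pi$, which is $\calN_F^\circ(\bar{k}K)$.

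For $(b)$, put $\rho:=\alpha^{-1}\bar{F}=(\gamma g_{\bar{F}})^{-1}\bar{F}\in\calN_F^\circ(\bar{k}K)$ by $(a)$. I would first identify $Z_\alpha$ with $\bar{F}\calN_F^\circ$: a point $Y=Fg$ of $V_{inv}^n$ ($g\in\GL_n(C)$) lies in $Z_\alpha$ iff $\alpha^{-1}Y\in\calN_F^\circ(\bar{k}K)$; left-multiplying by $\rho^{-1}\in\calN_F^\circ(\bar{k}K)$, this is equivalent to $\rho^{-1}\alpha^{-1}Y=\bar{F}^{-1}Fg=h_0^{-1}g\in\calN_F^\circ(\bar{k}K)$, and since $h_0^{-1}g\in\GL_n(C)$ while $\calN_F^\circ$ is defined over $C$, equivalently $h_0^{-1}g\in\calN_F^\circ(C)$, i.e. $Fg\in Fh_0\calN_F^\circ=\bar{F}\calN_F^\circ$. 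Because $h_0\calN_F^\circ$ is a connected component of $\calN_F$, it is Zariski closed in $\calN_F$, so $\bar{F}\calN_F^\circ=F(h_0\calN_F^\circ)$ is a Zariski closed subset of $F\calN_F$. Finally, every $\sigma\in\GL(V)$ sends $Fh$ to $F[\sigma]h$, so $\sigma(Z_\alpha)=F[\sigma]h_0\calN_F^\circ$, which equals $Z_\alpha=Fh_0\calN_F^\circ$ iff $h_0^{-1}[\sigma]h_0\in\calN_F^\circ$; as $\calN_F^\circ$ is normal in $\calN_F$ and $h_0\in\calN_F$, this holds iff $[\sigma]\in\calN_F^\circ$, i.e. iff $\sigma\in\calN^\circ$. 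Hence $\stab(Z_\alpha)=\calN^\circ$.

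The step I expect to be the real obstacle is the ideal-theoretic claim in $(a)$ that $\beta=F^{-1}\gamma\in\calN_F(\bar{k}K)$: $F\calN_F$ is literally only the set $\{Fh\mid h\in\calN_F(C)\}$ of matrices over $K$, so one must argue carefully that its vanishing ideal over $\bar{k}K$ is generated by $\Phi$ --- this is precisely the descent of Lemma~\ref{lem-definable} and is where $\bar{k}$-definability (equivalently $\calG^\circ\subseteq\stab(F\calN_F)$) enters. Everything after $\beta\in\calN_F(\bar{k}K)$ is established --- surjectivity of $\calN_F(C)\to\pi$, the identification $\calN_F^\circ=\ker(\calN_F\to\pi)$ on points, normality of $\calN_F^\circ$ in $\calN_F$, and the fact that a $C$-rational matrix lies in $\calN_F^\circ(\bar{k}K)$ iff it lies in $\calN_F^\circ(C)$ --- is standard linear-algebraic-group theory over an algebraically closed field, and the remaining computations are routine.
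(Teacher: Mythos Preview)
Your proof is correct and follows essentially the same route as the paper's. Where the paper simply asserts ``one can easily verify that $\Zero(\Phi)\cap\GL_n(\bar{k}K)=F\calN_F(\bar{k}K)$'' and ``since $\calN_F$ is $C$-definable, there is $g_{\bar F}\in\calN_F$ such that \ldots'', you unpack both steps---the first via the descent argument of Lemma~\ref{lem-definable} adapted to $\bar{k}$ and $\calG^\circ$, the second via surjectivity of $\calN_F(C)\to\calN_F/\calN_F^\circ$---and your direct computation of $\stab(Z_\alpha)$ is equivalent to the paper's use of Lemma~\ref{lem-stabilizer} plus normality of $\calN_F^\circ$ in $\calN_F$.
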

\begin{proof}
  $(a)$.  One can easily verify that $\Zero(\Phi)\cap\GL_n(\bar{k}K)=F\calN_F(\bar{k}K)$. Then we have that $\bar{F}\calN_F(\bar{k}K)=F\calN_F(\bar{k}K)$ because $\bar{F}\in F\calN_F$. Therefore $\gamma$ is an element of $ \bar{F}\calN_F(\bar{k}K)$. Equivalently, $\gamma^{-1}\bar{F}$ is in $\calN_F(\bar{k}K)$. Since $\calN_F$ is $C$-definable, there is $g_{\bar{F}}\in \calN_F$ such that
 $$
   g_{\bar{F}}^{-1}\gamma^{-1}\bar{F}=\left(\gamma g_{\bar{F}}\right)^{-1} \bar{F}\in \calN_F^\circ(\bar{k}K).
 $$

 $(b)$. Note that $V_{inv}^n=\{\bar{F}h \,\,|\,\, h\in \GL_n(C)\}$. From the definition of $Z_\alpha$, for any $\bar{F}h\in Z_{\alpha}$, we have that $\alpha^{-1} \bar{F} h$ belongs to $\calN_F^\circ(\bar{k}K)$. It means that $h\in \calN_F^\circ$, because $\alpha^{-1} \bar{F} \in \calN_F^\circ(\bar{k}K)$ and $h\in \GL_n(C)$. Therefore
 $Z_{\alpha}\subseteq \bar{F}\calN_F^\circ$. It is obvious that $\bar{F}\calN_F^\circ$ is a subset of $Z_{\alpha}$. Hence
 $$Z_{\alpha}=\bar{F}\calN_F^\circ\subseteq \bar{F}\calN_F=F\calN_F$$ that is a Zariski closed subset of $F\calN_F$.

 Finally, we show that $\stab(Z_{\alpha})=\calN^\circ$. Denote $\stab(Z_\alpha)$ by $\calW$. As $Z_\alpha=\bar{F}\calN_F^\circ$, Lemma \ref{lem-stabilizer} implies that $\calW_{\bar{F}}=\calN_F^\circ$. From the assumption, $\bar{F}=F\bar{h}$ for some $\bar{h}\in \calN_F$. Hence $\calW_F=\bar{h}\calW_{\bar{F}}\bar{h}^{-1}=\bar{h}\calN_F^\circ \bar{h}^{-1}$. Owing to the normality of $\calN_F^\circ$ in $\calN_F$, we have that $\calW_F=\calN_F^\circ$.  Therefore $\calW=\calN^\circ$.
\end{proof}

\begin{remark}
 Propositions \ref{prop-pregaloisgroup} and \ref{prop-Hcirc} enable us to compute a Zariski closed subset $Z_{\calH^\circ}$ of $Z_{F,\tilde{d}}$ such that $\stab(Z_{\calH^\circ})=\calH^\circ$. It suffices to compute an element $\alpha$ in $\Zero(I_{F,\tilde{d}})\cap \GL_n(\bar{k})$ and $\bar{F}\in F\calH_F$ satisfying that
     $\alpha^{-1}\bar{F}\in \calH_F^\circ(\bar{k}K)$.
\end{remark}
Let $\alpha$ be an element of $\Zero(I_{F,\tilde{d}})\cap \GL_n(\bar{k})$ and $\bar{F}$ in $F\calH_F$ satisfying that $\alpha^{-1}\bar{F}\in \calH_F^\circ(\bar{k}K)$. From the above proposition, we know that such $\alpha$ and $\bar{F}$ exist.
\begin{prop}
  \label{prop-toric}
  Let $\chi:\calH_F^\circ \rightarrow C^*$ be a character of $\calH_F^\circ$, which is represented by a polynomial in $C[x_{1,1},x_{1,2},\cdots, x_{n,n}, 1/\det(X)]$.
   Then $\chi(\alpha^{-1}\bar{F})$ is a hyperexponential element over $\bar{k}$. Moreover for any $h\in \calH_F^\circ$, $$\chi(\alpha^{-1}\bar{F}h)=\chi(\alpha^{-1}\bar{F})\chi(h).$$
  \end{prop}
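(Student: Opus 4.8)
The plan is to analyze the action of the differential Galois group $\calG^\circ=\gal(\bar{k}K/\bar{k})$ on the single element $u:=\chi(\alpha^{-1}\bar{F})$. Note first that $\alpha^{-1}\bar{F}$ is an $\bar{k}K$-point of the algebraic group $\calH_F^\circ$, so $\chi$ sends it to a unit of $\bar{k}K$; in particular $u\neq 0$ (here one also uses that $\det\alpha\in\bar{k}^{*}$ and $\det\bar{F}\in K^{*}$, so the factor $1/\det(X)$ occurring in $\chi$ may legitimately be evaluated at $\alpha^{-1}\bar{F}$). Thus $\delta(u)/u$ is a well-defined element of $\bar{k}K$, and proving that $u$ is hyperexponential over $\bar{k}$ amounts to showing $\delta(u)/u\in\bar{k}$. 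The ``moreover'' clause is purely formal and I would dispose of it first: a character $\chi$ of $\calH_F^\circ$ is a morphism of algebraic groups $\calH_F^\circ\to C^{*}$, hence induces a group homomorphism $\calH_F^\circ(R)\to R^{*}$ for every $C$-algebra $R$; applying this with $R=\bar{k}K$ to the product of the $\bar{k}K$-point $\alpha^{-1}\bar{F}$ and the $C$-point $h\in\calH_F^\circ$ gives $\chi(\alpha^{-1}\bar{F}h)=\chi(\alpha^{-1}\bar{F})\chi(h)$. This same multiplicativity drives the first part.

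Now fix $\sigma\in\calG^\circ=\gal(\bar{k}K/\bar{k})$. Since $\bar{F}\in F\calH_F$, write $\bar{F}=Fh_0$ with $h_0\in\calH_F$; then $\bar{F}$ is itself a fundamental matrix of (\ref{lodes}), so $\sigma(\bar{F})=\bar{F}\,\phi_{\bar{F}}(\sigma)$ with $\phi_{\bar{F}}(\sigma)\in\GL_n(C)$, while $\sigma$ fixes $\alpha$ because its entries lie in $\bar{k}$. Hence $\sigma(\alpha^{-1}\bar{F})=\alpha^{-1}\bar{F}\,\phi_{\bar{F}}(\sigma)$. The one step needing care is to check that $\phi_{\bar{F}}(\sigma)$ lies in $\calH_F^\circ$, so that $\chi$ may be applied to it: indeed $\phi_{\bar{F}}(\sigma)\in(\calG^\circ)_{\bar{F}}$, and since $\calG^\circ\le\calH$ by $(*)$ and $\calG^\circ$ is connected we have $\calG^\circ\le\calH^\circ$, so $(\calG^\circ)_{\bar{F}}\subseteq(\calH^\circ)_{\bar{F}}=h_0^{-1}\calH_F^\circ h_0=\calH_F^\circ$, the last equality because $\calH_F^\circ$ is normal in $\calH_F$ and $h_0\in\calH_F$. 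Since $\chi$ is represented by a polynomial with coefficients in $C\subseteq\bar{k}$ and $\sigma$ fixes $C$ while commuting with polynomial substitution, the displayed multiplicativity yields
\[
\sigma(u)=\chi\bigl(\sigma(\alpha^{-1}\bar{F})\bigr)=\chi\bigl(\alpha^{-1}\bar{F}\,\phi_{\bar{F}}(\sigma)\bigr)=u\,\chi\bigl(\phi_{\bar{F}}(\sigma)\bigr),
\]
so $\sigma(u)/u=\chi(\phi_{\bar{F}}(\sigma))\in C^{*}$ for every $\sigma$.

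To conclude, set $c_\sigma:=\chi(\phi_{\bar{F}}(\sigma))\in C^{*}$; using that $\sigma$ commutes with $\delta$ and $\delta(c_\sigma)=0$,
\[
\sigma\!\left(\frac{\delta(u)}{u}\right)=\frac{\delta(\sigma u)}{\sigma u}=\frac{\delta(c_\sigma u)}{c_\sigma u}=\frac{\delta(u)}{u},
\]
so $\delta(u)/u$ is fixed by all of $\gal(\bar{k}K/\bar{k})$. By the Galois correspondence for the Picard-Vessiot extension $\bar{k}K/\bar{k}$ (the field fixed by $\gal(\bar{k}K/\bar{k})$ is exactly $\bar{k}$), this forces $\delta(u)/u\in\bar{k}$, i.e. $u=\chi(\alpha^{-1}\bar{F})$ is hyperexponential over $\bar{k}$. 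I expect the only genuinely delicate point to be the conjugation bookkeeping between the two fundamental matrices $F$ and $\bar{F}$ that places $\phi_{\bar{F}}(\sigma)$ back inside $\calH_F^\circ$; the rest is a routine descent argument using standard Picard-Vessiot theory.
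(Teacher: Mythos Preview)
Your proof is correct and follows essentially the same route as the paper: establish the multiplicativity $\chi(\alpha^{-1}\bar F h)=\chi(\alpha^{-1}\bar F)\chi(h)$, show that each $\sigma\in\calG^\circ$ scales $u=\chi(\alpha^{-1}\bar F)$ by a constant in $C^*$, and conclude that $\delta(u)/u$ is $\calG^\circ$-fixed, hence in $\bar k$. The only cosmetic difference is in verifying $[\sigma]\in\calH_F^\circ$: the paper observes directly that $\calH_F^\circ(\bar kK)$ is $C$-definable and therefore stable under $\sigma$, while you instead track the conjugation $(\calH^\circ)_{\bar F}=h_0^{-1}\calH_F^\circ h_0=\calH_F^\circ$; both arguments are valid and yield the same conclusion.
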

  \begin{proof}
       Since $\chi$ is a character of $\calH_F^\circ$, for any $h_1,h_2\in \calH_F^\circ(\bar{k}K)$, $\chi(h_1h_2)=\chi(h_1)\chi(h_2)$. Note that $\alpha^{-1}\bar{F}\in \calH_F^\circ(\bar{k}K)$. For any $h\in \calH_F^\circ$,
     $$\chi(\alpha^{-1}\bar{F}h)=\chi(\alpha^{-1}\bar{F})\chi(h)=\chi(\alpha^{-1}\bar{F})\chi(h).$$
     Suppose that $\sigma\in\calG^\circ$. Then $\sigma(\alpha^{-1}\bar{F})=\alpha^{-1}\bar{F}[\sigma]$ for some $[\sigma]\in \GL_n(C)$. As $\alpha^{-1}\bar{F}$ belongs to $\calH_F^\circ(\bar{k}K)$ that is $C$-definable,
     we have that $\sigma(\alpha^{-1}\bar{F})\in \calH_F^\circ(\bar{k}K)$. It follows that $[\sigma]\in \calH_F^\circ$.
     Hence for any $\sigma\in \calG^\circ$,
     $$
        \sigma\left(\frac{\chi(\alpha^{-1}\bar{F})'}{\chi(\alpha^{-1}\bar{F})}\right)=
        \frac{\chi(\alpha^{-1}\bar{F}[\sigma])'}{\chi(\alpha^{-1}\bar{F}[\sigma])}=\frac{\chi(\alpha^{-1}\bar{F})'\chi([\sigma])}{\chi(\alpha^{-1}\bar{F})\chi([\sigma])}=\frac{\chi(\alpha^{-1}\bar{F})'}{\chi(\alpha^{-1}\bar{F})}.
     $$
     Thus $\frac{\chi(\alpha^{-1}\bar{F})'}{\chi(\alpha^{-1}\bar{F})} \in \bar{k}$.
  \end{proof}

  Suppose that $\chi_1,\cdots, \chi_l$ are the generators of $X(\calH_F^\circ)$, all of which are nontrivial and represented by polynomials in $C[x_{1,1},x_{1,2},\cdots, x_{n,n}]$.
  Then each character $\chi_i$ corresponds to a hyperexponential element $\chi_i(\alpha^{-1}\bar{F})$, denoted by $h_i$.
  Let $v_i=h_i'/h_i$ for all $i$ with $1\leq i \leq l$, and $E=\bar{k}(h_1,\cdots,h_l)$ that is the Picard-Vessiot extension of $\bar{k}$ for the equations
  $$\delta(Y)=\diag(v_1,\cdots, v_l)Y.$$
  Note that $E$ is a subfield of $\bar{k}K$.
   Let $\bfh=(h_1,\cdots,h_l)$. Then $\bfh$ is a fundamental matrix of the above equations and $\gal(E/\bar{k})$ can naturally be embedded into $(C^*)^l$. Denote the image of $\gal(E/\bar{k})$ by $T$ under this embedding. That is to say,
   $$
     T=\left\{\left.(c_1,\cdots, c_l)^T \in (C^*)^l\,\,\right | \,\,\exists \,\,\sigma\in \gal(E/\bar{k}) \,\,s.t.\,\, \sigma(h_i)=c_ih_i, i=1,\cdots, l\right\}.
   $$
  In \cite{compoint-singer}, the authors show that when $C$ is an algebraically closed computable field, given $v_1,\cdots, v_l$, one can compute
  a set of elements $S=\{h_{\eta_1},\cdots,h_{\eta_r}\}\subseteq \{h_1,\cdots, h_l\}$  such that
  \begin{itemize}
     \item [$(i)$]$h_{\eta_1},\cdots, h_{\eta_r}$ are algebraically independent over $C$;
     \item [$(ii)$]
       for each $j\in \{1,\cdots, l\}$, there are an element $f_j\in \bar{k}$ and integers $m_j, m_{i,j}, m_j\neq 0$ satisfying
      $$h_j^{m_j}=f_j\prod_{i=1}^r h_{\eta_i}^{m_{i,j}}$$
        if $S$ is nonempty, or $h_j^{m_j}=f_j$ if $S$ is empty.
  \end{itemize}
  The equalities in $(ii)$ generate almost all algebraic relations among $h_1,\cdots, h_l$.
  Due to the proof of Proposition 2.5 in \cite{compoint-singer}, the set $\{y_j^{m_j}-\prod_{i=1}^r y_{\eta_i}^{m_{i,j}}, j=1,2,\cdots, l\}$ defines an algebraic subgroup of $(C^*)^l$, whose identity component is equal to $T$.
   Let $\varphi=(\chi_1,\cdots, \chi_l)$. Then $\varphi$ is a surjective morphism from $\calH_F^\circ$ to $(C^*)^l$, for all $\chi_i$ are nontrivial. As $\bar{F}\in F\calH_F$, it follows from the normality of $\calH_F^\circ$ in $\calH_F$  that $\calH_{\bar{F}}^\circ=\calH_F^\circ$. Thus $\calG_{\bar{F}}^\circ\subseteq \calH_{\bar{F}}^\circ=\calH_F^\circ$. Moreover, we have
  \begin{lemma}
  \label{lem-toricpart}
    $ \varphi(\calG_{\bar{F}}^\circ)=T.$
  \end{lemma}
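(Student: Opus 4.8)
The plan is to identify both $\varphi(\calG_{\bar F}^\circ)$ and $T$ with the Galois group of the same Picard--Vessiot extension, namely $E/\bar k$, under compatible identifications. First I would recall the setup: $h_i=\chi_i(\alpha^{-1}\bar F)$ and, by Proposition~\ref{prop-toric}, each $h_i$ is hyperexponential over $\bar k$ with $v_i=h_i'/h_i\in\bar k$, so $E=\bar k(h_1,\dots,h_l)\subseteq\bar kK$ is the Picard--Vessiot extension of $\bar k$ for $\delta(Y)=\diag(v_1,\dots,v_l)Y$, with $\bfh=(h_1,\dots,h_l)$ as fundamental matrix. The restriction map $\calG^\circ=\gal(\bar kK/\bar k)\to\gal(E/\bar k)$ is surjective since $E$ is left invariant by $\calG^\circ$ (each $h_i\in\bar kK$ and $\sigma(h_i)/h_i\in C$ by the character property). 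Composing with the embedding $\gal(E/\bar k)\hookrightarrow(C^*)^l$ that sends $\sigma$ to $(c_1,\dots,c_l)$ with $\sigma(h_i)=c_ih_i$, we get exactly the map $\sigma\mapsto(\chi_1([\sigma]),\dots,\chi_l([\sigma]))=\varphi([\sigma])$, using the relation $\sigma(\alpha^{-1}\bar F)=\alpha^{-1}\bar F[\sigma]$ from the proof of Proposition~\ref{prop-toric} together with the multiplicativity $\chi_i(\alpha^{-1}\bar F[\sigma])=\chi_i(\alpha^{-1}\bar F)\chi_i([\sigma])$. Hence the image of $\calG^\circ$ under $\varphi\circ\phi_{\bar F}$ equals the image of $\gal(E/\bar k)$ in $(C^*)^l$, which by definition is $T$.

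The main steps, in order, are: (1) verify $E$ is $\calG^\circ$-stable and that the restriction $\calG^\circ\to\gal(E/\bar k)$ is surjective (standard Picard--Vessiot theory, since $\bar kK\supseteq E$ are both Picard--Vessiot over $\bar k$ and $E$ corresponds to a normal subgroup); (2) check that for $\sigma\in\calG^\circ$ the element $[\sigma]=\phi_{\bar F}(\sigma)$ lies in $\calH_F^\circ$, which is already done inside the proof of Proposition~\ref{prop-toric}, so $\varphi([\sigma])$ is defined; (3) show the square relating $\sigma\mapsto\varphi([\sigma])$ and $\sigma\mapsto(c_1,\dots,c_l)\in(C^*)^l$ commutes, i.e. $c_i=\chi_i([\sigma])$, which is immediate from $h_i=\chi_i(\alpha^{-1}\bar F)$ and the cocycle computation above; (4) conclude $\varphi(\calG_{\bar F}^\circ)=T$ by tracing through these identifications. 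I would also note $\calG_{\bar F}^\circ=\calG_F^\circ$ as subsets of $\calH_F^\circ$, already observed in the text via normality of $\calH_F^\circ$, so there is no discrepancy between working with $\bar F$ and $F$.

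The step I expect to require the most care is (1), specifically the surjectivity of $\calG^\circ\twoheadrightarrow\gal(E/\bar k)$ and the well-definedness of the embedding $\gal(E/\bar k)\hookrightarrow(C^*)^l$. One must check $E$ is genuinely a Picard--Vessiot extension of $\bar k$ (no new constants, which holds because $\bar kK$ has constants $C$ and $E\subseteq\bar kK$), and that $\gal(E/\bar k)$ is precisely a subgroup of the diagonal torus $(C^*)^l$ acting by $\sigma(h_i)=c_ih_i$ with the $c_i$ possibly constrained by the algebraic relations among the $h_i$ over $\bar k$ — but those constraints are exactly what cuts $(C^*)^l$ down to (the identity component $T$ of) the relevant subgroup, matching the description via the relations in item~$(ii)$ preceding the lemma. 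Everything else is formal diagram-chasing once these identifications are pinned down.
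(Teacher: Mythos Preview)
Your proposal is correct and follows essentially the same approach as the paper: both arguments use the surjectivity of the restriction $\calG^\circ=\gal(\bar kK/\bar k)\to\gal(E/\bar k)$ and the computation $\sigma(h_i)=\chi_i(\alpha^{-1}\bar F[\sigma])=\chi_i(\alpha^{-1}\bar F)\chi_i([\sigma])=\chi_i([\sigma])h_i$ to identify $\varphi(\calG_{\bar F}^\circ)$ with the image $T$ of $\gal(E/\bar k)$ in $(C^*)^l$. The paper presents this as two explicit containments while you package it as a commuting square, but the content is the same.
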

  \begin{proof}
      Note that $\calG^\circ=\gal(\bar{k}K/\bar{k})$ and $E\subseteq \bar{k}K$ that is a Picard-Vessiot extension field. By the Galois theory, the map $\psi:\calG^\circ \rightarrow \gal(E/\bar{k})$ given by
       $\psi(\sigma)=\sigma |_E$ for any $\sigma \in \calG^\circ$ is surjective.
       For any $[\sigma]\in \calG_{\bar{F}}^\circ$, there is $\sigma\in \calG^\circ$ such that $\sigma(\bar{F})=\bar{F}[\sigma]$ and then
       \begin{align*}
         \psi(\sigma)(\bfh) &=\sigma(\bfh)=(\sigma(h_1),\cdots,\sigma(h_l))=(\chi_1(\alpha^{-1}\bar{F}[\sigma]),\cdots,\chi_l(\alpha^{-1}\bar{F}[\sigma]))\\
         &=(\chi_1(\alpha^{-1}\bar{F})\chi_1([\sigma]),\cdots,\chi_l(\alpha^{-1}\bar{F})\chi_l([\sigma]))=(\chi_1([\sigma])h_1,\cdots, \chi_l([\sigma])h_l).
       \end{align*}
       By the definition of $T$, we have that $\varphi([\sigma])=(\chi_1([\sigma]),\cdots, \chi_l([\sigma]))\in T$. Now assume that $(c_1,\cdots, c_l)^T\in T$. Then there is $\sigma\in \gal(E/\bar{k})$ such that
        $\sigma(h_j)=c_jh_j$ for all $j$ with $1\leq j\leq l$. Due to the surjectivity of $\psi$, there is $\hat{\sigma}\in \calG^\circ$ such that $\psi(\hat{\sigma})=\sigma$. Assume that $\hat{\sigma}(\bar{F})=\bar{F}[\hat{\sigma}]$ for some $[\hat{\sigma}]\in \calG_{\bar{F}}^\circ$. It follows that
       $$
          c_j=\frac{\sigma(h_j)}{h_j}=\frac{\hat{\sigma}(h_j)}{h_j}=\frac{\chi_j(\alpha^{-1}\bar{F}[\hat{\sigma}])}{\chi_j(\alpha^{-1}\bar{F})}
          =\frac{\chi_j(\alpha^{-1}\bar{F})\chi_j([\hat{\sigma}])}{\chi_j(\alpha^{-1}\bar{F})}=\chi_j([\hat{\sigma}]), \,\,j=1,\cdots,l.
       $$
       In other words, $(c_1,\cdots, c_l)^T$ is the image of $[\hat{\sigma}]$ under the morphism $\varphi$. So $\varphi(\calG_{\bar{F}}^\circ)=T$.
  \end{proof}
  Set
   $$
     J=\{P(\alpha^{-1}X)\,\,|\,\,P(X)\in I(\calH_F^\circ)\}\bigcup\left\{\chi_j(\alpha^{-1}X)^{m_j}-f_j\prod_{i=1}^r \chi_{\eta_i}(\alpha^{-1}X)^{m_{i,j}}, j=1,\cdots,l\right\}.
   $$
   Let $Z_J=\zero(J)\cap V^n_{inv}$ and $\bar{\calH}=\stab(Z_J)$.
   \begin{prop}
   \label{prop-identitycomp}
      $\bar{\calH}^\circ=\calG^\circ$.
   \end{prop}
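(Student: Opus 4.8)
The plan is to make $Z_J$ explicit, read off $\bar{\calH}_{\bar F}$ from Lemma~\ref{lem-stabilizer}, and then determine its identity component using Lemma~\ref{lem-toricpart} together with the proto-Galois containment in Proposition~\ref{prop-pregaloisgroup}. Write $\varphi=(\chi_1,\cdots,\chi_l):\calH_F^\circ\rightarrow(C^*)^l$ and let $G_T$ be the algebraic subgroup of $(C^*)^l$ cut out by the binomials $\{\,y_j^{m_j}-\prod_{i=1}^r y_{\eta_i}^{m_{i,j}}\,\}_{j=1}^l$, so that $G_T^\circ=T$ by \cite{compoint-singer}.

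First I would describe $Z_J$. Split $J=J_1\cup J_2$ with $J_1=\{P(\alpha^{-1}X)\mid P\in I(\calH_F^\circ)\}$ and $J_2$ the set of $l$ binomials. Proposition~\ref{prop-Hcirc}$(b)$, applied with $\calN=\calH$, gives $\zero(J_1)\cap V^n_{inv}=\bar F\calH_F^\circ$. Take a point $\bar Fh$ with $h\in\calH_F^\circ$. Since $\alpha^{-1}\bar F\in\calH_F^\circ(\bar{k}K)$ and $\chi_j$ is multiplicative on $\calH_F^\circ(\bar{k}K)$ (Proposition~\ref{prop-toric}), $\chi_j(\alpha^{-1}\bar Fh)=h_j\chi_j(h)$; inserting the Compoint--Singer relation $h_j^{m_j}=f_j\prod_i h_{\eta_i}^{m_{i,j}}$, the $j$-th element of $J_2$ evaluates at $\bar Fh$ to
$$f_j\Bigl(\textstyle\prod_i h_{\eta_i}^{m_{i,j}}\Bigr)\Bigl(\chi_j(h)^{m_j}-\textstyle\prod_i\chi_{\eta_i}(h)^{m_{i,j}}\Bigr).$$
Because $f_j\neq 0$ and the hyperexponential elements $h_{\eta_i}$ are nonzero, the prefactor is a nonzero element of the field $\bar{k}K$, so this vanishes iff $\chi_j(h)^{m_j}=\prod_i\chi_{\eta_i}(h)^{m_{i,j}}$; letting $j$ range, this is exactly $\varphi(h)\in G_T$. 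Hence $Z_J=\bar F\,W$ where $W:=\varphi^{-1}(G_T)$ is a Zariski closed subgroup of $\calH_F^\circ$, and Lemma~\ref{lem-stabilizer} (with the fundamental matrix $\bar F$) gives $\bar{\calH}_{\bar F}=W$.

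Next I would identify $W^\circ$. The image $\varphi(W^\circ)$ is connected and contains the identity, hence lies in $G_T^\circ=T$, so $W^\circ\subseteq\varphi^{-1}(T)\subseteq W$. The decisive input is that $\ker\varphi=(\calH_F^\circ)^t$ (the $\chi_i$ generate $X(\calH_F^\circ)$) and that $(\calH_F^\circ)^t=(\calH_{\bar F}^\circ)^t\subseteq\calG_{\bar F}^\circ$ --- the equality because $\calH_{\bar F}^\circ=\calH_F^\circ$, the inclusion because Proposition~\ref{prop-pregaloisgroup} says $\calH$ is a proto-Galois group, read through $\phi_{\bar F}$. Combining $\ker\varphi\subseteq\calG_{\bar F}^\circ$ with $\varphi(\calG_{\bar F}^\circ)=T$ (Lemma~\ref{lem-toricpart}),
$$\varphi^{-1}(T)=\varphi^{-1}\bigl(\varphi(\calG_{\bar F}^\circ)\bigr)=\calG_{\bar F}^\circ\cdot\ker\varphi=\calG_{\bar F}^\circ,$$
so $\varphi^{-1}(T)$ is connected; being sandwiched between $W^\circ$ and $W$, it equals $W^\circ$. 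Thus $\bar{\calH}_{\bar F}^\circ=W^\circ=\calG_{\bar F}^\circ$, and since the identity component of a subgroup of $\GL(V)$ is independent of the chosen fundamental matrix ($\phi_F$ and $\phi_{\bar F}$ differ by the inner automorphism $g\mapsto\bar h^{-1}g\bar h$, where $\bar F=F\bar h$), we conclude $\bar{\calH}^\circ=\calG^\circ$.

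The only genuine computation is in the first step: checking that on $\bar F\calH_F^\circ$ the binomial generators of $J_2$ collapse, after the Compoint--Singer substitution and the cancellation of the nonzero factor $f_j\prod_i h_{\eta_i}^{m_{i,j}}$, to precisely the pullback condition $\varphi(h)\in G_T$. The rest is group theory, whose heart is the pair $\ker\varphi=(\calH_F^\circ)^t$ and $(\calH_F^\circ)^t\subseteq\calG_{\bar F}^\circ$: this is where the proto-Galois bound enters, and it is exactly what makes $\varphi^{-1}(T)$ connected and equal to $\calG_{\bar F}^\circ$, so that nothing is lost in passing from the torus $T$ back up to $\calH_F^\circ$.
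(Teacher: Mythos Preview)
Your proof is correct and follows essentially the same route as the paper: identify $Z_J=\bar F\,W$ with $W=\varphi^{-1}(G_T)\subseteq\calH_F^\circ$, apply Lemma~\ref{lem-stabilizer} to get $\bar\calH_{\bar F}=W$, and then use $\ker\varphi=(\calH_F^\circ)^t\subseteq\calG_{\bar F}^\circ$ together with Lemma~\ref{lem-toricpart} to pin down $W^\circ$. The only difference is cosmetic: the paper concludes by computing $[W:\calG_{\bar F}^\circ]=[\varphi(W):T]<\infty$ and invoking connectedness of $\calG_{\bar F}^\circ$, whereas you compute $\varphi^{-1}(T)=\calG_{\bar F}^\circ$ directly and sandwich it between $W^\circ$ and $W$; both arguments rest on the same two facts and are interchangeable.
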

   \begin{proof}
   Assume that $Z_J=\bar{F}\bar{H}$ where $\bar{H}\subseteq \GL_n(C)$. Recall that $\alpha^{-1}\bar{F}\in \calH_F^\circ(\bar{k}K)$. It is easy to verify that
   \begin{equation*}
      \bar{H}=\calH_F^\circ \bigcap \zero\left(\left\{\chi_j^{m_j}-\prod_{i=1}^r \chi_{\eta_i}^{m_{i,j}}, \,\,j=1,2,\cdots, l\right\}\right).
   \end{equation*}
   Therefore $\bar{H}$ is a group and then by Lemma \ref{lem-stabilizer}, $\bar{\calH}_{\bar{F}}=\bar{H}$. To prove $\bar{\calH}^\circ=\calG^\circ$, it suffices to show that
   $\bar{H}^\circ=\calG_{\bar{F}}^\circ$.
   Since $\varphi=(\chi_1,\cdots, \chi_l)$ is surjective,
   $$\varphi(\bar{H})=(C^*)^l\bigcap\zero\left(\left\{y_j^{m_j}-\prod_{i=1}^r y_{\eta_i}^{m_{i,j}}, j=1,2,\cdots, l\right\}\right).$$
   The discussion before Lemma \ref{lem-toricpart} indicates that the identity component of $\varphi(\bar{H})$ is equal to $T$.
   Note that $\ker(\varphi)=(\calH^\circ_F)^t=(\calH_{\bar{F}}^\circ)^t$. Then we have $\ker(\varphi)\subseteq \calG_{\bar{F}}^\circ$, since $\calH$ is a proto-Galois group.
   As $Z_J$ is $\bar{k}$-definable, $\calG^\circ \subseteq \bar{\calH}$ and then $\calG_{\bar{F}}^\circ\subseteq \bar{\calH}_{\bar{F}}=\bar{H}$.
   Now we have
   $$[\bar{H}:\calG_{\bar{F}}^\circ]=\left[\bar{H}/\ker(\varphi):\calG_{\bar{F}}^\circ/\ker(\varphi)\right]=[\varphi(\bar{H}): \varphi(\calG_{\bar{F}}^\circ)].$$
   Owing to Lemma \ref{lem-toricpart}, $\varphi(\calG_{\bar{F}}^\circ)$ is equal to $T$ that is the identity component of $\varphi(\bar{H})$. Hence $[\bar{H}:\calG_{\bar{F}}^\circ]$ is finite. Thus  $\bar{H}^\circ=\calG_{\bar{F}}^\circ$, which completes the proof.
   \end{proof}
   Due to Proposition \ref{prop-Hcirc}, there are
   \begin{equation}
   \label{eqn-beta}
       \beta\in \Zero(J)\cap\GL_n(\bar{k})\,\,\mbox{and}\,\, \tilde{F}\in \bar{F}\bar{\calH}_{\bar{F}}
    \end{equation}
   such that $\beta^{-1}\tilde{F}\in \calG_{\bar{F}}^\circ(\bar{k}K)$.
   Let
   $$
      Z_\beta=\Zero\left(\left\{ P(\beta^{-1}X)\,\,|\,\,P(X)\in I(\calG_{\bar{F}}^\circ)\right\}\right)\cap V_{inv}^n.
   $$
   By Proposition \ref{prop-Hcirc} again, we have that $Z_\beta=\tilde{F}\calG_{\bar{F}}^\circ$ that is a Zariski closed subset of $\bar{F}\bar{\calH}_{\bar{F}}$ and $\stab(Z_\beta)=\calG^\circ$. Furthermore, by Lemma \ref{lem-stabilizer}, $\calG_{\tilde{F}}^\circ=\calG_{\bar{F}}^\circ$.

   \subsection{Galois group $\calG$}
   Let $\beta$ and $\tilde{F}$ be as in (\ref{eqn-beta}).
   Let $k_G$ be the Galois closure of $k(\beta^{-1})$, where $k(\beta^{-1})$ denotes the extension field of $k$ by joining the entries of $\beta^{-1}$.
   For any $\tau\in \gal(k_G/k)$, set
   $$
       J_{\tau(\beta)}=\left\langle\left\{P(\tau(\beta)^{-1}X)\,\,|\,\, P(X)\in I(\calG_{\tilde{F}}^\circ)\right\}\right\rangle
   $$
   where $\langle * \rangle$ denotes the ideal in $k_G[x_{1,1},x_{1,2},\cdots, x_{n,n}]$ generated by $*$.
   \begin{prop}
    \label{prop-algebraicrelations}
    Let $I_{\tilde{F}}$ be defined in (\ref{eqn-partialalgrelations}) with $F=\tilde{F}$ and $d=\infty$ . Then
    $$
        I_{\tilde{F}}=\left(\bigcap _{\tau\in \gal(k_G/k)} J_{\tau(\beta)}\right)\bigcap k[x_{1,1},x_{1,2},\cdots, x_{n,n}].
    $$
   \end{prop}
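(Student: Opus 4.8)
The plan is to split the proof into a formal Galois-descent step for the finite extension $k_G/k$ and a geometric step over the constant field $C$. Throughout write $\bfx=(x_{1,1},\cdots,x_{n,n})$. First I would record that $\gal(k_G/k)$ acts on $k_G[\bfx]$ by ring automorphisms fixing the variables, and that since $I(\calG_{\tilde{F}}^\circ)\subseteq C[\bfx]$ with $C\subseteq k$ fixed pointwise, applying $\tau\in\gal(k_G/k)$ to a generator $P(\tau'(\beta)^{-1}\bfx)$ of $J_{\tau'(\beta)}$ merely replaces the entries of $\tau'(\beta)^{-1}$ by those of $(\tau\tau')(\beta)^{-1}$; hence $\tau(J_{\tau'(\beta)})=J_{\tau\tau'(\beta)}$. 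Therefore the ideal $\mathfrak{a}:=\bigcap_{\tau\in\gal(k_G/k)}J_{\tau(\beta)}$ is $\gal(k_G/k)$-stable, so by Galois descent $\mathfrak{a}=(\mathfrak{a}\cap k[\bfx])\,k_G[\bfx]$; and since $I_{\tilde{F}}$ is an ideal of $k[\bfx]$ and $k_G/k$ is faithfully flat, $I_{\tilde{F}}\,k_G[\bfx]\cap k[\bfx]=I_{\tilde{F}}$. Consequently it suffices to establish the single inclusions $I_{\tilde{F}}\subseteq J_\beta$ and $J_\beta\cap k[\bfx]\subseteq I_{\tilde{F}}$: applying each $\tau$ to the first turns it into $I_{\tilde{F}}\,k_G[\bfx]\subseteq J_{\tau(\beta)}$, hence $I_{\tilde{F}}\subseteq\mathfrak{a}\cap k[\bfx]$, while the second gives $\mathfrak{a}\cap k[\bfx]\subseteq J_\beta\cap k[\bfx]\subseteq I_{\tilde{F}}$.

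\emph{The inclusion $J_\beta\cap k[\bfx]\subseteq I_{\tilde{F}}$.} Each generator $P(\beta^{-1}\bfx)$ of $J_\beta$, with $P\in I(\calG_{\tilde{F}}^\circ)$, vanishes at every point $\beta h$ with $h\in\calG_{\tilde{F}}^\circ(\bar{k}K)$, so the whole ideal $J_\beta$ vanishes on $\beta\,\calG_{\tilde{F}}^\circ(\bar{k}K)$. By the choice of $\beta$ we have $\beta^{-1}\tilde{F}\in\calG_{\tilde{F}}^\circ(\bar{k}K)$, that is $\tilde{F}\in\beta\,\calG_{\tilde{F}}^\circ(\bar{k}K)$; hence any $Q\in J_\beta\cap k[\bfx]$ satisfies $Q(\tilde{F})=0$, which is exactly the assertion $Q\in I_{\tilde{F}}$.

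\emph{The inclusion $I_{\tilde{F}}\subseteq J_\beta$ (the main point).} Let $P\in I_{\tilde{F}}$. For $\sigma\in\calG^\circ=\gal(\bar{k}K/\bar{k})$ the coefficients of $P$ lie in $k\subseteq\bar{k}$ and are fixed by $\sigma$, so $0=\sigma(P(\tilde{F}))=P(\tilde{F}[\sigma])$; as $[\sigma]$ runs over $\calG_{\tilde{F}}^\circ$ (which is exactly the image of $\calG^\circ$), this shows that $R(\bfx):=P(\tilde{F}\bfx)\in(\bar{k}K)[\bfx]$ vanishes at every $C$-point of $\calG_{\tilde{F}}^\circ$. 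Here I would invoke the elementary fact that, for a closed $C$-subvariety $W$ of the affine space of $n\times n$ matrices over $C$ and any field extension $L\supseteq C$, a polynomial in $L[\bfx]$ vanishing on $W(C)$ lies in $I(W)\,L[\bfx]$ — because vanishing at a fixed $C$-point is a $C$-linear condition on the coefficient vector, $W(C)$ is Zariski dense in $W$ since $C$ is algebraically closed, and the common solution space of a system of $C$-linear conditions commutes with the scalar extension $C\hookrightarrow L$. Applying this with $L=\bar{k}K$ gives $R\in I(\calG_{\tilde{F}}^\circ)(\bar{k}K)[\bfx]$, so $R$ actually vanishes on all of $\calG_{\tilde{F}}^\circ(\bar{k}K)$; writing $g:=\beta^{-1}\tilde{F}\in\calG_{\tilde{F}}^\circ(\bar{k}K)$ and using $g^{-1}\calG_{\tilde{F}}^\circ(\bar{k}K)=\calG_{\tilde{F}}^\circ(\bar{k}K)$, the polynomial $Q(\bfx):=P(\beta\bfx)\in k_G[\bfx]$ vanishes on $\calG_{\tilde{F}}^\circ(\bar{k}K)\supseteq\calG_{\tilde{F}}^\circ(C)$. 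Applying the same fact with $L=k_G$ yields $Q\in I(\calG_{\tilde{F}}^\circ)\,k_G[\bfx]$. Finally, the $k_G$-algebra automorphism of $k_G[\bfx]$ sending $\bfx\mapsto\beta^{-1}\bfx$ carries $I(\calG_{\tilde{F}}^\circ)\,k_G[\bfx]$ onto $J_\beta$ and sends $Q$ to $P$, whence $P\in J_\beta$.

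\emph{Main obstacle.} The one non-formal ingredient is the geometric lemma above: membership in $J_\beta$ can be detected by evaluating at the ``constant'' points $\calG_{\tilde{F}}^\circ(C)$ rather than at the transcendental matrix $\tilde{F}$ itself, and, equivalently, $J_\beta$ is already the full vanishing ideal over $k_G$ of the coset $\beta\,\calG_{\tilde{F}}^\circ$. This is precisely where one uses that $\calG_{\tilde{F}}^\circ$ is defined over the algebraically closed field $C$ of characteristic zero — so that it is geometrically integral and its $C$-points remain Zariski dense after scalar extension — together with the fact that elements of $\calG^\circ$ fix the coefficients of any $P\in I_{\tilde{F}}$. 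Everything else, namely the transformation rule $\tau(J_{\tau'(\beta)})=J_{\tau\tau'(\beta)}$, the Galois descent over $k_G/k$, and the reduction to the two single inclusions, is routine commutative-algebra bookkeeping.
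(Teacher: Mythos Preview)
Your proof is correct and follows essentially the same route as the paper's: both directions rest on the observation that $\tilde{F}\,\calG_{\tilde{F}}^\circ(\bar{k}K)=\beta\,\calG_{\tilde{F}}^\circ(\bar{k}K)$ together with the fact that a polynomial over an extension field vanishing on the $C$-points of the $C$-defined group $\calG_{\tilde{F}}^\circ$ already lies in the extended ideal, and the passage from $J_\beta$ to all $J_{\tau(\beta)}$ uses that $P$ has coefficients in $k$. Two minor remarks: the Galois-descent identities $\mathfrak{a}=(\mathfrak{a}\cap k[\bfx])k_G[\bfx]$ and $I_{\tilde{F}}k_G[\bfx]\cap k[\bfx]=I_{\tilde{F}}$ you record in the preamble are never actually invoked---your reduction only uses $\tau(J_\beta)=J_{\tau(\beta)}$ and $\mathfrak{a}\subseteq J_\beta$, exactly as the paper does; and your explicit justification of the ``geometric lemma'' is a welcome addition, since the paper simply asserts the corresponding step (``furthermore $P(\tilde{F}X)$ vanishes on $\calG_{\tilde{F}}^\circ(\bar{k}K)$'') without comment.
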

   \begin{proof}
     Denote the ideal in the righthand side of the above equality by $\Phi$. Suppose that $P(X)\in \Phi$. Then $P(X)\in J_\beta$. From the previous subsection, $\tilde{F}\in Z_\beta$. So $P(\tilde{F})=0$. That is to say, $P(X)\in I_{\tilde{F}}$. Thus $\Phi\subseteq I_{\tilde{F}}$. Conversely,
     suppose that $P(X)\in I_{\tilde{F}}$. Then $P(\tilde{F})=0$. Applying $\calG^\circ$ to it, we obtain that $P(\tilde{F}g)=0$ for every $g\in \calG_{\tilde{F}}^\circ$ and furthermore the polynomial $P(\tilde{F}X)$ vanishes on $\calG_{\tilde{F}}^\circ(\bar{k}K)$. Since $\beta^{-1}\tilde{F}\in \calG_{\tilde{F}}^\circ(\bar{k}K)$, $\tilde{F}\calG_{\tilde{F}}^\circ(\bar{k}K)=\beta\calG_{\tilde{F}}^\circ(\bar{k}K)$. It implies that $P(\beta X)$ vanishes on $\calG_{\tilde{F}}^\circ(\bar{k}K)$. Consequently, $P(\beta X)$ belongs to $\langle I(\calG_{\tilde{F}}^\circ)\rangle$ and then $P(X)\in J_\beta$. Because all coefficients of $P(X)$ are in $k$, $P(X)\in J_{\tau(\beta)}$ for all $\tau\in \gal(k_G/k)$. Hence $P(X)\in \Phi$.
   \end{proof}
   If $I_{\tilde{F}}$ is computed, then it is easy to verify that
   $$
      \calG_{\tilde{F}}=\{g\in \GL_n(C)\,\,|\,\, \forall \,\,P(X)\in I_{\tilde{F}}, \,\,P(Xg)\in I_{\tilde{F}}\}.
   $$
   In the following, we present another method to compute $\calG_{\tilde{F}}$ that avoids computing $I_{\tilde{F}}$.
   From Proposition 3.20 and Theorem 3.11 of \cite{magid}, there is a Picard-Vessiot extension field, say $\tilde{K}$, of $k$ that contains $k_G$ and $K$ as subfields. Then Galois theory implies that if $\tau$ is an element of $\gal(k_G/k)$ (or $\calG$), then there is $\rho\in \gal(\tilde{K}/k)$ such that the restriction of $\rho$ on $k_G$ (or $K$) is equal to $\tau$. Let $\{Q_1(X),\cdots, Q_\nu(X)\}$ be a set of generators of $I(\calG_{\bar{F}}^\circ)$. Set
    $$
       \tilde{G}=\bigcup_{\tau\in \gal(k_G/k)}\left\{\,\, g\in \GL_n(C) \left| \,\, Q_i(\tau(\beta)^{-1}\tilde{F}g)=0, \,\,\mbox{for all $i=1,\cdots, \nu$}\right.\right\}.
    $$
    Then we have
    \begin{prop}
    \label{prop-finitepart}
      $\calG_{\tilde{F}}=\tilde{G}$.
    \end{prop}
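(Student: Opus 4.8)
The plan is to show the two inclusions $\tilde G\subseteq \calG_{\tilde F}$ and $\calG_{\tilde F}\subseteq \tilde G$ separately, using in both directions the lifting principle quoted just before the statement: since $k_G$ and $K$ sit inside a common Picard–Vessiot extension $\tilde K$ of $k$, any $\tau\in\gal(k_G/k)$ extends to some $\rho\in\gal(\tilde K/k)$, and likewise any $\sigma\in\calG=\gal(K/k)$ extends to some $\rho\in\gal(\tilde K/k)$. The key identity to exploit throughout is the description of $Z_\beta$ from the previous subsection: $\beta^{-1}\tilde F\in\calG_{\bar F}^\circ(\bar kK)=\calG_{\tilde F}^\circ(\bar kK)$, so that $\tilde F\,\calG_{\tilde F}^\circ(\bar kK)=\beta\,\calG_{\tilde F}^\circ(\bar kK)$, and hence for a generator $Q_i$ of $I(\calG_{\tilde F}^\circ)$ the condition $Q_i(\tau(\beta)^{-1}\tilde Fg)=0$ is exactly the condition that $\tilde Fg$ lands in the translate $\tau(\beta)\,\calG_{\tilde F}^\circ(\bar kK)$ — i.e. in $\rho(\beta\,\calG_{\tilde F}^\circ(\bar kK))=\rho(\tilde F)\,\calG_{\tilde F}^\circ(\bar kK)$ whenever $\rho$ restricts to $\tau$ on $k_G$. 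So membership of $g$ in the $\tau$-piece of $\tilde G$ should translate into: $\tilde F g$ and $\rho(\tilde F)$ lie in the same right $\calG_{\tilde F}^\circ$-coset in $\GL_n(\bar kK)$.

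First I would prove $\calG_{\tilde F}\subseteq\tilde G$. Take $g\in\calG_{\tilde F}$, so there is $\sigma\in\calG$ with $\sigma(\tilde F)=\tilde F g$. Lift $\sigma$ to $\rho\in\gal(\tilde K/k)$, let $\tau=\rho|_{k_G}\in\gal(k_G/k)$. Applying $\rho$ to the relation $\beta^{-1}\tilde F\in\calG_{\tilde F}^\circ(\bar kK)$ (note $\calG_{\tilde F}^\circ$ is $C$-definable, hence its $\bar kK$-points are $\rho$-stable as a set) gives $\rho(\beta)^{-1}\rho(\tilde F)=\tau(\beta)^{-1}\tilde F g\in\calG_{\tilde F}^\circ(\bar kK)$, which says precisely that $Q_i(\tau(\beta)^{-1}\tilde F g)=0$ for all $i$. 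Hence $g$ is in the $\tau$-component of $\tilde G$. Conversely, for $\tilde G\subseteq\calG_{\tilde F}$, take $g$ in the $\tau$-piece, so $\tau(\beta)^{-1}\tilde F g\in\calG_{\tilde F}^\circ(\bar kK)$. Choose $\rho\in\gal(\tilde K/k)$ with $\rho|_{k_G}=\tau$; then $\rho(\beta)^{-1}=\tau(\beta)^{-1}$, and applying $\rho^{-1}$ to $\tau(\beta)^{-1}\tilde F g\in\calG_{\tilde F}^\circ(\bar kK)$ gives $\beta^{-1}\rho^{-1}(\tilde F g)\in\calG_{\tilde F}^\circ(\bar kK)$. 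Comparing with $\beta^{-1}\tilde F\in\calG_{\tilde F}^\circ(\bar kK)$, we get $\tilde F^{-1}\rho^{-1}(\tilde F g)\in\calG_{\tilde F}^\circ(\bar kK)\cap\GL_n(C)$ — and here one must argue the entries actually lie in $C$, not just in $\bar kK$, because $\rho^{-1}(\tilde F g)$ and $\tilde F$ are both fundamental matrices of (\ref{lodes}) up to right multiplication by constant matrices, so their ratio is in $\GL_n(C)$; thus $\rho^{-1}(\tilde F g)=\tilde F h$ for some $h\in\calG_{\tilde F}^\circ$. Then $\rho^{-1}|_K\in\calG$ and $\rho^{-1}(\tilde F g)=\tilde F h$ shows, after rearranging, that $g$ differs from an element of $\calG_{\tilde F}$ by an element of $\calG_{\tilde F}^\circ\subseteq\calG_{\tilde F}$, so $g\in\calG_{\tilde F}$.

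The step I expect to be the main obstacle is the bookkeeping in the reverse inclusion: making sure that the element $\rho$ chosen to extend $\tau$ can simultaneously be tracked on $\tilde F$ (which lives in $K$, not $k_G$), and that the coset computation $\tilde F g\,\calG_{\tilde F}^\circ=\rho(\tilde F)\,\calG_{\tilde F}^\circ$ is used with the correct side of multiplication — right cosets throughout, since $\calG_{\tilde F}^\circ$ need not be normal in $\GL_n(C)$ but is normal in $\calG_{\tilde F}$. One must also be careful that $g$ ranges over $\GL_n(C)$ and that the defining conditions genuinely cut out a subset of $\GL_n(C)$, i.e. that $Q_i(\tau(\beta)^{-1}\tilde F g)$ makes sense and vanishing is equivalent to $\tilde F g$ lying in the relevant $\bar kK$-translate of $\calG_{\tilde F}^\circ$. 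A secondary point worth stating explicitly is that the union over $\tau\in\gal(k_G/k)$ is well-defined independently of the choice of lift $\rho$: two lifts of the same $\tau$ differ by an element of $\gal(\tilde K/k_G)$, which fixes $k_G\ni$ the entries of $\tau(\beta)^{-1}$, so the $\tau$-piece of $\tilde G$ is unambiguous — and this is exactly why passing to the Galois closure $k_G$ of $k(\beta^{-1})$ was the right move.
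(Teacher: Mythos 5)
Your proof is correct and follows essentially the same route as the paper: lift $\sigma$ (resp.\ $\tau$) to $\rho\in\gal(\tilde K/k)$, push the membership $\beta^{-1}\tilde F\in\calG_{\tilde F}^\circ(\bar kK)$ across $\rho$, and in the reverse direction use that the resulting matrix has entries in $C$ to land in $\calG_{\tilde F}^\circ$. The only cosmetic difference is that the paper chooses $\rho$ restricting to $\tau^{-1}$ and applies $\rho$ directly, whereas you choose $\rho$ restricting to $\tau$ and apply $\rho^{-1}$; these are the same step in disguise.
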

    \begin{proof}
       We first prove that $\calG_{\tilde{F}}\subseteq \tilde{G}$. Assume that $[\sigma]\in \calG_{\tilde{F}}$. Then there is $\sigma\in \calG$ such that $\sigma({\tilde{F}})={\tilde{F}}[\sigma]$ and furthermore there is $\rho\in \gal(\tilde{K}/k)$ such that $\rho|_K=\sigma$. Let $\tau=\rho|_{k_G}$. Note that $\tau\in \gal(k_G/k)$. Then for all $i$ with $1\leq i\leq \nu$,
       $$
          \rho(Q_i(\beta^{-1}{\tilde{F}}))=Q_i(\tau(\beta)^{-1}\sigma({\tilde{F}}))=Q_i({\tau(\beta)^{-1}\tilde{F}}[\sigma])=0.
       $$
        This implies that $[\sigma]\in \tilde{G}$. Conversely, assume that $g\in \tilde{G}$. Then there is $\tau\in \gal(k_G/k)$ such that $Q_i(\tau(\beta)^{-1}{\tilde{F}}g)=0$ for all $i$ with $1 \leq i \leq \nu$. There is $\rho\in \gal(\tilde{K}/k)$ such that $\rho |_{k_G}=\tau^{-1}$. Since $\rho|_K\in \calG$, there is $h\in \calG_{\tilde{F}}$ such that $\rho |_K({\tilde{F}})={\tilde{F}}h$. Now we have that
        $$\forall\,\,i=1,\cdots, \nu,\,\,\rho(Q_i(\tau(\beta)^{-1}{\tilde{F}}g))=Q_i(\beta^{-1}{\tilde{F}}hg)=0. $$
         Hence $\beta^{-1}{\tilde{F}}hg\in \calG_{\tilde{F}}^\circ(\tilde{K})$, which implies that $hg\in \calG_{\tilde{F}}^\circ$ and thus $g\in \calG_{\tilde{F}}$.
    \end{proof}

  \section{Simplified Hrushovski's Algorithm}
   Now we are ready to present the algorithm. Instead of $\calG$, we shall compute $\calG_{\tilde{F}}$ for some fundamental matrix $\tilde{F}$. Throughout this section, $C$ will denote the algebraic closure of a computable field of characteristic zero.
   \begin{algorithm}
   Input: A linear differential equation (\ref{lodes}) with coefficients in $C(t)$. \\
   Output: the Galois group $\calG_{\tilde{F}}$.
   \begin{itemize}
      \item [$(a)$] By Corollary \ref{cor-boundeddegree}, determine an integer $\tilde{d}$ such that there is a proto-Galois group of (\ref{lodes}) bounded by $\tilde{d}$.
      \item [$(b)$] Compute a fundamental matrix $F$ (the first finitely many terms of its formal power series expansion at some point). Compute $I_{F, \tilde{d}}$ and then $\calH_F$, where $I_{F,\tilde{d}}$ is defined in (\ref{eqn-partialalgrelations}) and $\calH$ is the stabilizer of $\Zero(I_{F,\tilde{d}})\cap V_{inv}^n$.
      \item [$(c)$] Compute $\calH_F^\circ$ and find a zero $\alpha$ of $I_{F,\tilde{d}}$ in $\GL_n(\overline{C(t)})$ and $\bar{F}\in F\calH_F$ such that $\alpha^{-1}\bar{F}$ is an element of
      $\calH_F^\circ(\bar{k}K)$.
     \item [$(d)$]
         Compute generators of $X(\calH_F^\circ)$, say $\chi_1,\cdots, \chi_l$, which are represented by polynomials in $C[x_{1,1},x_{1,2},\cdots, x_{n,n}]$. Denote $\chi_i(\alpha^{-1}\bar{F})$ by $h_i$
         for $i=1,2,\cdots, l$. By Proposition \ref{prop-Hcirc},
         $h_i$ is hyperexponential over $\overline{C(t)}$. Compute these $h_i$.
       \item [$(e)$] Using the method in \cite{compoint-singer}, compute algebraic relations for $h_1,\cdots,h_l$, say
         $$
            \left\{h_j^{m_j}-f_j\prod_{i=1}^r h_{\eta_i}^{m_{i,j}}, j=1,2,\cdots, l\right\}
         $$
         where $f_j\in \overline{C(t)}$ and $h_{\eta_1},\cdots,h_{\eta_r}$ are algebraically independent over $C(t)$.
     \item [$(f)$]
        Denote a set of generators of $I(\calH_F^\circ)$ by $\{P_1(X),\cdots, P_\mu(X)\}$ and set
        $$
           J=\{P_i(\alpha^{-1}X),i=1,\cdots, \mu\}\bigcup \left\{ \chi_j^{m_j}(\alpha^{-1}X)-f_j\prod_{i=1}^r \chi_{\eta_i}^{m_{i,j}}(\alpha^{-1}X), j=1,\cdots, l \right\}.
        $$
        Compute $\bar{\calH}=\stab(\Zero(J)\cap V_{inv}^n)$ and then $\bar{\calH}^\circ$ that is equal to $\calG^\circ$.
     \item [$(g)$]As in the step $(c)$, compute $\calG_{\bar{F}}^\circ$ and find
        $\beta$ in $\Zero(J)\cap\GL_n(\overline{C(t)})$ and $\tilde{F}\in \bar{F}\bar{\calH}_{\bar{F}}$ such that $\beta^{-1}\tilde{F}\in\calG_{\bar{F}}^\circ(\bar{k}K)$.
     \item [$(h)$] Denote a set of generators of $I(\calG_{\bar{F}}^\circ)$ by $\{Q_1(X),\cdots, Q_\nu(X)\}$.
          Let $k_G$ be the Galois closure of $C(t)(\beta^{-1})$, where $C(t)(\beta^{-1})$ is the extension field of $C(t)$ by joining the entries of $\beta^{-1}$. Compute $\gal(k_G/C(t))$ and
         $$
            \calG_{\tilde{F}}=\bigcup_{\tau\in \gal(k_G/C(t))}\left\{g\in \GL_n(C)\,\,|\,\,\forall \,\,i=1,\cdots, \nu,\,\, Q_i(\tau(\beta)^{-1}\tilde{F}g)=0 \right\}.
         $$
     \item [$(i)$] Return $\calG_{\tilde{F}}$.
   \end{itemize}
   \end{algorithm}
   The correctness of the algorithm follows from the results in Sections 2 and 3. In the following, we will present several computation details omitted in the previous sections.
   Generally, it is difficult to find a fundamental matrix of (\ref{lodes}). What we can compute is the first finitely many terms of formal power series solutions of (\ref{lodes}) at some point of $C$. Let $z$ be a generic point of $C$. Expanding $A$ at $t=z$, we have
   \begin{equation}
   \label{eqn-expansionA}
     A=A_0+A_1(t-z)+A_2(t-z)^2+\cdots, \,\,A_i\in \Mat_n\left(C\left[z,\frac{1}{q(z)}\right]\right)
   \end{equation}
   where $q(z)$ is a polynomial in $C[z]$ such that $q(t)$ is the least common multiple of the denominators of the entries of $A$. Using the above expansion, we can compute
    a formal power series solutions of (\ref{lodes}) that has the following form
   $$
       \Gamma_z=I_n+D_1(t-z)+D_2(t-z)^2+\cdots,  \,\,D_i\in \Mat_n\left(C\left[z,\frac{1}{q(z)}\right]\right).
   $$
     If $a$ is an element of $C$ that does not vanish $q(z)$, then $\Gamma_z$ can be specialized to $\Gamma_a$ that is the formal power series expansion of some fundamental matrix of (\ref{lodes}) at $t=a$. In this section, we will use $F_a$ to denote the fundamental matrix of (\ref{lodes}) in $\GL_n(K)$ whose formal power series expansion at $t=a$ is of the form $\Gamma_a$.  We will begin with a fundamental matrix $F_a$ and then may replace it by other fundamental matrix if necessary during the computation process.


\subsection{Computing $I_{{F_a},\tilde{d}}$ and $\calH_{F_a}$}
\label{subsec-computpregaloisgroup}
 Corollary \ref{cor-bound-k-definable} says that the coefficients of defining polynomials in
$C(t)[x_{1,1},\cdots, x_{n,n}]_{\leq \tilde{d}}$ of $Z_{F,\tilde{d}}$ can be chosen to be rational functions bounded by an integer $\ell$. Without loss of generality, we may assume that all these coefficients are polynomial in $t$ which are bounded by $2\ell$. Let
$$
    P_{\bfc}(X)=\sum_{|\vec{m}|\leq \tilde{d}}\left(\sum_{0\leq i\leq 2\ell} c_{i,\vec{m}}(t-a)^i\right)X^{\vec{m}}, \,\,\bfc=(\cdots,c_{i,\vec{m}},\cdots).
$$
where the $c_{i,\vec{m}}$ are indeterminates.
A small modification of Theorem 1 in \cite{bertrand-beukers} yields the following theorem that bounds the order of $P_{\bfc}(F_a)$.
\begin{theorem}
\label{thm-bertrand}
One can compute an integer $N$ depending on $A,n,\ell$ such that
$$P_\bfc(F_a)=0\,\,\mbox{or}\,\,ord_{t=a}(P_\bfc(F_a))\leq N.$$
\end{theorem}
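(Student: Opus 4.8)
The plan is to recast $P_\bfc(F_a)$ as a polynomial of controlled size in the coordinates of a single solution vector of an auxiliary first–order system, and then invoke a mild variant of Bertrand--Beukers' Theorem~1. First I would pass from the fundamental matrix to a solution vector: the $n$ columns of $F_a$ are solutions of $\delta(Y)=AY$, so stacking them exhibits $F_a$ as a single solution vector $G_a$ of the size-$n^2$ system $\delta(Y)=\tilde A Y$ with $\tilde A=\diag(A,\dots,A)$, and the normalisation $\Gamma_a=I_n+D_1(t-a)+\cdots$ shows that $G_a$ is holomorphic at $t=a$, in particular that $t=a$ is an ordinary point of $\delta(Y)=\tilde A Y$. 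Absorbing the factors $(t-a)^i$ $(0\le i\le 2\ell)$ into the coefficients, one has $P_\bfc(F_a)=Q_\bfc(G_a)$ where $Q_\bfc\in C[t][x_{1,1},\dots,x_{n,n}]$ has degree $\le\tilde d$ in the $x_{i,j}$ and coefficients of $t$-degree $\le 2\ell$; crucially these two bounds, $\tilde d$ and $2\ell$, are independent of $\bfc$.

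Next I would apply the ``small modification'' of \cite[Thm.~1]{bertrand-beukers}, stated for a first–order system $\delta(Y)=BY$ over $C(t)$ of size $m$ and for polynomials with $t$-polynomial coefficients: if $Q$ has degree $\le D$ in the coordinates of a solution and coefficients of $t$-degree $\le e$, then either $Q$ vanishes identically on that solution or its order of vanishing at $t=t_0$ is $\le N_0$ for a computable integer $N_0$ depending only on $m,D,e$ and on the degrees and pole orders of the entries of $B$ — the same $N_0$ being valid at every point $t_0$. The mechanism I would spell out is the usual one: linearising, $Q$ evaluated at a solution of $\delta(Y)=BY$ is a $C(t)$–linear combination, twisted by $t$-polynomials of degree $\le e$, of the coordinates of a solution of the $D$-th symmetric–power system of $\delta(Y)=BY$ (of size $\le\binom{m+D}{D}$); hence $Q(\text{solution})$ satisfies a scalar linear ODE over $C(t)$ whose order and whose coefficients' degrees and pole orders are bounded in terms of $m,D,e$ and the data of $B$. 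A nonzero solution of such an ODE vanishes at a point to order at most (order of the ODE) $-1$ at an ordinary point and at most (order of the ODE) plus the sum of the pole orders of its coefficients (a crude bound on the largest non-negative indicial root) at an apparent singularity; the maximum over all points is $N_0$. Taking $B=\tilde A$, $m=n^2$, $D=\tilde d$, $e=2\ell$, $t_0=a$ yields the theorem with $N=N_0(\tilde A,\tilde d,2\ell)$; since $\tilde d$ is a function of $n$ (Corollary~\ref{cor-boundeddegree}) and $\tilde A$ is built from $A$, this $N$ depends only on $A,n,\ell$.

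Two uniformity points deserve attention, and one is the real obstacle. Uniformity in $\bfc$ is automatic because the construction above, hence $N_0$, sees $Q_\bfc$ only through the bounds $\tilde d$ and $2\ell$, not through the values $c_{i,\vec m}$; equivalently one may adjoin the $c_{i,\vec m}$ to the constants, work over $C(\bfc)(t)$ with the Picard--Vessiot extension $K\otimes_C C(\bfc)$, and note that $N_0$ does not depend on the field of constants. Uniformity in the base point $a$ (any non-root of $q$) is likewise built in, since $\delta(Y)=\tilde A Y$ is fixed and the bound holds at every point — so even though $t=a$ may become an apparent singularity of the auxiliary scalar ODE, it is still covered. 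The genuinely delicate step, and the substance of the ``small modification'', is keeping the degrees and pole orders of the coefficients of that auxiliary scalar ODE under control: reducing the symmetric–power \emph{system} to a scalar operator by a naive cyclic vector can inflate these quantities, so I would instead bound the local exponents of the symmetric–power system directly (via an effective bound for its indicial polynomial in terms of that of $\delta(Y)=\tilde A Y$), or else choose the cyclic vector with explicit degree control. Making this bookkeeping effective — so that $N$ is a genuine, computable function of $A,n,\ell$ alone — is where essentially all the work lies; the conceptual input is entirely Bertrand--Beukers.
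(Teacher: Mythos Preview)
Your proposal is correct and follows essentially the same route as the paper: view the fundamental matrix as a single $n^2$-vector solving the direct-sum system $\delta(Y)=A^{\oplus n}Y$ and then invoke Bertrand--Beukers' Theorem~1. The paper's proof is a two-line black-box citation of that theorem, whereas you unpack the mechanism (symmetric powers, indicial control, uniformity in $\bfc$ and $a$) in considerably more detail than the paper does; none of that extra care is wrong, it is simply more explicit than what the authors chose to record.
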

\begin{proof}
 Consider $F_a$ as a vector with $n^2$ entries. Then $F_a$ is a solution of the system (\ref{n-copy-lode}). The system (\ref{n-copy-lode}) is defined in Appendix A. Then the theorem follows from Theorem 1 in \cite{bertrand-beukers}.
\end{proof}
The above theorem can be generalized into the case that the coefficients of $P_{\bfc}(X)$ involve algebraic functions. More precisely, let $\gamma$ be an algebraic function with minimal polynomial $Q(x)$ and let $l=\deg(Q(x))$. Assume that $\bfw=(1,\gamma,\gamma^2, \cdots, \gamma^{l-1})^T$. Then it is easy to see that $\bfw$ is a solution of linear differential equations $\delta(Y)=BY$ where $B\in \Mat_l(C(t))$ can be constructed from $Q(x)$.
Let
$$
   \tilde{P}_{\bfc}(X)=\sum_{|\vec{m}|\leq \tilde{d}}\left(\sum_{j=1}^l\left(\sum_{0\leq i\leq 2\ell} c_{i,j,\vec{m}}(t-a)^i\right)\gamma^j\right)X^{\vec{m}}, \,\,\bfc=(\cdots,c_{i,j,\vec{m}},\cdots).
$$
Assume that $t=a$ is a regular point of $\gamma$. Then
\begin{cor}
\label{cor-bound}
There is an integer $\tilde{N}$ depending on $A, Q(x), n, \ell$ such that
$$
   \tilde{P}_\bfc(F_a)=0\,\,\mbox{or}\,\,\ord_{t=a}\tilde{P}_{\bfc}(F_a)\leq \tilde{N}.
$$
\end{cor}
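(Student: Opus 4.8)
The plan is to reduce the corollary to Theorem \ref{thm-bertrand} by incorporating the algebraic function $\gamma$ into an enlarged linear differential system, so that $\tilde{P}_\bfc(F_a)$ becomes a polynomial-in-$t$ linear combination of entries of a fundamental solution of that larger system. Concretely, set $\bfw=(1,\gamma,\gamma^2,\dots,\gamma^{l-1})^T$; as noted just before the statement, $\bfw$ satisfies a system $\delta(Y)=BY$ with $B\in\Mat_l(C(t))$ whose entries (and the lcm of their denominators) are explicitly computable from $Q(x)$. Form the direct-sum system governing the pair $(F_a,\bfw)$: its solution vector has $n^2+l$ coordinates, namely the $x_{i,j}$-entries of $F_a$ together with $1,\gamma,\dots,\gamma^{l-1}$. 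Since $t=a$ is a regular point both of $A$ (we may arrange this as in the construction of $\Gamma_a$) and of $\gamma$, this combined system has a regular point at $t=a$, and $(\Gamma_a,\bfw)$ is a formal power series solution normalized there.

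Next I would observe that $\tilde{P}_\bfc(F_a)$ is a $C$-linear combination, with coefficients that are polynomials in $(t-a)$ of degree at most $2\ell$, of the products $\gamma^j X^{\vec m}\big|_{X=F_a}$ for $|\vec m|\le\tilde d$, $0\le j\le l-1$. Each monomial $X^{\vec m}$ evaluated at $F_a$ is a polynomial of degree $\le\tilde d$ in the coordinates of the $F_a$-block, and multiplying by $\gamma^j$ gives a polynomial of bounded degree in the coordinates of the combined solution vector. Hence $\tilde{P}_\bfc(F_a)$ has exactly the shape to which (the proof technique of) Theorem \ref{thm-bertrand}, i.e.\ Theorem 1 of \cite{bertrand-beukers}, applies: it is a polynomial of degree $\le\tilde d+1$, with $(t-a)$-polynomial coefficients of degree $\le 2\ell$, in the entries of a fundamental matrix of a linear system of size $n^2+l$ over $C(t)$ with a regular point at $a$. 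Applying that theorem to this system yields an integer $\tilde N$ — depending only on the combined system's data, hence on $A$, $Q(x)$, $n$, and $\ell$ — such that either $\tilde{P}_\bfc(F_a)=0$ or $\ord_{t=a}\tilde{P}_\bfc(F_a)\le\tilde N$.

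The main obstacle is bookkeeping rather than conceptual: one must check that forming the direct-sum system and passing from ``polynomials in the $F_a$-entries times powers of $\gamma$'' to ``polynomials in the entries of the combined fundamental matrix'' does not disturb the hypotheses of \cite{bertrand-beukers} — in particular that the degree bound ($\tilde d+1$ after absorbing one power of $\gamma$, or $\tilde d$ together with the separate linear dependence on the $\gamma^j$), the regularity at $a$, and the control on denominators of the coefficient field all remain effective and depend only on the stated data. A minor point to handle is the possibility that $a$ is a singular point of $A$ or $\gamma$; this is excluded by the standing choice of $a$ (it does not vanish $q(z)$) and by the hypothesis that $t=a$ is a regular point of $\gamma$, so no genuine difficulty arises. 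Everything else is the routine ``small modification'' already invoked for Theorem \ref{thm-bertrand}.
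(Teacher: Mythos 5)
Your proposal is correct and takes essentially the same route as the paper: the paper's proof simply forms the direct-sum system $\delta(Y)=\diag(\underbrace{A,\dots,A}_{n},B)Y$ and invokes Theorem~\ref{thm-bertrand} (equivalently, Theorem~1 of Bertrand--Beukers) for that enlarged system. Your write-up fills in the bookkeeping (why $\tilde{P}_\bfc(F_a)$ is a bounded-degree polynomial in the entries of the combined fundamental matrix, and why $t=a$ remains a regular point) that the paper leaves implicit, but the underlying idea is identical.
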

\begin{proof}
  We only need to consider the system
  $$
     \delta(Y)=\diag(\underbrace{A,\cdots, A}_{n}, B)Y.
  $$
  Then the corollary follows from the above theorem.
\end{proof}
 Let $\calS$ be the set of the coefficients of the first $N+2$ terms of $P_{\bfc}(F_a)$. $\calS$ is a linear system in $\bfc$ and
$$
    P_{\bar{\bfc}}(F_a)=0 \Leftrightarrow \,\,\mbox{$\bar{\bfc}$ is a solution of $\calS$}.
$$
So computing $I_{{F_a},\tilde{d}}$ is reduced into solving the linear system $\calS$.

Assume that $I_{{F_a},\tilde{d}}$ is computed. Then we can compute $\calH_{F_a}$ as follows. For any $h\in \calH_{F_a}$, ${F_a}h\in Z_{{F_a},\tilde{d}}$. It implies that for any $P(X)\in I_{{F_a},\tilde{d}}$, $P(F_a h)=0$ and so that $P(Xh)\in I_{F_a,\tilde{d}}$.
This induces the defining equations for $\calH_{F_a}$. More precisely, let $P_1(X),\cdots, P_\mu(X)$ be a $C(t)$-basis of $I_{F_a,\tilde{d}}$. Let $\bfx=(\cdots, X^{\vec{m}},\cdots)$ be a vector consisting of all monomials in $x_{1,1},\cdots, x_{n,n}$ with degree not greater than $\tilde{d}$,
where $X^{\vec{m}}=x_{1,1}^{m_{1,1}}x_{1,2}^{m_{1,2}}\cdots x_{n,n}^{m_{n,n}}$. For any $h\in\GL_n(C)$, there is $[h]\in \GL_{n^2+\tilde{d} \choose \tilde{d}}(C)$ such that
$$h\cdot\bfx=(\cdots, (Xh)^{\vec{m}},\cdots)=\bfx[h].$$
For any $i=1,\cdots,\mu$, there is $\bfp_i \in C(t)^{n^2+\tilde{d} \choose \tilde{d}}$ such that $P_i(X)=\bfx\bfp_i$. Then we have
$$
  h\in \calH_{F_a}\Leftrightarrow \forall i, P_i(Xh)=\bfx[h]\bfp_i\in I_{F_a,\tilde{d}}\Leftrightarrow \forall i, [h]\bfp_i \wedge \bfp_1 \wedge \bfp_2 \wedge \cdots \wedge \bfp_\mu=0.
$$
This induces the defining equations for $\calH_{F_a}$.

\subsection{Computing $\alpha$ and $\bar{F}$}
\label{subsec-alpha}
Assume that we have calculated $I_{F_a,\tilde{d}}$ and $\calH_{F_a}$. Decomposing $\calH_{F_a}$ into irreducible components, we obtain $\calH_{F_a}^\circ$.
 Compute a zero $\bar{\alpha}$ of $I_{F_a, \tilde{d}}$ in $\GL_n(\overline{C(t)})$. Assume that $b\in C$ is a regular point of $\bar{\alpha}^{-1}$ and $q(b)\neq 0$, where $q(z)$ is as in (\ref{eqn-expansionA}). It is well-known that there is $g\in \GL_n(C)$ such that $F_a=F_b g$. In general, it is difficult to find such $g$, because we can only compute the first finitely many terms of $\Gamma_a$ and $\Gamma_b$. Note that $\Gamma_a$ (resp. $\Gamma_b$) is the formal power sereis expansion of $F_a$ (resp. $F_b$) at $t=a$ (resp. $t=b$). Fortunately, it is easy to find $h\in \GL_n(C)$ such that
 $F_b h$ belongs to $F_a\calH_{F_a}$. Such $h$ can be calculated as below. $F_bh\in F_a\calH_{F_a}$ if and only if
 for any $P(X)\in I_{F_a, \tilde{d}}$, $P(F_bh)=0$. Using the bound given in Theorem \ref{thm-bertrand}, the latter conditions derive the defining equations for $h$. Assume that we have found such $h$. Let $\bar{F}=F_bh$. Proposition \ref{prop-Hcirc} tells us that there is $\bar{g}\in \calH_{F_a}$ such that
  $(\bar{\alpha}\bar{g})^{-1}\bar{F}\in \calH_{F_a}^\circ(\bar{k}K)$. Since both $\bar{\alpha}^{-1}$ and $\bar{F}$ can be expanded as formal power series at the point $t=b$, the bound given in Theorem \ref{thm-bertrand} allows us to derive the defining equations for $\bar{g}$ from the condition $(\bar{\alpha}\bar{g})^{-1}\bar{F}\in \calH_{F_a}^\circ(\bar{k}K)$. Compute such $\bar{g}$ and let $\alpha=\bar{\alpha}\bar{g}$.

 \subsection{Computing $h_i$}
 By Remark \ref{rem-characters} in Appendix B, we can find a set of generators of $X(\calH_{F_a}^\circ)$, say $\chi_1,\cdots, \chi_l$.
  Assume that $\chi_1,\cdots, \chi_l$ are defined by polynomials of degree not greater than an integer $\kappa$. Denote the monomials in entries of $\alpha^{-1}\bar{F}$ with degree not greater than $\kappa$ by $\bfm_1,\cdots, \bfm_m$. Then
  each $\bfm_i$ satisfies a linear differential operator $L_i$ with coefficients in $C(t)(\alpha^{-1})$. Let $L=LCLM(L_1,\cdots,L_m)$. For each $i=1,\cdots,l$, $h_i=\chi_i(\alpha^{-1}\bar{F})$ is a hyperexponential solution of $L$. To compute $h_i$, it means to calculate $v_i=h_i'/h_i$ where $i=1,\cdots,l$. From \cite{davenport-singer}, one can compute all hyperexponential solutions of $L$ and then the bounds for minimal polynomials of $\bar{h}^{\prime}/\bar{h}$ where $\bar{h}$ is any hyperexponential solution. Using these bounds and Hermite-Pad\'{e} approximation, one can recover $v_i$ from the series expansion of $\chi_i(\alpha^{-1}\bar{F})'/\chi_i(\alpha^{-1}\bar{F})$.

\subsection{Computing $\calG_{\tilde{F}}$}
 The method described in Section \ref{subsec-alpha} can be adapted to find $\beta$ and $\tilde{F}$ in Step (g). So far, in Step (g), we obtain $\beta$ and $\tilde{F}$ satisfying that $\beta^{-1}\tilde{F}\in \calG_{\bar{F}}^\circ(\bar{k}K)$,  and a set of generators of $I(\calG_{\bar{F}}^\circ)$, denoted by $\{Q_1(X),\cdots, Q_\nu(X)\}$. Assume that $t=c$ is the point we pick to find $\beta$ and $\tilde{F}$ in Step (g). That is to say, both $\beta^{-1}$ and $\tilde{F}$ can be expanded as formal power series at the point $t=c$.
By Corollary \ref{cor-bound}, there is an integer $\tilde{N}$ such that for all $i$ with $1\leq i\leq \nu$ and $\tau\in \gal(k_G/C(t))$,
$$ Q_i(\tau(\beta)^{-1}\tilde{F} \bfc)=0\quad\mbox{or}\quad\ord_{t=c}(Q_i(\tau(\beta)^{-1}\tilde{F} \bfc))\leq \tilde{N}.$$
For each $\tau\in \gal(k_G/C(t))$, let $\calS_\tau$ be the set of coefficients of the first $\tilde{N}+2$ terms of formal power series $Q_1(\tau(\beta)^{-1}\tilde{F}\bfc), \cdots, Q_\nu(\tau(\beta)^{-1}\tilde{F}\bfc)$.
Then for each $\tau\in\gal(k_G/k)$, $\calS_\tau$ is the set of polynomials in $\bfc$ and for any $g\in \GL_n(C)$,
$Q_i(\tau(\beta)^{-1}\tilde{F}g)=0$ for all $i$ with $1\leq i\leq \nu$ if and only if $g\in \Zero(\calS_\tau).$
Let
$$
  \calG_{\tilde{F}}=\left(\bigcup_{\tau\in \gal(k_G/k)} \Zero(\calS_\tau)\right)\bigcap \GL_n(C).
$$
We then obtain the desired Galois group.
\begin{appendix}
\section{Bounds for $k$-definable elements of $N_d(V^n_{inv})$}
In this appendix, the symbols in the previous sections are used. Let $W$ be a $C$-vector subspace of $V$ of dimension not greater than $n$. Assume that $\bfw_1,\cdots, \bfw_m$ is a basis of $W$.
\begin{define}
    $W$ is said to be $k$-definable if there is $M\in GL_n(k)$ such that
   \begin{equation}
   \label{k-definable}
      M (\bfw_1,\bfw_2,\cdots,\bfw_m)=\begin{pmatrix} \tilde{W} \\ 0 \end{pmatrix}
   \end{equation}
   where $\tilde{W}$ is an invertible $m \times m$ matrix. We call $M$ the defining matrix of $W$.
\end{define}
\begin{remark}
\label{rem-stabilizer}
  There is some $B\in \mat_m(k)$ such that $\tilde{W}$ is a fundamental matrix of linear differential equations $\delta(Y)=BY$.
\end{remark}
 Clearly, $V$ is a $\calG$-module. Moreover, we have
\begin{prop}
\label{prop-definable}
   Let $W$ be a $C$-vector subspace of $V$. Then the following are equivalent:
   \begin{enumerate}
      \item [$(a)$]
        $W$ is $k$-definable;
   \item [$(b)$]
     $W$ is a $\calG$-submodule of $V$.
   \end{enumerate}
\end{prop}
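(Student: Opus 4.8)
The plan is to prove the two implications separately; $(a)\Rightarrow(b)$ is routine, and the real work is in $(b)\Rightarrow(a)$. In both directions I would first extend $\bfw_1,\dots,\bfw_m$ to a $C$-basis $\bfw_1,\dots,\bfw_n$ of $V$ and set $F=(\bfw_1,\dots,\bfw_n)\in\GL_n(K)$, a fundamental matrix; in particular $\bfw_1,\dots,\bfw_n$ are linearly independent over $K$, so the $K$-span of $\bfw_1,\dots,\bfw_m$ inside $K^n$ meets $V$ exactly in $W$ (comparing a $C$-expansion and a $K$-expansion of a common vector).

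For $(a)\Rightarrow(b)$: let $M\in\GL_n(k)$ be a defining matrix, so $(\bfw_1,\dots,\bfw_m)=M^{-1}\begin{pmatrix}\tilde W\\ 0\end{pmatrix}$. Fix $\sigma\in\calG$. Since $\calG$ acts coordinatewise on $K^n$ and fixes $k$, hence $M^{-1}$, applying $\sigma$ gives $(\sigma(\bfw_1),\dots,\sigma(\bfw_m))=M^{-1}\begin{pmatrix}\sigma(\tilde W)\\ 0\end{pmatrix}=(\bfw_1,\dots,\bfw_m)\,\tilde W^{-1}\sigma(\tilde W)$, so each $\sigma(\bfw_i)$ is a $K$-linear combination of $\bfw_1,\dots,\bfw_m$. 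As $\sigma(\bfw_i)\in V$, the remark above forces $\sigma(\bfw_i)\in W$; thus $\sigma(W)\subseteq W$, and applying this also to $\sigma^{-1}$ yields $\sigma(W)=W$. Hence $W$ is a $\calG$-submodule.

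For $(b)\Rightarrow(a)$: it suffices to produce $n-m$ row vectors in $k^{1\times n}$ that are linearly independent over $k$ and annihilate each of $\bfw_1,\dots,\bfw_m$. Indeed, taking these as the bottom $n-m$ rows $M_2$ of $M$ and completing $M_2$ to any $M\in\GL_n(k)$ (elementary over the field $k$), the product $M(\bfw_1,\dots,\bfw_m)$ has the form $\begin{pmatrix}\tilde W\\ 0\end{pmatrix}$; and since $(\bfw_1,\dots,\bfw_m)$ has $K$-rank $m$ and $M$ is invertible, the product still has rank $m$, so the $m\times m$ block $\tilde W$ is invertible, as required. To find the rows, consider the left null space $L=\{r\in K^{1\times n}\mid r\bfw_i=0,\ i=1,\dots,m\}$, a $K$-subspace of dimension $n-m$. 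I claim $L$ is $\calG$-stable: for $r\in L$ and $\sigma\in\calG$ one has $\sigma(r)\bfw_i=\sigma\!\left(r\,\sigma^{-1}(\bfw_i)\right)$, and since $W$ is a $\calG$-submodule $\sigma^{-1}(\bfw_i)$ is a $C$-combination of $\bfw_1,\dots,\bfw_m$, each annihilated by $r$, so $\sigma(r)\bfw_i=0$. Because $L$ is a finite-dimensional $\calG$-stable $K$-subspace of $K^{1\times n}$ and $K^{\calG}=k$, $L$ has a basis of vectors over $k$; these $n-m$ vectors are the desired rows.

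The main obstacle is precisely this last descent step: that a $\calG$-stable finite-dimensional $K$-subspace of $K^{1\times n}$ is spanned by vectors over $k$. I would establish it either by the standard Picard--Vessiot descent (choose a $K$-basis of $L$, write the semilinear $\calG$-action on it as a $1$-cocycle with values in $\GL_{n-m}(K)$, and use the triviality of the corresponding $H^1$ together with $K^{\calG}=k$; cf.\ \cite{put-singer}), or, in the spirit of the proof of Lemma~\ref{lem-definable}, by an elementary argument: repeatedly select an element of $L$ with the fewest nonzero entries and normalize one entry to $1$; any entry lying outside $k$ could be moved by some $\sigma\in\calG$, and then $r-\sigma(r)\in L$ would have strictly fewer nonzero entries while remaining nonzero, a contradiction, and one iterates on a complement. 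This is the only place where the Picard--Vessiot structure of $K/k$ is used beyond the fact that $\calG$ acts $k$-linearly on $V$.
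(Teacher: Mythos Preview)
Your argument is correct. For $(a)\Rightarrow(b)$ you essentially do what the paper does, though you replace the paper's use of Remark~\ref{rem-stabilizer} (that $\tilde W^{-1}\sigma(\tilde W)\in\GL_m(C)$) by the observation that $V\cap\sp_K(\bfw_1,\dots,\bfw_m)=W$; both are fine.

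For $(b)\Rightarrow(a)$ you take a genuinely different route. The paper's proof is very concrete: after permuting rows so that the top $m\times m$ block $\tilde W$ of $(\bfw_1,\dots,\bfw_m)$ is invertible, it sets $\bar M=\bar W\tilde W^{-1}$ (with $\bar W$ the bottom block) and checks directly that $\sigma(\bar M)=\bar W[\sigma](\tilde W[\sigma])^{-1}=\bar M$ for every $\sigma\in\calG$, so $\bar M\in\Mat_{(n-m)\times m}(k)$ and $M=\begin{pmatrix}I_m&0\\-\bar M&I_{n-m}\end{pmatrix}$ is the defining matrix. No descent lemma is needed: the required $k$-rows are produced explicitly. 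Your approach instead passes through the $\calG$-stable $K$-subspace $L$ of left annihilators and invokes a descent principle (either via $H^1$ triviality or by a minimal-support argument \`a la Lemma~\ref{lem-definable}) to get a $k$-basis of $L$. This is correct and conceptually clean, and generalizes well; the paper's argument is shorter and entirely elementary, and has the bonus that it pins down the shape~(\ref{eqn-form}) of $M$ used later in Proposition~\ref{prop-bounded-definable-subspace}.
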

\begin{proof}

   $(a)\Rightarrow (b)$. Assume that $W$ is $k$-definable and $\bfw_1,\cdots, \bfw_m$ is a $C$-basis of $W$. Then there is an $M \in \GL_n(k)$ such that
   (\ref{k-definable}) holds. For any $\sigma\in \calG$, it follows from Remark \ref{rem-stabilizer} that there is $[\sigma]\in \GL_m(C)$ such that $\sigma(\tilde{W})=\tilde{W} [\sigma]$. Applying $\sigma$ to (\ref{k-definable}), we have
   $$
      M (\sigma(\bfw_1),\sigma(\bfw_2),\cdots, \sigma(\bfw_m))=\begin{pmatrix} \tilde{W} \\ 0 \end{pmatrix}[\sigma].
   $$
   Then the above equality and (\ref{k-definable}) deduce that
   $$
      (\sigma(\bfw_1),\sigma(\bfw_2),\cdots, \sigma(\bfw_m))=(\bfw_1,\bfw_2,\cdots,\bfw_m)[\sigma].
   $$
   Hence $W$ is a $\calG$-submodule.

   $(b)\Rightarrow (a)$. Assume that $W$ is a $\calG$-submodule and $\bfw_1,\cdots, \bfw_m$ is a $C$-basis of $W$. Then there is an $m\times m$ submatrix of $(\bfw_1,\cdots, \bfw_m)$ which is invertible. Without loss of generality, assume that the submatrix consisting of the first $m$ rows of the matrix $(\bfw_1,\cdots, \bfw_m)$ is invertible. Denote this submatrix by $\tilde{W}$ and the left $(n-m)\times m$ submatrix by $\bar{W}$. Let $\bar{M}=\bar{W}\tilde{W}^{-1}$. Assume that $\sigma \in \calG$. We then have that $\sigma((\bfw_1,\cdots, \bfw_m))=(\bfw_1,\cdots, \bfw_m) [\sigma]$ for some $[\sigma] \in \GL_m(C)$. Applying $\sigma$ to $\bar{M}$ yields that
   $$\sigma(\bar{M})=\sigma(\bar{W}\tilde{W}^{-1})=\sigma(\bar{W})\sigma(\tilde{W}^{-1})=\bar{W}[\sigma](\tilde{W}[\sigma])^{-1}=\bar{W} \tilde{W}^{-1}=\bar{M},$$
   which implies that $\sigma(\bar{M})=\bar{M}$ for all $\sigma\in \calG$. Hence all entries of $\bar{M}$ belong to $k$. Let
   \begin{equation}
\label{eqn-form}
      M=\begin{pmatrix} I_m & 0 \\ -\bar{M} & I_{n-m}    \end{pmatrix},
\end{equation}
   Then $M$ is a matrix such that (\ref{k-definable}) holds.
\end{proof}
\begin{define}
 Let $W$ be a $k$-definable subspace of $V$ and $m$ an integer.  We call $W$ bounded by $m$ if there is $M$ with $\deg(M)\leq m$ such that (\ref{k-definable}) holds, where $\deg(M)$ is defined as the maximum of the degrees of entries of $M$.
\end{define}

\begin{prop}
\label{prop-bounded-definable-subspace}
There is an integer $\ell$ such that all $k$-definable subspaces of $V$ are bounded by $\ell$.
\end{prop}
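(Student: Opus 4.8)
The plan is to encode a $k$-definable subspace as a rational solution of a fixed matrix Riccati equation --- equivalently, as a horizontal section of a Grassmann bundle --- and then to invoke the classical effective bounds for such objects.

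First I would put things in normal form. By Proposition~\ref{prop-definable} the $k$-definable subspaces of $V$ are exactly the $\calG$-submodules, and it suffices to bound, for each of the finitely many $m\in\{0,1,\dots,n\}$, the defining matrices of the $k$-definable subspaces of dimension $m$ (the cases $m=0,n$ being trivial). Re-reading the proof of Proposition~\ref{prop-definable}, such a $W$ has, after permuting the coordinates of $k^n$ (finitely many choices), a defining matrix of the form~(\ref{eqn-form}); writing $P$ for its lower-left $(n-m)\times m$ block and $\tilde W$ for the invertible block in~(\ref{k-definable}), one has $\deg(M)=\deg(P)$, so the task reduces to a uniform bound on $\deg(P)$.

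Next I would identify the equation that $P$ satisfies. The columns of $(\bfw_1,\dots,\bfw_m)=M^{-1}\binom{\tilde W}{0}=\binom{\tilde W}{P\tilde W}$ solve $\delta(Y)=AY$, with $\tilde W\in\GL_m(K)$; writing $A=\begin{pmatrix}A_{11}&A_{12}\\ A_{21}&A_{22}\end{pmatrix}$ in the same block pattern, comparing the two block rows of $\delta\binom{\tilde W}{P\tilde W}=A\binom{\tilde W}{P\tilde W}$ and cancelling $\tilde W$ on the right gives the matrix Riccati equation
$$\delta(P)=A_{21}+A_{22}P-PA_{11}-PA_{12}P$$
over $k$, and conversely every rational solution of it produces such a subspace (the standard correspondence between $\calG$-submodules of $V$ and $\delta$-stable $k$-subspaces of $k^n$). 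Geometrically, $P$ is the chart expression of a rational section of the Grassmann bundle $\mathrm{Gr}(m,\mathcal O_{\bP^1}^{\,n})\to\bP^1$ that is horizontal for the connection $\delta-A$; in Pl\"ucker coordinates the entries of $P$ are ratios of $m\times m$ minors of $\binom{\tilde W}{P\tilde W}$ --- ratios which lie in $k$ and equal ratios of Pl\"ucker coordinates of that section --- so $\deg(P)$ is at most the Pl\"ucker degree of the section, equivalently the complexity of the hyperexponential solution of the system induced by $\delta-A$ on $\wedge^m k^n$.

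It remains to bound the Pl\"ucker degree of a rational horizontal section. Such a section is locally constant in a flat frame, hence holomorphic wherever $\delta-A$ is regular, so all of its base points lie in the finite set $S$ made of the poles of $A$ together with $t=\infty$; and at each $s\in S$ the local contribution to the degree is bounded in terms of $n$ and the pole order (Poincar\'e rank) of $A$ at $s$, by the classical local theory of linear differential systems that underpins the algorithms for exponential solutions and for factoring linear differential operators (Newton polygons at the irregular points, Fuchs-type relations at the regular ones). This yields an integer $D_m$ depending only on $n$ and $A$ with $\deg(P)\le D_m$ for every $k$-definable $W$ of dimension $m$, and then $\ell=\max_{0\le m\le n}D_m$ is the required bound. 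The only non-formal step is this last one --- the uniform local bound at the (possibly irregular) singular points of $\delta-A$; everything before it is routine linear algebra together with Proposition~\ref{prop-definable}. Since the statement asks only for the existence of $\ell$, any standard effective local bound suffices here, the sharper explicit estimate being left, as in the paper, to the subsequent refinement.
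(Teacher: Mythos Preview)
Your proposal is correct and its core matches the paper's argument: after the reduction to bounding the lower-left block $P$ (the paper calls it $\bar M$), both you and the paper observe that its entries are ratios of Pl\"ucker coordinates of $\bfw_1\wedge\cdots\wedge\bfw_m$, that this wedge is a hyperexponential solution of the exterior system $\delta(Y)=(\wedge^m A)Y$, and that degree bounds for such solutions yield the desired $\ell$.

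The differences are in packaging rather than substance. You insert the matrix Riccati equation for $P$ and the Grassmann-bundle picture before passing to Pl\"ucker coordinates; the paper goes straight to the wedge product and reads off the entries of $\bar M$ explicitly as $a_{12\cdots\hat\jmath\cdots m\,i}$ after normalizing $a_{12\cdots m}=1$. For the final bound the paper simply invokes the known degree estimates for hyperexponential solutions of a fixed linear system, whereas you sketch why such a bound should exist via local contributions at the singular points of $A$. Your Riccati detour is correct but not needed for the mere existence of $\ell$; it does, however, make the correspondence between $k$-definable subspaces and rational sections of the Grassmannian transparent. As you yourself flag, the only soft spot is your last paragraph---the uniform local bound at possibly irregular singularities---and that is exactly where the paper discharges the work by citation rather than by an ad hoc argument.
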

 \begin{proof}
Evidently, it is enough to prove Proposition \ref{prop-bounded-definable-subspace} for the fixed dimension $k$-definable subspaces.
 Let $W$ be a $k$-definable subspace of $V$ with a basis $\bfw_1,\cdots,\bfw_m$.
From the proof of Proposition~\ref{prop-definable}, $M$ can be chosen to satisfy that there is a permutation matrix $P$ such that $MP$ is of the form (\ref{eqn-form}). In other words,
 $$
    (\bfw_1,\bfw_2,\cdots,\bfw_m)=P\begin{pmatrix} I_m & 0 \\ \bar{M} & I_{n-m}    \end{pmatrix}\begin{pmatrix} \tilde{W} \\ 0 \end{pmatrix},
 $$
 where $\bar{M}$ is an $(n-m)\times m$ matrix with entries in $k$. If $\deg(\bar{M})$ is not greater than $\ell$, neither is $\deg(M)$.
 One can easily see that the coordinates of $P\bfw_1\wedge P\bfw_2 \wedge \cdots \wedge P\bfw_m$ are the permutation of those of $\bfw_1\wedge \bfw_2 \wedge \cdots \wedge \bfw_m$.
 This implies that we only need to consider the case $P=I_n$. An easy calculation yields that
 \begin{align*}
    \bfw_1\wedge \bfw_2 \wedge \cdots \wedge \bfw_m &=\sum_{1\leq i_1<i_2<\cdots <i_m \leq n}  c_{i_1,i_2,\cdots, i_m} \bfe_{i_1}\wedge \bfe_{i_2}\wedge \cdots \wedge \bfe_{i_m}\\
    &=\sum_{1\leq i_1<i_2<\cdots <i_m \leq n}  a_{i_1,i_2,\cdots, i_m}\det(\tilde{W}) \bfe_{i_1}\wedge \bfe_{i_2}\wedge \cdots \wedge \bfe_{i_m},
 \end{align*}
 where $ c_{i_1,i_2,\cdots, i_m}\in K, a_{i_1,i_2,\cdots, i_m}\in k$. Note that $\delta(\tilde{W})=B\tilde{W}$ for some $B\in \Mat_m(k)$. Therefore $\delta(\det(\tilde{W}))=\Tr(B)\det(\tilde{W})$. It implies that the vector $(\cdots, c_{i_1,i_2,\cdots, i_m},\cdots)$ is a hyperexponential solution over $k$ of the exterior system $\delta(Y)=\left(\wedge^m A\right)Y$. The matrix $\wedge^m A$ can be constructed from $A$. Assume that $\bar{M}=(\lambda_{i,j}), 1\leq i\leq n-m, 1\leq j \leq m$. We further have that
 $$
    a_{12\cdots m}=1,\,\, a_{12\cdots \hat{j} \cdots m i}=\lambda_{i-m,j}, \,\, m+1\leq i \leq n, 1\leq j \leq m.
 $$

 Assume that $\bfv h$ is a hyperexponential solution of $\delta(Y)=\left(\wedge^m A\right)Y$ where $h$ is a hyperexponential element over $k$ and $\bfv\in k^{n \choose m}$. Note that if we fix hyperexponential elements $h$, then from \cite{hoeij-weil,put-singer}, there is an integer $\ell$ such that $\deg(\bfv)\leq\ell/2$. To obtain the entries of $\bar{M}$ from those of $\bfv$, we need to normalize $\bfv$ at some coordinate, for instance when $P=I_n$, $a_{12\cdots m}$ is normalized to 1. Hence $\deg(\bar{M})\leq \ell$.
 \end{proof}

 Since we restrict ourselves to Zariski closed subsets of $V^n_{inv}$, we need to consider the following two new systems.
One is the $n$-direct sum of (\ref{lodes}):
\begin{equation}
\label{n-copy-lode}
    \delta(Y)=\diag(\underbrace{A,A,\cdots,A}_{n})Y=A^{\oplus n}Y .\tag{$\dsA$}
\end{equation}
Then $V^n$ is the solution space of (\ref{n-copy-lode}) and
$\diag(\underbrace{F,F\cdots,F}_{n})$, denoted by $F^{\oplus n}$, is a fundamental matrix.
The other is the symmetric power system of (\ref{n-copy-lode}). Define a map
\begin{align*}
S^{\leq d}: V^n &\longrightarrow K^{n^2+d \choose d} \\
\bfv=(v_{1,1},v_{2,1},\cdots, v_{n,n})&\longrightarrow (\cdots, v_{1,1}^{\mu_{1,1}}v_{1,2}^{\mu_{1,2}}\cdots v_{n,n}^{\mu_{n,n}},\cdots,), \,\,\sum_{1\leq i,j \leq n}\mu_{i,j}\leq d.
\end{align*}
Then $S^{\leq d}(\bfv)$ is a solution of the symmetric power system of (\ref{lodes}), denoted by
\begin{equation}
\label{n-symmetric-system}
   \delta(Y)=\left(\ssA\right)Y. \tag{$\ssA$}
\end{equation}
The matrix $\ssA$ can be constructed from $A$ and its entries are in $k$.
Denote the solution space of (\ref{n-symmetric-system}) by $\ssV$. More details about the symmetric power system could be found in (p. 39, \cite{put-singer}) and \cite{hoeij-weil}.

Let $Z$ be an element of $N_d(V^n_{inv})$. Set
$$
  W_Z=\sp_C\{S^{\leq d}(\bfz), \bfz\in Z\}.
$$
Then $W_Z$ is a subspace of $\ssV$.
\begin{prop}
\label{prop-definable-corresponding}
Given an element $Z$ of $N_d(V^n_{inv})$, $Z$ is $k$-definable if and only if $W_Z$ is $k$-definable.
\end{prop}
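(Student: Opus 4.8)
The plan is to run the equivalence through Galois invariance on both sides. On the subspace side this is Proposition~\ref{prop-definable}: its proof uses nothing about $\dim V=n$ except the role of $n$ as the ambient dimension, so applying it to the symmetric power system (\ref{n-symmetric-system}) in place of (\ref{lodes}) shows that $W_Z$ is $k$-definable if and only if $W_Z$ is a $\calG$-submodule of $\ssV$. (Strictly this gives stability under the Galois group of (\ref{n-symmetric-system}); but its Picard-Vessiot field sits inside $K$ and $\calG$ maps onto that Galois group, so stability under the two groups is the same condition.) On the subset side I would first record the parallel fact that $Z\in N_d(V^n_{inv})$ is $k$-definable if and only if $\sigma(Z)=Z$ for all $\sigma\in\calG$. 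One direction is immediate: if $Z=\Zero(P_1,\dots,P_m)\cap V^n_{inv}$ with the $P_i\in k[X]$, then for $\sigma\in\calG$ and $\bfz\in Z$ we get $P_i(\sigma\bfz)=\sigma(P_i(\bfz))=0$ since $\sigma$ fixes $k$ pointwise, and $V^n_{inv}$ is $\calG$-stable, so $\sigma(Z)\subseteq Z$ and hence $\sigma(Z)=Z$. For the converse, write $Z=FH$ with $H=\{h\in\GL_n(C)\mid Fh\in Z\}$; expanding the (possibly $K$-valued) coefficients of the $P_i(FX)$ against a $C$-basis shows $H$ is cut out in $\GL_n(C)$ by polynomials over $C$, so $H$ is $C$-definable, and since $\calG\subseteq\stab(Z)=\stab(FH)$, Lemma~\ref{lem-definable} gives that $Z=FH$ is $k$-definable.

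It then remains to prove that $Z$ is $\calG$-invariant if and only if $W_Z$ is $\calG$-invariant, and the bridge is that $S^{\leq d}$ is $\calG$-equivariant. Each $\sigma\in\calG$ acts on $K^{\binom{n^2+d}{d}}$ coordinatewise as a ring automorphism fixing $C$, hence sends the monomial vector $S^{\leq d}(\bfv)$ of $\bfv$ to the monomial vector $S^{\leq d}(\sigma\bfv)$ of $\sigma\bfv$; being $C$-linear, it follows that $\sigma(W_Z)=W_{\sigma(Z)}$ for every $\sigma\in\calG$. Next, $Z$ is faithfully recovered from $W_Z$: one has $Z=\{\bfv\in V^n_{inv}\mid S^{\leq d}(\bfv)\in W_Z\}$, where ``$\subseteq$'' is the definition of $W_Z$, and for ``$\supseteq$'' note that a polynomial $P$ of degree $\leq d$ vanishes on $Z$ exactly when its coefficient functional on $K^{\binom{n^2+d}{d}}$ vanishes on $\{S^{\leq d}(\bfz)\mid\bfz\in Z\}$, hence on its span $W_Z$; since $Z\in N_d(V^n_{inv})$ equals the common zero set, intersected with $V^n_{inv}$, of all its degree-$\leq d$ vanishing polynomials, any $\bfv\in V^n_{inv}$ with $S^{\leq d}(\bfv)\in W_Z$ already lies in $Z$. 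Granting these two facts: if $\sigma(Z)=Z$ then $\sigma(W_Z)=W_{\sigma(Z)}=W_Z$; conversely if $\sigma(W_Z)=W_Z$, then $W_{\sigma(Z)}=\sigma(W_Z)=W_Z$, and since $\sigma(Z)$ again lies in $N_d(V^n_{inv})$ (it is defined by the $P_i^\sigma$, of the same degrees), the recovery formula applied to $\sigma(Z)$ forces $\sigma(Z)=Z$.

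Chaining everything yields $Z$ $k$-definable $\iff$ $Z$ is $\calG$-invariant $\iff$ $W_Z$ is $\calG$-invariant $\iff$ $W_Z$ is a $\calG$-submodule of $\ssV$ $\iff$ $W_Z$ is $k$-definable, which is the claim. I expect the only genuinely delicate point to be the recovery formula $Z=\{\bfv\in V^n_{inv}\mid S^{\leq d}(\bfv)\in W_Z\}$, that is, the assertion that passing from $Z$ to the $C$-span $W_Z$ of its symmetric-power images loses no information; the remaining work is the minor bookkeeping needed to invoke Lemma~\ref{lem-definable} (the $C$-definability of $H$) and the dimension-agnostic form of Proposition~\ref{prop-definable}, while the transport of the Galois action is routine.
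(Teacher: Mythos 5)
Your proof is correct, and the forward implication (from $k$-definability of $Z$ to $k$-definability of $W_Z$ via $\calG$-stability and Proposition~\ref{prop-definable}) coincides with the paper's. The converse, however, follows a genuinely different route. The paper argues directly: by Proposition~\ref{prop-definable} applied to $W_Z\subseteq\ssV$ one gets a defining matrix $M$ with entries in $k$ whose bottom rows $\bfm_i$ (dotted against the monomial vector $\bfx$) yield degree-$\leq d$ polynomials $P_i\in k[X]$ vanishing on $Z$; a dimension count, based on the $K$-linear independence of the $S^{\leq d}(\bfv_j)$ (solutions of a linear differential system that are $C$-independent), then shows these $P_i$ cut out exactly $Z$ inside $V^n_{inv}$. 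You instead pass everything through $\calG$-invariance: you prove the recovery formula $Z=\{\bfv\in V^n_{inv}\mid S^{\leq d}(\bfv)\in W_Z\}$ (the same dimension/annihilator idea, packaged differently), deduce that $\calG$-stability of $W_Z$ forces $\calG$-stability of $Z$, and then invoke Lemma~\ref{lem-definable} on the decomposition $Z=FH$ to conclude $k$-definability. Both are valid and of comparable difficulty. One small trade-off worth noting: the paper's explicit construction is what Corollary~\ref{cor-bound-k-definable} leans on (``the coefficients of the defining equations of $Z$ can be chosen to be the entries of the defining matrix of $W_Z$''), so it delivers the downstream degree bounds for free, whereas your version, routing through Lemma~\ref{lem-definable}, gives $k$-definability of $Z$ without directly exhibiting defining polynomials whose coefficients are the entries of $M$; to recover that corollary along your route you would still need to unwind Lemma~\ref{lem-definable} or appeal to the paper's construction.
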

\begin{proof}
Suppose that $S^{\leq d}(\bfv_1), \cdots, S^{\leq d}(\bfv_m)$ is a basis of $W_Z$.

$(\Rightarrow)$ Assume that $\sigma \in \calG$. Since $Z$ is $k$-definable, $\sigma(\bfv)\in Z$ for all $\bfv\in Z$. It implies that for each $i=1,\cdots,m$, $\sigma(S^{\leq d}(\bfv_i))=S^{\leq d}(\sigma(\bfv_i))\in W_Z$.
Consequently, $W_Z$ is a $\calG$-module. By Proposition \ref{prop-definable}, $W_Z$ is $k$-definable.

$(\Leftarrow)$ Assume that $W_Z$ is $k$-definable. By Proposition \ref{prop-definable}, there is $M\in \GL_m(k)$ such that
$$
    M(S^{\leq d}(\bfv_1), \cdots, S^{\leq d}(\bfv_m))=\begin{pmatrix} \tilde{W} \\ 0 \end{pmatrix}
$$
where $\tilde{W}$ is an $m\times m$ matrix. Denote the $i$th-row of $M$ by $\bfm_i$ for $i=m+1,\cdots, {n^2+d \choose d}$. Let $\bfx=(1,x_{1,1},\cdots, x_{1,1}^{\mu_{1,1}}\cdots x_{n,n}^{\mu_{n,n}},\cdots)$ where $\mu_{1,1}+\mu_{1,2}+\cdots+\mu_{n,n}\leq d$. Then $P_i(X)=\bfm_i\cdot \bfx$ is a polynomial of degree at most $d$ with the coefficients in $k$,
where $i=m+1,\cdots, {n^2+d \choose d}$. One can easily verify that
 $$ Z\subseteq \Zero\left(P_m(X), \cdots, P_{n^2+d \choose d}(X)\right)\bigcap V_{inv}^n.$$
  Since $Z\in N_d(V^n_{inv})$, there are $Q_1(X), \cdots, Q_l(X)$ with degree at most $d$, which define $Z$. By the dimension argument, for each $i=1,\cdots,l$, $Q_i(X)$ is a $\bar{k}$-linear combinations of $P_m(X), \cdots, P_{n^2+d \choose d}(X)$. Therefore the above inclusion relation is actually an equality.
\end{proof}
The above two propositions indicate the following corollary.
\begin{cor}
\label{cor-bound-k-definable}
There is an integer $\ell$ such that for every $k$-definable element $Z$ of $N_d(V^n_{inv})$, the coefficients of the defining equations of $Z$ can be chosen to be rational functions whose degrees are not greater than $\ell$.
\end{cor}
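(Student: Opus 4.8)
The plan is to obtain the corollary by feeding a $k$-definable $Z \in N_d(V^n_{inv})$ through Propositions \ref{prop-definable-corresponding} and \ref{prop-bounded-definable-subspace} in turn. First I would fix $d$ and, given such a $Z$, pass to the subspace $W_Z = \sp_C\{S^{\leq d}(\bfz)\,:\,\bfz\in Z\}$ of the solution space $\ssV$ of the symmetric power system (\ref{n-symmetric-system}). By Proposition \ref{prop-definable-corresponding}, $W_Z$ is $k$-definable, so the problem is reduced to bounding the degree of a defining matrix of $W_Z$.

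Next I would invoke Proposition \ref{prop-bounded-definable-subspace}, but applied to the system (\ref{n-symmetric-system}) in place of (\ref{lodes}). Although that proposition is stated for $V$, its proof uses only that the ambient space is the full solution space of a linear differential system over $k$ of a fixed dimension, the essential input being the degree bound for hyperexponential solutions of the associated exterior power system (from \cite{hoeij-weil,put-singer}). Hence it yields an integer $\ell$, depending only on $A$, $n$, $d$ through the matrix $\ssA$, such that every $k$-definable subspace of $\ssV$ is bounded by $\ell$. In particular $W_Z$ has a defining matrix $M \in \GL_m(k)$, with $m = \dim_C W_Z$ and $\deg(M)\le \ell$; crucially, $\ell$ is independent of $Z$ because $W_Z$ always lies in the one fixed ambient space $\ssV$.

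Finally I would read off the conclusion exactly as in the $(\Leftarrow)$ direction of the proof of Proposition \ref{prop-definable-corresponding}: the lower rows $\bfm_i$ of $M$ give polynomials $P_i(X) = \bfm_i\cdot\bfx$ of degree at most $d$ whose coefficients are entries of $M$, and the dimension argument there shows $Z = \Zero(P_m,\dots,P_{n^2+d \choose d})\cap V^n_{inv}$. Since each coefficient of each $P_i$ is an entry of $M$ and $\deg(M)\le\ell$, the defining equations of $Z$ have coefficients that are rational functions of degree at most $\ell$, as required. The only point needing a little care — and the one I would expect to be the main obstacle — is justifying the transfer of Proposition \ref{prop-bounded-definable-subspace} from $V$ to $\ssV$ and checking that the resulting bound $\ell$ is genuinely uniform over all $k$-definable $Z \in N_d(V^n_{inv})$; everything else is bookkeeping already carried out in the two cited propositions.
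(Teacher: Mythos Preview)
Your proposal is correct and follows essentially the same route as the paper: pass from $Z$ to $W_Z$ via Proposition~\ref{prop-definable-corresponding}, apply Proposition~\ref{prop-bounded-definable-subspace} to the ambient space $\ssV$ of the symmetric power system to bound the defining matrix of $W_Z$, and read off the defining polynomials of $Z$ from the rows of that matrix as in the $(\Leftarrow)$ direction of Proposition~\ref{prop-definable-corresponding}. The paper's own proof is exactly this, only stated more tersely.
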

\begin{proof}
   Let $Z$ be a $k$-definable element of $N_d(V^n_{inv})$. Then $W_Z$ is a $k$-definable subspace of $\ssV$ by Proposition \ref{prop-definable-corresponding}. Moreover, from the proof of Proposition \ref{prop-definable-corresponding}, the coefficients of the defining equations of $Z$ can be chosen to be the entries of the defining matrix of $W_Z$.  Hence to prove the corollary, it suffices to show that all $k$-definable subspaces of $\ssV$ are bounded by an integer $\ell$. This can be done by applying Proposition \ref{prop-bounded-definable-subspace} to $\ssV$.
\end{proof}

\section{Bounds for proto-Galois groups}


In this appendix, we shall find an integer $\tilde{d}$ depending on $n$ with the following property. For any algebraic subgroup $G$ of $\GL_n(C)$, there is an algebraic subgroup $H$ of $\GL_n(C)$, which is bounded by $\tilde{d}$, satisfying
$$
   (*): (H^\circ)^t\unlhd G^\circ \leq G \leq H.
$$
Most of results in this section appeared in the part~\Rmnum{3} of \cite{hrushovski}, where more families of algebraic subgroups of $\GL_n(C)$ that can be uniformly definable are given. Here we only present those we need. At the same tine, we will use the term ``bounded by $d$" instead of ``uniformly definable". As mentioned in Introduction, we elaborate the details of the proofs in \cite{hrushovski} and present the explicit estimates of the bounds.
Meanwhile, we will show how to compute a set of generators of the character group of a given connected algebraic subgroup. The following notation will be used frequently.

\begin{notation}
Let $H$ be an algebraic subgroup of $\GL_n(C)$ and $S$ an arbitrary subset of $H$.
\[
\begin{minipage}{15cm}
      $\calF$: a family of algebraic subgroups of $\GL_n(C)$;\\[1mm]
      $H_\calF$: the intersection of all $H'\in \calF$ with $H\subseteq H'$;\\[1mm]
      $N_H(H')$: the normalizer of $H'$ in $H$ where $H'$ is an algebraic subgroup of $H$;\\[1mm]
      $\calF_{mt}(H)$: the family of maximal tori of $H$;\\[1mm]
      $\calF_{imt}(H)$: the family of intersections of maximal tori of $H$;\\[1mm]
      $\calF_{up}$: the family of subgroups of $\GL_n(C)$ generated by unipotent elements;\\[1mm]
      $X(H)$: the group of characters of $H$;\\[1mm]
      $H^t$: the intersection of kernals of all characters of $H$.
\end{minipage}
\]
\end{notation}
An algebraic subgroup $H$ of $\GL_n(C)$ is said to be bounded by $d$ if there are polynomials $Q_1(X), \cdots, Q_m(X)$ in $C[x_{1,1},x_{1,2},\cdots, x_{n,n}]$ with degree not greater than $d$ such that
$$H=\Zero(Q_1(X), \cdots, Q_m(X))\bigcap \GL_n(C).$$
For a family of algebraic subgroups of $\GL_n(C)$, say $\calF$, we say $\calF$ is bounded by $d$, if every element of $\calF$ is bounded by $d$. For an ideal $I$ in $C[x_{1,1},x_{1,2},\cdots, x_{n,n}]$, $I$ is said to be bounded by $d$ if there exist generators of $I$ whose degrees are not greater than $d$. Throughout this appendix, unless otherwise specified, subgroups always mean algebraic subgroups.
\subsection{Preparation lemmas}

To achieve the integer $\tilde{d}$, we need the following degree bounds from computational algebraic geometry. More details on these degree bounds can be found in \cite{dube,seidenberg,vandendries-schmidt}.
  For the moment, we assume that $I$ is an ideal in $C[x_1,\cdots,x_n]$. Then we have
  \begin{prop}
  \label{prop-elimination}
     Suppose that $I$ is bounded by $d$. Then there is $\gamma(n,d)$ in $\bN$ such that $I \cap C[x_1,\cdots, x_i]$ is bounded by $\gamma(n,d)$.
  \end{prop}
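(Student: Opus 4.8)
The plan is to reduce the statement to a degree bound for Gr\"obner bases combined with the elimination property of an appropriate monomial order. First I would fix an elimination order $\prec$ on $C[x_1,\dots,x_n]$ eliminating $x_{i+1},\dots,x_n$: for instance the block order in which every monomial involving one of $x_{i+1},\dots,x_n$ precedes every monomial in $x_1,\dots,x_i$ alone, each block being refined by graded reverse lexicographic order. The structural input I rely on is the standard elimination theorem: if $G$ is a Gr\"obner basis of $I$ with respect to $\prec$, then $G\cap C[x_1,\dots,x_i]$ is a Gr\"obner basis of $I\cap C[x_1,\dots,x_i]$; in particular the elimination ideal is generated by a subset of $G$.

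Next I would invoke the effective bound of Dub\'e (\cite{dube}, with related estimates in \cite{seidenberg,vandendries-schmidt}): an ideal of $C[x_1,\dots,x_n]$ generated in degree $\leq d$ admits a Gr\"obner basis, with respect to any admissible term order, whose elements have degree at most
\[
   D(n,d)=2\left(\frac{d^2}{2}+d\right)^{2^{n-1}}.
\]
Applying this to $I$ and the order $\prec$, and intersecting $G$ with $C[x_1,\dots,x_i]$, the elimination ideal $I\cap C[x_1,\dots,x_i]$ is generated by polynomials of degree $\leq D(n,d)$. Hence it suffices to take $\gamma(n,d)=\lceil D(n,d)\rceil$, which depends only on $n$ and $d$ and is uniform in $i$.

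The proof is therefore essentially an assembly of known results, and the only point requiring a little care is that the Gr\"obner basis degree bound be available for the (non graded) elimination order used above; this is fine because Dub\'e's estimate is formulated for an arbitrary admissible order, but if one prefers not to rely on that one may instead describe $I\cap C[x_1,\dots,x_i]$ via the projection/saturation operations on the associated varieties and quote the corresponding degree bounds directly from \cite{seidenberg,vandendries-schmidt}. Either route yields the same doubly exponential $\gamma(n,d)$, and no genuine obstacle remains beyond this bookkeeping.
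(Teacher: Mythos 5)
Your proposal is correct and matches the paper's intent exactly: the paper gives no proof of Proposition~\ref{prop-elimination} beyond Remark~\ref{remark-bounds}, which simply cites Dub\'e's Gr\"obner basis degree bound and records the same value $\gamma(n,d)=2\left(\frac{d^2}{2}+d\right)^{2^{n-1}}$. Your argument via an elimination order, the elimination theorem for Gr\"obner bases, and Dub\'e's bound is precisely the standard route the paper is implicitly invoking.
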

  \begin{remark}
    \label{remark-bounds}
    By Groebner bases computation, $\gamma(n,d)$ can be chosen as $2\left(\frac{d^2}{2}+d\right)^{2^{n-1}}$, which is less than $(d+1)^{2^n}$. The reader is referred to \cite{dube} for more details. \end{remark}
The following several lemmas play the key role in this appendix.

  \begin{lemma}
  \label{lem-bound5}
     Let $H$ be a subgroup of $\GL_n(C)$ bounded by $d$. Then there exists a family $\calF_{ad}(H)$ of subgroups of $H$ bounded by $\max\{d,n\}$ such that for any connected subgroup $H'$ of $H$, $N_H(H')\in \calF_{ad}(H)$. Particularly, $\calF_{ad}(\GL_n(C))$ is bounded by $n$.
  \end{lemma}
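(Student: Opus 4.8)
\emph{Plan.} The idea is to replace $H'$ by its Lie algebra and then to write down, uniformly in $H'$, low-degree defining equations for the conjugation-stabilizer of that Lie algebra; the degree is kept at $n$ by clearing the inverse of a matrix with its adjugate instead of its actual inverse.

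First I would reduce to stabilizers of linear subspaces of $\gl_n(C)$. Since $C$ has characteristic zero, a connected subgroup $H'\le\GL_n(C)$ is determined by $\mathfrak h':=\mathrm{Lie}(H')\subseteq\gl_n(C)$, and conjugation by $g$ sends $\mathrm{Lie}(H')$ to $\mathrm{Lie}(gH'g^{-1})$; hence $g$ normalizes $H'$ if and only if the linear automorphism $Y\mapsto gYg^{-1}$ of $\gl_n(C)\cong C^{n^2}$ stabilizes the subspace $\mathfrak h'$. Also $N_H(H')=H\cap N_{\GL_n(C)}(H')$ because $H'\le H$. So it suffices to attach, to every linear subspace $W\subseteq\gl_n(C)$, a system of polynomials of degree at most $n$ in the entries $x_{i,j}$ of the generic matrix $X$ whose common zero locus inside $\GL_n(C)$ is the group $N_{\GL_n(C)}(W):=\{g\in\GL_n(C)\mid gWg^{-1}=W\}$; intersecting that locus with a degree-$\le d$ defining system of $H$ will then produce the required family.

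For the equations: fix a basis $B_1,\dots,B_m$ of $W$ and linear functionals $\ell_1,\dots,\ell_{n^2-m}$ on $\gl_n(C)$ with $W=\bigcap_j\ker\ell_j$. Since $Y\mapsto gYg^{-1}$ is a linear automorphism of $\gl_n(C)$, the inclusion $gWg^{-1}\subseteq W$ already forces equality, and $gWg^{-1}\subseteq W$ is equivalent to $\ell_j(gB_ig^{-1})=0$ for all $i,j$ (because the $B_i$ span $W$). On $\GL_n(C)$ one has $g^{-1}=\det(g)^{-1}\mathrm{adj}(g)$ with $\det(g)\neq0$, so these conditions are equivalent to $\ell_j\!\left(gB_i\,\mathrm{adj}(g)\right)=0$ for all $i,j$. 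Each entry of $\mathrm{adj}(g)$ is a polynomial of degree $n-1$ in the $x_{i,j}$ and each entry of $gB_i$ is linear in them, so every $\ell_j\!\left(X B_i\,\mathrm{adj}(X)\right)$ has degree at most $1+(n-1)=n$. Thus $N_{\GL_n(C)}(W)=\Zero\!\left(\{\ell_j(XB_i\,\mathrm{adj}(X))\}_{i,j}\right)\cap\GL_n(C)$, which is a subgroup of $\GL_n(C)$ bounded by $n$, independently of $W$.

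Finally I would set $\calF_{ad}(H):=\{\,H\cap N_{\GL_n(C)}(W)\mid W\subseteq\gl_n(C)\ \text{a linear subspace}\,\}$. Each member is a subgroup of $H$, cut out by the degree-$\le d$ equations of $H$ together with the degree-$\le n$ equations above, so $\calF_{ad}(H)$ is bounded by $\max\{d,n\}$; and for connected $H'\le H$ the choice $W=\mathfrak h'$ gives $N_H(H')=H\cap N_{\GL_n(C)}(\mathfrak h')\in\calF_{ad}(H)$. Taking $H=\GL_n(C)$ (trivially bounded by $n$, via an empty set of defining equations) yields that $\calF_{ad}(\GL_n(C))$ is bounded by $n$. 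The only non-formal ingredient is the characteristic-zero identity $N_{\GL_n(C)}(H')=N_{\GL_n(C)}(\mathrm{Lie}(H'))$ for connected $H'$ — which fails without connectedness and is exactly where that hypothesis is used — while the one arithmetic point to watch is that $\mathrm{adj}$ has degree $n-1$, so the resulting bound is $n$ and not $n+1$; beyond these I do not expect a genuine obstacle.
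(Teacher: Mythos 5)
Your proof is correct and takes essentially the same approach as the paper: both pass to the adjoint action on the Lie algebra, realize $N_H(H')$ as the stabilizer of the linear subspace $\mathrm{Lie}(H')$, and obtain the degree bound $n$ by clearing the $\det(X)$ denominator of $X^{-1}$ via the adjugate. The only superficial difference is that you encode the subspace-stabilization condition through vanishing of linear functionals cutting out $W$, whereas the paper uses the equivalent rank conditions on the column vectors $\bfb_1,\dots,\bfb_l,g_h\bfb_s$; both yield the same degree estimate.
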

  \begin{proof}
     Let $\gl(H)$ be the Lie algebra of $H$. Consider the adjoint action of $H$ on $\gl(H)$. Then $\gl(H')$ is a subspace of $\gl(H)$ and $N_H(H')$ is the stabilizer of $\gl(H')$ under the adjoint action.
     Let $B_1,\cdots, B_l$ be a basis of $\gl(H')$ and the $\bfe_{i,j}$ a basis of $\Mat_n(C)$. Then for any $h\in N_{H}(H')$, there is $g_h\in \GL_{n^2}(C)$ such that
     $ h(\bfe_{1,1},\cdots, \bfe_{n,n})h^{-1}=(\bfe_{1,1},\cdots,\bfe_{n,n})g_h$. It is easy to see that the entries of $g_h$ are of the form $P_{l,m}(h)/\det(h)$ where the $P_{l,m}(X)$ are polynomials with degree at most $n$. Assume that $B_s=(\bfe_{1,1},\cdots,\bfe_{n,n})\bfb_s$ for $s=1,\cdots,l$ where $\bfb_s=(b_{s,i,j}) \in C^{n^2}$. Then since $hB_s h^{-1}\in \gl(H')$, there are $a_{s, 1},\cdots,a_{s, l}\in C$ such that $hB_s h^{-1}=\sum_\xi a_{s, \xi} B_\xi$. In other words,
     $$
         hB_s h^{-1}=\sum_{i,j}b_{s, i,j} h\bfe_{i,j}h^{-1}=(\bfe_{1,1},\cdots,\bfe_{n,n})g_h\bfb_s=(\bfe_{1,1},\cdots,\bfe_{n,n})
         \sum_{\xi=1}^l a_{s, \xi} \bfb_\xi.
     $$
     That is
     \begin{equation*}
     \label{eqn-nonhomogeneous}
         g_h\bfb_s=\sum_{\xi=1}^l a_{s, \xi} \bfb_\xi.
     \end{equation*}
     The above nonhomogeneous linear equations has solutions if and only if
     $$
       \rank(\bfb_1,\cdots,\bfb_l)=\rank(\bfb_1,\cdots,\bfb_l,g_h\bfb_s).
     $$
     This leads to the equations that together with the defining equations of $H$ define $N_H(H')$. Since the entries of $g_h$ are of the form $P_{l,m}(h)/\det(h)$ where the $P_{l,m}(X)$ are polynomials with degree at most $n$, the defining ideal of $N_H(H')$ is generated by those of $H$ and the polynomials with degrees $\leq n$. Hence $\calF_{ad}(H)$ is bounded by $\max\{d,n\}$. In particular, when $H=\GL_n(C)$, $\calF_{ad}(\GL_n(C))$ is bounded by $n$.
  \end{proof}

  Let $\{\tau_{H,\lambda}: H \rightarrow \GL_\mu(C)| H\in \calF, \lambda\in \Lambda\}$ be a family of morphisms from elements of $\calF$ to $\GL_{\mu}(C)$ where $\mu$ is a positive integer and $\Lambda$ is a set. Assume that $\tau_{H,\lambda}=(P_{i,j}^{H,\lambda}(X)/Q^{H,\lambda}(X))$. We will say that the $\{\tau_{H,\lambda}\}$ are bounded by $m$ if $\deg(P^{H,\lambda}_{i,j}(X))\leq m$ and $\deg(Q^{H,\lambda}(X)) \leq m$.
  \begin{lemma}
    \label{lem-morphisms}
    Let $\{\tau_{H,\lambda}\,\,|\,\,H\in \calF, \lambda\in \Lambda\}$ be as above. Assume that $\calF$ is bounded by $d$ and $\{\tau_{H,\lambda}\,\,|\,\,H\in \calF, \lambda\in \Lambda\}$ is bounded by $m$. Then
    \begin{itemize}
       \item [$(a)$] $\{\tau_{H,\lambda}(H)\}$ is bounded by $(\bar{d}+1)^{2^{\mu^2+n^2}}$
     where $\bar{d}=\max\{m+1,d\}$.
       \item [$(b)$] if $\calF'$ is a family of subgroups of $\GL_{\mu}(C)$ bounded by $d'$, then $$\{\tau^{-1}_{H,\lambda}\left(H'\cap \tau_{H,\lambda}(H)\right)| H'\in \calF', H\in\calF, \lambda\in \Lambda\}$$ is bounded by $\max\{d, md'\}.$
    \end{itemize}
  \end{lemma}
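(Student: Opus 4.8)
The plan is to reduce both parts to the elimination degree bound recalled in Proposition~\ref{prop-elimination} and Remark~\ref{remark-bounds}: for $(a)$ the image $\tau_{H,\lambda}(H)$ is realized as a coordinate projection of the graph of $\tau_{H,\lambda}|_H$, and for $(b)$ the preimage is written down explicitly by substituting the rational expressions defining $\tau_{H,\lambda}$ into the defining equations of $H'$ and clearing denominators.

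For $(a)$, fix $H\in\calF$ and $\lambda\in\Lambda$, abbreviate $\tau=\tau_{H,\lambda}$, $P_{i,j}=P^{H,\lambda}_{i,j}$, $Q=Q^{H,\lambda}$ (so $\deg P_{i,j}\le m$, $\deg Q\le m$, and $Q$ is invertible on $H$), introduce a $\mu\times\mu$ matrix $Y=(y_{i,j})$ of fresh indeterminates, and form the ideal $\mathfrak a\subseteq C[x_{1,1},\dots,x_{n,n},y_{1,1},\dots,y_{\mu,\mu}]$ generated by the defining polynomials of $H$ (degree $\le d$) together with the $\mu^2$ relations $Q(X)\,y_{i,j}-P_{i,j}(X)$ (degree $\le m+1$); thus $\mathfrak a$ is bounded by $\bar d=\max\{m+1,d\}$. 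Since $Q$ does not vanish on $H$, the set $\Zero(\mathfrak a)\cap\bigl(\GL_n(C)\times\GL_\mu(C)\bigr)$ is exactly the graph $\{(h,\tau(h)):h\in H\}$, so $\tau(H)$ is its projection to the $Y$-coordinates. As $\tau(H)$ is the image of a homomorphism of algebraic groups, it is Zariski closed in $\GL_\mu(C)$, hence equals the trace on $\GL_\mu(C)$ of its closure in $\mathbb A^{\mu^2}$, and that closure is the zero set of $\mathfrak a\cap C[y_{1,1},\dots,y_{\mu,\mu}]$. Applying Proposition~\ref{prop-elimination} and Remark~\ref{remark-bounds} in the $n^2+\mu^2$ variables with input bound $\bar d$ then bounds this elimination ideal, and therefore $\tau(H)$, by $\gamma(n^2+\mu^2,\bar d)\le(\bar d+1)^{2^{n^2+\mu^2}}$.

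For $(b)$, I would first observe that $\tau_{H,\lambda}^{-1}\bigl(H'\cap\tau_{H,\lambda}(H)\bigr)=\{h\in H:\tau_{H,\lambda}(h)\in H'\}$, so the intersection with $\tau_{H,\lambda}(H)$ plays no role. Write $H'=\Zero(R_1,\dots,R_s)\cap\GL_\mu(C)$ with $\deg R_k\le d'$, substitute $y_{i,j}\mapsto P_{i,j}(X)/Q(X)$ into each $R_k$, and clear denominators by multiplying by $Q(X)^{d'}$: a $Y$-monomial of $R_k$ of total degree $e\le d'$ becomes $Q(X)^{d'-e}\prod_{i,j}P_{i,j}(X)^{e_{i,j}}$ of degree $\le m(d'-e)+me=md'$, so each resulting polynomial $\tilde R_k(X)$ has degree $\le md'$. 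Because $Q$ is invertible on $H$, a point $h\in\GL_n(C)$ satisfies the defining equations of $H$ and $\tilde R_1(h)=\dots=\tilde R_s(h)=0$ precisely when $h\in H$ and $\tau_{H,\lambda}(h)\in H'$; hence $\tau_{H,\lambda}^{-1}\bigl(H'\cap\tau_{H,\lambda}(H)\bigr)$ is cut out in $\GL_n(C)$ by polynomials of degree $\le\max\{d,md'\}$, and since this bound does not depend on $H$, $\lambda$ or $H'$, the whole family is bounded by $\max\{d,md'\}$.

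The step I expect to be the main obstacle is the elimination in $(a)$: one must make sure the projection of $\Zero(\mathfrak a)$ meets $\GL_\mu(C)$ in exactly $\tau_{H,\lambda}(H)$ and not in a strictly larger closed set. The delicate point is that $\Zero(\mathfrak a)$ can carry spurious components lying over $\overline{H}\setminus H$, where $\det X=0$ and $Q$ together with all the $P_{i,j}$ may vanish; this is handled by adjoining a Rabinowitsch-type equation $z\,\det(X)-1$ to force $h\in\GL_n(C)$, at the cost of a single auxiliary variable, which only affects the bookkeeping of the exponent. Once that is arranged, everything reduces to the degree estimate of Proposition~\ref{prop-elimination}. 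Part $(b)$, by contrast, is a routine substitute-and-clear-denominators argument, the only input being that $Q^{H,\lambda}$ is invertible on $H$.
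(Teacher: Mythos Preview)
Your proposal is correct and takes essentially the same approach as the paper: for part~$(a)$ the paper forms the very same graph ideal generated by $S_H$ together with the relations $Q^{H,\lambda}(X)\,y_{i,j}-P^{H,\lambda}_{i,j}(X)$ and then invokes Proposition~\ref{prop-elimination} in the $n^2+\mu^2$ variables, and for part~$(b)$ it substitutes $y_{i,j}\mapsto P^{H,\lambda}_{i,j}(X)/Q^{H,\lambda}(X)$ into the defining polynomials of $H'$ and clears denominators, just as you do. Your closing discussion of spurious components over $\overline{H}\setminus H$ and the Rabinowitsch remedy is a point of rigor that the paper's proof simply does not raise.
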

  \begin{proof}
     Assume that $H$ is defined by $S_H$, a set of polynomials with degree $\leq d$.

     $(a)$ $\tau_{H,\lambda}(H)$ is defined by
     $$
       \left\langle Q^{H,\lambda}(X)y_{1,1}-P^{H,\lambda}_{1,1}(X), \cdots, Q^{H,\lambda}(X)y_{\mu,\mu}-P^{H,\lambda}_{\mu,\mu}(X), S_H\right\rangle\bigcap C[y_{1,1},y_{1,2},\cdots, y_{\mu,\mu}].
     $$
     By Proposition \ref{prop-elimination}, $\tau_{H,\lambda}(H)$ is bounded by $(\bar{d}+1)^{2^{\mu^2+n^2}}$
     where $\bar{d}=\max\{m+1,d\}$.

     $(b)$ Assume that $H'$ is defined by $g_1(Y),\cdots, g_s(Y)$ where $\deg(g_i(Y))\leq d'$. Then one can see that $\tau^{-1}_{H,\lambda}\left(H'\cap \tau_{H,\lambda}(H)\right)$ is defined by
     $$
        S_H, \,\,g_1\left(\left(\frac{P^{H,\lambda}_{i,j}(X)}{Q^{H,\lambda}(X)}\right)\right), \cdots, g_s\left(\left(\frac{P^{H,\lambda}_{i,j}(X)}{Q^{H,\lambda}(X)}\right)\right).
     $$
     Clearing the denominators, we obtain the defining polynomials of $\tau^{-1}_{H,\lambda}\left(H'\cap \tau_{H,\lambda}(H)\right)$ in $C[x_{1,1},x_{1,2},\cdots, x_{n,n}]$, whose degrees are not greater than $\max\{d, md'\}$.
  \end{proof}
  Given a non-negative integer $d$, set
  $$
     d^*=\max_i \left\{{{n^2+d \choose d}\choose i}^2\right\},\,\,n^*=d^* d {n^2+d \choose d}.
  $$
  \begin{prop}
  \label{prop-embedding}
     Assume that $\calF$ is bounded by $d$. Then for any subgroup $H'\subseteq \GL_n(C)$ and $H\in \calF$ with
     $H \unlhd H'$, there is a family of morphisms $\tau_{H',H}: H'\rightarrow \GL_{d^*}(C)$ with $\ker(\tau_{H',H})=H$, which are bounded by $n^*$.  Furthermore, if $H'$ varies among a family of subgroups of $\GL_n(C)$ bounded by $d'$, then $\{\tau_{H',H}(H')|H'\in \calF', H\in \calF \,\,\mbox{with}\,\, H\unlhd H'\}$ is bounded by
     $$(\bar{d}+1)^{2^{(d^*)^2+n^2}},\,\,
     \mbox{where} \,\,\bar{d}=\max\{n^*+1,d'\}.$$
  \end{prop}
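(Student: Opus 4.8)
The plan is to produce, uniformly in the pair $(H',H)$, a faithful linear representation of the quotient $H'/H$ realised inside the endomorphism algebra of a bounded exterior power, and then to read off the degree bounds; the ``furthermore'' clause will follow at once from Lemma~\ref{lem-morphisms}$(a)$.

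First I would run a Chevalley-type construction. The case $H=\GL_n(C)$ forces $H'=\GL_n(C)$, and one takes $\tau_{H',H}$ to be the constant map $h'\mapsto I_{d^*}$; so assume $H\neq\GL_n(C)$, whence $d\geq 1$. Let $W=C[x_{1,1},x_{1,2},\cdots,x_{n,n}]_{\leq d}$, of dimension $N={n^2+d\choose d}$, carrying the degree-preserving left-translation action $(\rho(g)f)(X)=f(g^{-1}X)$ of $\GL_n(C)$, and put $M=I(H)\cap W$ and $m=\dim_C M$, so $1\leq m\leq N-1$ (because $H$ is bounded by $d$, $M$ contains defining polynomials of $H$ and the ideal it generates already cuts $H$ out in $\GL_n(C)$). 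Since $\rho(g)$ is a degree-preserving ring automorphism, one checks that $g\in\GL_n(C)$ stabilises $M$ exactly when $g\in H$; hence, with $\ell:=\wedge^m M$ — a line in $\wedge^m W$, since a line in $\wedge^m$ determines the $m$-dimensional subspace for $1\leq m\leq N-1$ — one gets $\mathrm{Stab}_{\GL_n(C)}(\ell)=H$ for the action $\wedge^m\rho$, and therefore $\mathrm{Stab}_{H'}(\ell)=H'\cap H=H$. Let $U_0\subseteq\wedge^m W$ be the $H'$-submodule generated by $\ell$; then $\dim U_0\leq{N\choose m}\leq\sqrt{d^*}$.

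Next I would invoke $H\unlhd H'$. For $h'\in H'$ the conjugate line $\wedge^m\rho(h')\ell$ is stabilised by $(h')^{-1}Hh'=H$, and $h\in H$ acts on it through the character $h\mapsto\chi_H((h')^{-1}hh')$, where $\chi_H$ is the character of $H$ on $\ell$; thus $H$ acts semisimply on $U_0$ through the $H'$-orbit of $\chi_H$ in $X(H)$, an orbit that is finite and reduces to a single point when $H'$ is connected (the stabiliser of $\chi_H$ in $H'$ is closed of finite index, hence contains $H'^\circ$). Granting that $H$ then acts on $U_0$ by the single scalar character $\chi_H$, I would set $\tau_{H',H}:H'\to\GL(\mathrm{End}(U_0))\hookrightarrow\GL_{d^*}(C)$ — padding with the identity, which is legitimate since $(\dim U_0)^2\leq d^*$ — to be the conjugation action $h'\mapsto\bigl(A\mapsto(\wedge^m\rho(h')|_{U_0})\,A\,(\wedge^m\rho(h')|_{U_0})^{-1}\bigr)$. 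Then $H\subseteq\ker\tau_{H',H}$, since conjugation by a scalar is trivial; conversely $h'\in\ker\tau_{H',H}$ makes $\wedge^m\rho(h')|_{U_0}$ scalar, hence stabilising $\ell$, hence $h'\in H$; so $\ker\tau_{H',H}=H$. For the degrees: the entries of $\wedge^m\rho(h')|_{U_0}$ are polynomials of degree $\leq md(n-1)$ in the entries of $h'$ over $\det(h')^{md}$, and $\wedge^m\rho(h')^{-1}|_{U_0}=\wedge^m\rho((h')^{-1})|_{U_0}$ is obtained by substituting $(h')^{-1}$; multiplying out and clearing denominators gives, after a routine count, that the entries of $\tau_{H',H}$ are rational functions whose numerators and common denominator have degree at most $n^*$ (using $N\geq n^2$, so $d^*\geq N^2\geq n^2$). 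This proves the first assertion, uniformly in $(H',H)$.

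For the ``furthermore'': the family $\{\tau_{H',H}\mid H'\in\calF',\ H\in\calF,\ H\unlhd H'\}$ consists of morphisms from elements of $\calF'$ (bounded by $d'$) into $\GL_{d^*}(C)$, bounded by $n^*$ in the sense preceding Lemma~\ref{lem-morphisms}; applying Lemma~\ref{lem-morphisms}$(a)$ with $\mu=d^*$ yields that $\{\tau_{H',H}(H')\}$ is bounded by $(\bar d+1)^{2^{(d^*)^2+n^2}}$ with $\bar d=\max\{n^*+1,d'\}$. I expect the one genuinely delicate point to be the step where normality is used to force $H$ to act \emph{trivially} on the representation rather than merely stabilising a line: this is immediate for connected $H'$, and for disconnected $H'$ it requires a little extra care — replacing $\ell$ by a suitable tensor or exterior construction so as to absorb the finite $H'$-orbit of $\chi_H$ into a single conjugation-invariant character — everything else being the bookkeeping of degrees that propagates from Proposition~\ref{prop-elimination} and Remark~\ref{remark-bounds} through Lemma~\ref{lem-morphisms}.
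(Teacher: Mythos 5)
Your overall strategy — Chevalley line inside a bounded exterior power of $C[x_{1,1},\ldots,x_{n,n}]_{\leq d}$, use of normality to build an $H'$-action killing $H$, degree bookkeeping, and Lemma~\ref{lem-morphisms}$(a)$ for the image bound — is the paper's strategy, and your ``furthermore'' step is handled correctly. But the point you flag at the end as needing ``a little extra care'' is a genuine gap, not a cosmetic one, and your sketched repair does not work.

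Here is the problem concretely. When $H'$ is disconnected — which is exactly the case of interest, since Lemmas~\ref{lem-case-finite} and \ref{lem-case-torus} apply the proposition with $H'/H$ finite — the $H'$-orbit of $\chi_H$ in $X(H)$ may have several elements $\chi_1,\ldots,\chi_s$. Then $H$ acts on your $U_0$ by block-scalar matrices $\diag(\chi_1(h)I,\ldots,\chi_s(h)I)$, and conjugation by such a matrix on $\mathrm{End}(U_0)$ is \emph{not} trivial: it scales the $(i,j)$ block by $\chi_i(h)\chi_j(h)^{-1}$. So $H\not\subseteq\ker\tau_{H',H}$, and the map you define does not have kernel $H$. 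Your proposed fix (a tensor/exterior construction absorbing the orbit into one conjugation-invariant character) is left as a gesture; moreover any such construction multiplies degrees by a factor on the order of the orbit size, which is not controlled by $n$ and $d$ alone, so it would not respect the claimed bound $n^*$.

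The paper's resolution is different and clean: set $U=\bigoplus_\chi U_\chi$, the sum of the $H$-weight spaces of $E$ (an $H'$-stable subspace since conjugation permutes characters of $H$), and conjugate \emph{only} on the subalgebra $\calL\subseteq\mathrm{End}(U)$ of maps preserving each $U_{\chi_i}$, i.e.\ the block-diagonal algebra. Block-scalar matrices commute with everything block-diagonal, so $H$ does land in the kernel; conversely the commutant of $\calL$ in $\mathrm{End}(U)$ is exactly the block scalars, and a block scalar stabilises the line $C\bfv$, so $\ker\tau_{H',H}=H$. This handles disconnected $H'$ uniformly, introduces no extra tensor powers, and keeps the bound $\dim\calL\le(\dim E)^2\le d^*$ that feeds into $n^*$. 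You should replace your $\mathrm{End}(U_0)$ by this block-diagonal subalgebra $\calL$; without that substitution the first (and crucial) assertion of the proposition is not proved.
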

  \begin{proof}
     $C[x_{1,1},x_{1,2},\cdots, x_{n,n}]_{\leq d}$ is a $C$-vector space with dimension $n^2+d \choose d$. The group $\GL_n(C)$ acts naturally on $C[x_{1,1},x_{1,2},\cdots, x_{n,n}]_{\leq d}$, which is defined as follows
     $$
        \forall\,\, g\in \GL_n(C),\,\, P(x)\in C[x_{1,1},x_{1,2},\cdots, x_{n,n}]_{\leq d}, \,\, g\cdot P(X)=P(Xg).
     $$
     Suppose that $H\in \calF$. Let
     $$
        I_{\leq d}(H)=\{P(X)\in C[x_{1,1},x_{1,2},\cdots, x_{n,n}]_{\leq d} \,\,| \,\,P(H)=0\}.
     $$
     It is also a $C$-vector space of finite dimension. Since $I_{\leq d}(H)$ defines $H$, $H=\stab(I_{\leq d}(H))$. Let $\nu=\dim_C(I_{\leq d}(H))$ and
     $$
         E=\bigwedge^\nu C[x_{1,1},x_{1,2},\cdots, x_{n,n}]_{\leq d}.
     $$
     Then $$\dim_C(E)={{n^2+d \choose d} \choose \nu} \,\,\mbox{and}\,\, \bigwedge^\nu I_{\leq d}(H)=C\bfv\,\,\mbox{for some $\bfv\in E$}.$$ The action of $\GL_n(C)$ on $C[x_{1,1},x_{1,2},\cdots, x_{n,n}]_{\leq d}$ induces an action of $\GL_n(C)$ on $E$. We will still use $\cdot$ to denote this action. It is easy to see that $H=\stab(C\bfv)$. Let $U=\oplus U_{\chi}$ where the $\chi$ are characters of $H$ and $$U_{\chi}=\{\bfu\in E \,\,|\,\, h\cdot\bfu=\chi(h)\bfu \}.$$
     Note that the above direct sum runs over a finite set. Assume that $U=\oplus_{i=1}^s U_{\chi_i}$.
     It is clear that $\bfv\in U_{\chi_i}$ for some $i$. Let $H'$ be a subgroup of $\GL_n(C)$ satisfying that $H\unlhd H'$. Then $U$ is invariant under the action of $H'$.  Let $\calL$ be the set of $C$-linear maps from $U$ to $U$ which leave all $U_{\chi_i}$ unchanged. Then since $\dim_C(U)\leq \dim_C(E)$,
     $$\dim_C(\calL) \leq (\dim_C(U))^2 \leq (\dim_C(E))^2={{n^2+d \choose d}\choose \nu}^2.$$
      Let $\bfu_1,\cdots, \bfu_l$ be a suitable basis of $U$ such that under this basis, each element of $\calL$ is represented as the matrix $\diag(M_1,\cdots,M_s)$ where $M_i\in \Mat_{\dim(U_{\chi_i})}(C)$. Furthermore every matrix of the form $\diag(M_1,\cdots, M_s)$ where
      $M_i\in \Mat_{\dim(U_{\chi_i})}(C)$ represents an element of $\calL$. For any $h'\in H'$, there is $[h']\in \GL_l(C)$ such that $$(h'\cdot\bfu_1,\cdots,h'\cdot\bfu_l)=(\bfu_1,\cdots,\bfu_l)[h'].$$ By an easy calculation, the entries of $[h']$ are polynomials in those of $h'$ with degree $\leq d\nu$. For any $L\in \calL$, we will use $L^{\bfu}$ to denote the matrix in $\GL_l(C)$ satisfies that
      $$
         L((\bfu_1,\bfu_2,\cdots, \bfu_l))=(\bfu_1,\bfu_2,\cdots, \bfu_l)L^\bfu.
      $$
      The action of $H'$ on $U$ derives an adjoint action of $H'$ on $\calL$ as follows: for any $L\in \calL, h'\in H'$,
      \begin{align*}
      (h'\cdot L) ((\bfu_1,\cdots,\bfu_l))&=h'\cdot L(h'^{-1}\cdot \bfu_1,\cdots,h'^{-1}\cdot \bfu_l)=h'\cdot L((\bfu_1,\cdots,\bfu_l)[h']^{-1})\\
      &=h'\cdot((\bfu_1,\cdots,\bfu_l)L^\bfu [h']^{-1})=(\bfu_1,\cdots,\bfu_l)[h']L^\bfu [h']^{-1}.
      \end{align*}
      Fix a basis of $\calL$, say $L_1,\cdots, L_m$, where $m\leq {{n^2+d \choose d} \choose \nu}^2$. Then the adjoint action induces a morphism from $H'$ to $\GL_m(C)$
     \begin{equation}
     \label{define-morphism}
          \tau_{H',H}:  H' \longrightarrow \GL_m(C), \,\,\tau_{H',H}(h')=\eta_{h'}
     \end{equation}
     where $\eta_{h'}\in \GL_m(C)$ satisfies that
     $$([h']L_1^\bfu [h']^{-1},\cdots,[h']L_m^\bfu [h']^{-1})=(L_1^\bfu,\cdots,L_m^\bfu)\eta_{h'}.$$ We will show that $\ker(\tau_{H',H})=H$.
     Suppose that $h'\in \ker(\tau_{H',H})$. Then $\eta_{h'}=I_m$. In other words, $[h']L^\bfu =L^\bfu [h']$ for all $L\in\calL$. It implies that
     $[h']$ is of the following form:
     $$[h']=\diag(\underbrace{c_1,\cdots, c_1}_{\dim(V_{\chi_1})},\cdots,\underbrace{c_s,\cdots,c_s}_{\dim(V_{\chi_s})}).$$
     Particularly, $h'\cdot\bfv=c_i \bfv$ for some $i$. Hence $h'\in H=\stab(C\bfv)$. One can easily see that $H\subseteq \ker(\tau_{H',H})$. Hence $H=\ker(\tau_{H',H})$.
     Since $m\leq d^*$, $\GL_m(C)$ can be naturally embedded into $\GL_{d^*}(C)$. Composing this embedding map with $\tau_{H',H}$ induces a morphism from $H'$ to $\GL_{d^*}(C)$ with kernel $H$. We will still denote this morphism by $\tau_{H',H}$. An easy calculation yields that
     $$
        \tau_{H',H}(X)=\left(\frac{P_{i,j}^{H',H}(X)}{Q^{H',H}(X)}\right)
     $$
     where $P_{i,j}^{H',H}(X), Q^{H',H}(X)$ are polynomials in $x_{i,j}$ and $Q^{H',H}(h')=\det([h'])$. Furthermore,
      the $P_{i,j}^{H',H}(h')$ are polynomials in the entries of $[h']$ with degree $\leq l$. Since the entries of $[h']$ are polynomials in those of $h'$ with degree $\leq d\nu$, the $P_{i,j}^{H',H}(h')$ and $Q^{H',H}(h')$ are polynomials in the entries of $h'$ with degree $\leq l d\nu$, which is not greater than
      $n^*$. This proves that $\{\tau_{H,\lambda}\,\,|\,\,H\in \calF, \lambda\in \Lambda\}$ is bounded by $n^*$.

     Finally, by Lemma \ref{lem-morphisms}, $\{\tau_{H',H}(H')|H'\in \calF', H\in \calF \,\,\mbox{with}\,\, H\unlhd H'\}$ is bounded by $(\bar{d}+1)^{2^{(d^*)^2+n^2}},\,\,
     \mbox{where} \,\,\bar{d}=\max\{n^*+1,d'\}$.
  \end{proof}
  \begin{remark}
     As linear algebraic groups, $\tau_{H',H}(H')$ is isomorphic to $H'/H$. Therefore Proposition \ref{prop-embedding} says that $H'/H$ can be uniformly embedded into $\GL_{d^*}(C)$ and $H'/H$ varies among a bounded family if $H'$ does.
  \end{remark}


  \begin{lemma}
  \label{lem-unipotent}
    $\calF_{up}$ is bounded by  $(2n^3+1)^{8^{n^2}}$.
  \end{lemma}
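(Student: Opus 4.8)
The plan is to realize every $H\in\calF_{up}$ as the image of an affine space under an explicit polynomial map of controlled degree, and then to read off the bound from the elimination estimate of Proposition~\ref{prop-elimination}.

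First I would use the description of unipotent one-parameter subgroups in characteristic zero. If $u\in\GL_n(C)$ is unipotent and $N=u-I_n$, so that $N^n=0$, then
$$
   U_u=\left\{\left.\,\sum_{k=0}^{n-1}\binom{s}{k}N^{k}\,\,\right|\,\,s\in C\,\right\}
$$
is a closed subgroup of $\GL_n(C)$ isomorphic to the additive group, it contains $u$, and it is the image of $\mathbb{A}^1$ under a morphism whose matrix entries are polynomials in $s$ of degree at most $n-1$. A subgroup of $\GL_n(C)$ generated by unipotent elements coincides with the algebraic subgroup generated by the corresponding family of one-parameter subgroups $U_u$. Hence, by the standard structure theorem for linear algebraic groups (a subgroup generated by a family of irreducible subvarieties of $\GL_n(C)$ each passing through $I_n$ is closed, connected, and already equal to a product of finitely many of these subvarieties, the number needed being at most twice its dimension), there are unipotent elements $u_1,\dots,u_m\in H$ with $H=U_{u_1}U_{u_2}\cdots U_{u_m}$ and $m\le 2\dim H$. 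Since every element of $H$ is a product of unipotent elements it has determinant $1$, so $H\subseteq\mathrm{SL}_n(C)$ and hence $m\le 2(n^2-1)$.

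Next I would set up the elimination. Writing $N_i=u_i-I_n$, consider the morphism
$$
   \psi:\mathbb{A}^m\longrightarrow\GL_n(C)\subseteq\mathbb{A}^{n^2},\qquad
   \psi(s_1,\dots,s_m)=\prod_{i=1}^m\left(\sum_{k=0}^{n-1}\binom{s_i}{k}N_i^{k}\right),
$$
whose image is exactly $H$ and whose coordinate functions $\psi_{p,q}(s_1,\dots,s_m)$ are polynomials of total degree at most $m(n-1)<2n^3$. Let $J$ be the ideal of $C[s_1,\dots,s_m,x_{1,1},\dots,x_{n,n}]$ generated by the $n^2$ polynomials $x_{p,q}-\psi_{p,q}(s_1,\dots,s_m)$. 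Then $J$ is prime, since the quotient ring is isomorphic to $C[s_1,\dots,s_m]$; it is bounded by $2n^3$; and $\Zero(J)$ is the graph of $\psi$, so $J\cap C[x_{1,1},\dots,x_{n,n}]$ is the vanishing ideal of the Zariski closure of $H$ in $\mathbb{A}^{n^2}$. As $H$ is closed in $\GL_n(C)$, any generating set of $J\cap C[x_{1,1},\dots,x_{n,n}]$ defines $H$ inside $\GL_n(C)$.

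Finally I would apply Proposition~\ref{prop-elimination} together with Remark~\ref{remark-bounds} in the polynomial ring in $m+n^2\le 2(n^2-1)+n^2=3n^2-2$ variables: the elimination ideal $J\cap C[x_{1,1},\dots,x_{n,n}]$, and hence $H$ itself, is bounded by
$$
   \gamma(m+n^2,\,2n^3)<(2n^3+1)^{\,2^{m+n^2}}\le(2n^3+1)^{\,2^{3n^2}}=(2n^3+1)^{\,8^{n^2}},
$$
which is the asserted bound (the case $m=0$, that is $H=\{I_n\}$, being trivial). I do not expect a serious obstacle here; the only point that needs care is uniformity. Since the estimate of Proposition~\ref{prop-elimination} is doubly exponential in the number of variables, one must keep that number independent of $H$ and as small as possible, namely exactly $n^2$ matrix coordinates together with $m\le 2(n^2-1)$ one-parameter coordinates, so that $m+n^2<3n^2$ and therefore $2^{m+n^2}<2^{3n^2}=8^{n^2}$. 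Once the three ingredients are assembled this way --- the degree-$(n-1)$ parametrization of a unipotent one-parameter subgroup, the generation theorem with its linear-in-dimension bound, and the polynomial bound $\deg\psi<2n^3$ on the parametrization --- the conclusion follows directly from the elimination bound recorded above.
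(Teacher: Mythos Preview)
Your proof is correct and follows essentially the same approach as the paper's: realize $H$ as a product of at most $2\dim H$ one-parameter unipotent subgroups (Humphreys, p.~55), parametrize each factor by a polynomial of degree $n-1$, and apply the elimination bound of Proposition~\ref{prop-elimination} in at most $3n^2$ variables to an ideal bounded by $2n^3$. Your use of $H\subseteq\mathrm{SL}_n(C)$ to sharpen $m\le 2(n^2-1)$, and your choice of the binomial parametrization $\sum_k\binom{s}{k}N^k$ rather than the exponential one $\sum_k\frac{x^k}{k!}\mathbf{m}^k$, are cosmetic differences that do not affect the argument or the final bound.
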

  \begin{proof}
    Assume that $H$ is a subgroup generated by unipotent elements. Then by (p.55, Proposition, \cite{humphreys} ), it is the product of at most $2\dim(H)$ one-dimensional unipotent subgroups.
    From (p. 96, Lemma C, \cite{humphreys}), we know that one-dimension unipotent subgroup is of the form:
    $$
       I_n+\bfm x+\frac{\bfm^2}{2}x^2+\cdots+\frac{\bfm^{n-1}}{(n-1)!}x^{n-1},\,\, \bfm^n=0,\,\, x\in C,
   $$
   where $\bfm\in \Mat_n(C)$ and $\bfm^n=0$. Hence $H$ has a polynomial parametrized representation
   $$
      (Y)=\prod_{i=1}^{2\dim(H)}\left(I_n+\bfm_i x_i+\frac{\bfm_i^2}{2}x_i^2+\cdots+\frac{\bfm_i^{n-1}}{(n-1)!}x_i^{n-1}\right).
   $$
   Note that $\dim(H)\leq n^2$.
   Then the polynomials in the above system contains at most $3n^2$ variables and are of degree not greater than $2n^3$. Eliminating all $x_i$ in the above system, we obtain the defining ideal of $H$. Then Proposition \ref{prop-elimination} yields the desired bound.
  \end{proof}

  \begin{lemma}
  \label{lem-maxitori}
   Both $\calF_{mt}(\GL_n(C))$ and $\calF_{imt}(\GL_n(C))$ are bounded by 1.
  \end{lemma}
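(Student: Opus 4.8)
The plan is to reduce everything to the standard diagonal torus and to exploit the fact that conjugation by a fixed matrix is a linear operation on matrix entries. First I would recall the classical fact from the theory of linear algebraic groups that every maximal torus of $\GL_n(C)$ is conjugate to the diagonal torus $D=\{\diag(t_1,\dots,t_n)\mid t_i\in C^*\}$, and observe that $D=\Zero(\{x_{i,j}\mid i\neq j\})\cap\GL_n(C)$; since the $x_{i,j}$ are linear, $D$ is bounded by $1$.

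Next, for an arbitrary maximal torus $T=gDg^{-1}$ with $g\in\GL_n(C)$, I would note that a matrix $M$ lies in $T$ if and only if $g^{-1}Mg$ is diagonal, and that each off-diagonal entry of $g^{-1}Xg$ is a homogeneous linear polynomial in the indeterminates $x_{i,j}$, with coefficients built from the entries of $g$ and $g^{-1}$. Hence $T=\Zero(\{(g^{-1}Xg)_{i,j}\mid i\neq j\})\cap\GL_n(C)$ is cut out by $n(n-1)$ linear polynomials, so $\calF_{mt}(\GL_n(C))$ is bounded by $1$. Here one should check that this zero set, intersected with $\GL_n(C)$, really is $gDg^{-1}$ and nothing larger: the vanishing locus of those linear forms is the affine space $g\cdot(\text{diagonal matrices})\cdot g^{-1}$, and intersecting with $\GL_n(C)$ deletes exactly the non-invertible members (those with some $t_i=0$), leaving $gDg^{-1}$.

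For $\calF_{imt}(\GL_n(C))$, I would take an intersection $T=\bigcap_\alpha T_\alpha$ of maximal tori, write each $T_\alpha=\Zero(L_\alpha)\cap\GL_n(C)$ with $L_\alpha$ a finite set of linear polynomials as produced above, and observe that $T=\Zero(\bigcup_\alpha L_\alpha)\cap\GL_n(C)$. All of these linear polynomials lie in the finite-dimensional space $C[x_{1,1},\dots,x_{n,n}]_{\leq 1}$, so some finite subset of $\bigcup_\alpha L_\alpha$ spans the rest; replacing $\bigcup_\alpha L_\alpha$ by such a finite spanning subset yields a definition of $T$ by finitely many linear polynomials, and therefore $\calF_{imt}(\GL_n(C))$ is bounded by $1$.

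There is essentially no serious obstacle here. The only point that needs a word of care is the passage, in the $\calF_{imt}$ case, from a possibly infinite collection of defining linear equations to a finite one; this is exactly the finite-dimensionality of the degree-$\leq 1$ part of the polynomial ring (equivalently, Noetherianity of $C[x_{1,1},\dots,x_{n,n}]$). The rest is the observation that conjugating the generic matrix $X$ by a fixed invertible matrix sends linear polynomials to linear polynomials, which keeps the degree bound at $1$ under both the conjugation and the intersection operations.
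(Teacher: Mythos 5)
Your proof is correct and follows essentially the same route as the paper: reduce to the diagonal torus, observe that conjugation preserves linearity of the defining equations, and note that intersections of linear subspaces are still linear (with finite-dimensionality of the degree-$\leq 1$ part handling the passage to a finite defining set). You simply spell out the details that the paper leaves implicit.
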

  \begin{proof}
     Every maximal torus of $\GL_n(C)$ is conjugate to $(C^*)^n$. Hence it is equal to the intersection of $\GL_n(C)$ and a linear subspace of $\Mat_n(C)$. Consequently, $\calF_{mt}(\GL_n(C))$ is bounded by 1. As the intersection of linear subspaces of $\Mat_n(C)$ is still linear, any element of $\calF_{imt}(\GL_n(C))$ is the intersection of $\GL_n(C)$ and a linear subspace of $\Mat_n(C)$. So $\calF_{imt}(\GL_n(C))$ is also bounded by 1.
  \end{proof}
  \begin{lemma}
  \label{lem-kernels}
     Assume that $H$ is a connected subgroup of $\GL_n(C)$. Then $H^t$ is generated by all unipotent elements of $H$.
  \end{lemma}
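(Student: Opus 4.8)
The plan is to identify $H^t$ with the subgroup $H_u\subseteq H$ generated by all unipotent elements of $H$, proving the two inclusions separately. First I would record that $H_u$ is a closed, connected, normal subgroup of $H$. Since $\mathrm{char}\,C=0$, every unipotent $u\in H$ is of the form $\exp(N)$ with $N$ nilpotent, hence lies in the connected one-parameter unipotent subgroup $\{\exp(sN)\mid s\in C\}\subseteq H$ (see \cite{humphreys}); thus $H_u$ is generated by a family of connected subgroups, a family which is moreover stable under conjugation because conjugates of unipotent elements are unipotent. By the standard fact that the subgroup generated by a family of connected subgroups is again closed and connected, $H_u$ is closed and connected, and conjugation-stability gives $H_u\unlhd H$.

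The inclusion $H_u\subseteq H^t$ is then immediate: for every character $\chi\colon H\to C^*$ and every unipotent $u\in H$, the element $\chi(u)$ is a unipotent element of $\GL_1(C)$, hence equals $1$; so each unipotent element of $H$ lies in every $\ker(\chi)$, whence $H_u\subseteq\bigcap_{\chi\in X(H)}\ker(\chi)=H^t$.

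For the reverse inclusion the key step is to show that $H/H_u$ is a torus. Using a Levi decomposition $H=R_u(H)\rtimes L$ with $L$ connected reductive (valid in characteristic zero), one notes that $R_u(H)$ consists of unipotent elements and that the semisimple group $[L,L]$ is generated by unipotent elements (for instance by its root subgroups); hence $H_u\supseteq R_u(H)\cdot[L,L]=R_u(H)\rtimes[L,L]$, so $H/H_u$ is a quotient of $H/(R_u(H)\rtimes[L,L])\cong L/[L,L]$, which is a torus since $L$ is reductive. A quotient of a torus being a torus, $H/H_u$ is a torus. Since every character of $H$ kills $H_u$ by the previous paragraph, $X(H)$ is naturally identified with $X(H/H_u)$ and $H^t/H_u=(H/H_u)^t$; as the characters of a torus separate its points, $(H/H_u)^t=\{1\}$, and therefore $H^t=H_u$. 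The only ingredient requiring genuine structure theory is the assertion that $H/H_u$ is a torus — equivalently, that $H_u$ already absorbs both $R_u(H)$ and the derived group of a Levi factor; once that is in hand, the remaining bookkeeping with characters is purely formal, and I expect this structural step to be the main (and essentially only) obstacle.
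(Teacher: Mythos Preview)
Your proof is correct and follows essentially the same approach as the paper: both arguments use a Levi decomposition $H=R_u(H)\rtimes L$ and the fact that $R_u(H)$ and the derived subgroup $[L,L]$ are generated by unipotent elements, concluding that $H^t$ coincides with the subgroup they generate. The only difference is organizational: the paper outsources the identification $H^t=(P,P)\cdot R_u(H)$ to Lemma~2.1 of \cite{singer}, whereas you prove the equivalent statement (that $H/H_u$ is a torus) directly, making your argument more self-contained.
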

  \begin{proof}
      Since $H/H^t$ is a torus, there are no nontrivial unipotent elements in $H/H^t$. Hence all unipotent elements of $H$ are in $H^t$.
      From  Lemma 2.1 in \cite{singer}, $H^t$ is generated by $(P, P)$ and $R_u(H)$ where $P$ is a Levi factor of $H$ and $R_u(H)$ is the unipotent radical of $H$. Moreover $(P,P)$ is semi-simple, so it is generated by unipotent elements. Therefore $H^t$ is generated by all unipotent elements of $H$.
  \end{proof}
\subsection{Main results}
  Let $J(n)$ be a Jordan bound, so that every finite subgroup of $\GL_n(C)$ contains a normal abelian subgroup of index at most $J(n)$. In the following, we will show the main results in this appdendix.
   Denote
     \begin{equation}
     \label{notation-kappa12}
        \kappa_1=\max_i\left\{{{n^2+(2n^3+1)^{8^{n^2}} \choose n^2} \choose i}^2\right\}\quad \mbox{and}\quad
        \kappa_2=\kappa_1(2n^3+1)^{8^{n^2}}{n^2+(2n^3+1)^{8^{n^2}} \choose n^2}.
     \end{equation}
  \begin{prop}
  \label{prop-familygroups}
     There exists an integer $I(n)$ that is not greater than $J\left(\max_i\left\{{\kappa_1^2+1 \choose i}\right\}\right)$ and a family $\calF$ of subgroups of $\GL_n(C)$ bounded by
     \begin{equation}
        \label{eqn-bound1}
           \kappa_3\triangleq\kappa_2(\kappa_1^2+1)\max_i\left\{{\kappa_1^2+1 \choose i}\right\}
      \end{equation}
     with the following property. For every subgroup $H$ of $\GL_n(C)$, there is $H'\in \calF$ such that
      \begin{itemize}
       \item [$(a)$] $H^\circ \leq H'$.
       \item [$(b)$] $H$ normalizes $H'$; so $H' \unlhd HH' \leq \GL_n(C)$.
       \item [$(c)$] $[H:H\cap H']=[HH':H']\leq I(n)$.
       \item [$(d)$] Every unipotent element of $H'$ lies in $H^\circ$.
      \end{itemize}
  \end{prop}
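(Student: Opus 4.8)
The plan is to strip an arbitrary subgroup $H\le\GL_n(C)$ of its structure in three layers — a unipotent core, a toral layer, and a finite layer — replacing each layer by a member of a family that is bounded by $1$ or by a fixed function of $n$, and then pulling everything back through the two quotient morphisms supplied by Proposition~\ref{prop-embedding}. The constants $\kappa_1,\kappa_2$ of (\ref{notation-kappa12}) and the bound $\kappa_3$ of (\ref{eqn-bound1}) will be exactly the exponents produced by this bookkeeping, and $I(n)$ will be a Jordan bound.

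First I would handle the unipotent core. Set $U=(H^\circ)^t$; by Lemma~\ref{lem-kernels} it is generated by the unipotent elements of $H^\circ$, so $U\in\calF_{up}$ and is bounded by $(2n^3+1)^{8^{n^2}}$ (Lemma~\ref{lem-unipotent}), and being characteristic in $H^\circ$ it is connected and normal in $H$. Since $U$ lies in the bounded family $\calF_{up}$ and is normal in $N_{\GL_n(C)}(U)\supseteq H$, where $N_{\GL_n(C)}(U)\in\calF_{ad}(\GL_n(C))$ is bounded by $n$ (Lemma~\ref{lem-bound5}), Proposition~\ref{prop-embedding} gives a morphism $\tau\colon N_{\GL_n(C)}(U)\to\GL_{\kappa_1}(C)$ with $\ker\tau=U$, bounded by $\kappa_2$ — here $\kappa_1,\kappa_2$ are precisely the $d^{*},n^{*}$ of Proposition~\ref{prop-embedding} for the value $d=(2n^3+1)^{8^{n^2}}$. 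Because $U\le H^\circ$, the image $S:=\tau(H^\circ)$ is a torus and $\tau(H)$ has identity component $S$.

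Next comes the toral and finite layers. The torus $S$ carries no a priori degree bound, so I replace it by $T^{\ast}:=\bigcap\{\,T : T$ a maximal torus of $\GL_{\kappa_1}(C)$ with $S\le T\,\}$: this is a torus (the centre of the centralizer of $S$), it contains $S$, it lies in $\calF_{imt}(\GL_{\kappa_1}(C))$ and is therefore bounded by $1$ (Lemma~\ref{lem-maxitori}), and $\tau(H)$ normalizes it because $\tau(H)$ permutes the maximal tori through $S$. Applying Proposition~\ref{prop-embedding} again, now inside $\GL_{\kappa_1}(C)$ with the bounded family $\calF_{imt}(\GL_{\kappa_1}(C))$, produces $\sigma\colon N_{\GL_{\kappa_1}(C)}(T^{\ast})\to\GL_m(C)$ with $\ker\sigma=T^{\ast}$, where $m$ and the bound on $\sigma$ are the quantities of (\ref{notation-kappa12}) attached to $d=1$ in dimension $\kappa_1$. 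Since $S\le T^{\ast}$, the composite $\rho:=\sigma\circ\tau$ kills $H^\circ$, so $\rho(H)$ is a \emph{finite} subgroup of $\GL_m(C)$; by Jordan's theorem it has a normal abelian subgroup $\bar A$ of index at most $J(m)=:I(n)\le J\!\big(\max_i\binom{\kappa_1^2+1}{i}\big)$. To make $\bar A$ canonical and bounded, choose a maximal torus $T_0$ of $\GL_m(C)$ containing the diagonalizable group $\bar A$ and set $\hat T:=\bigcap_{g\in\rho(H)}gT_0g^{-1}$; then $\bar A\le\hat T$, $\rho(H)$ normalizes $\hat T$, $\hat T\in\calF_{imt}(\GL_m(C))$ is bounded by $1$, and $\hat T$ is diagonalizable. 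Now put $\tilde H:=\sigma^{-1}\big(\hat T\cap\sigma(N_{\GL_{\kappa_1}(C)}(T^{\ast}))\big)$ and $H':=\tau^{-1}\big(\tilde H\cap\tau(N_{\GL_n(C)}(U))\big)$, and let $\calF$ be the family of all $H'$ obtained in this way as $H$ varies. Properties (a)--(d) are then checked directly: $\rho(H^\circ)=\{e\}\le\hat T$ forces $H^\circ\le H'$; invariance of $\hat T$ under $\rho(H)$ together with $\ker\tau\subseteq H'$ forces $H$ to normalize $H'$; $[H:H\cap H']=[\rho(H):\rho(H)\cap\hat T]\le[\rho(H):\bar A]\le I(n)$; and if $u\in H'$ is unipotent then $\sigma(\tau(u))$ is a unipotent element of the diagonalizable (hence unipotent-free) group $\hat T$, so $\sigma(\tau(u))=e$, whence $\tau(u)\in T^{\ast}$ is a unipotent element of a torus, so $\tau(u)=e$ and $u\in\ker\tau=U\le H^\circ$. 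Finally, $\calF$ is bounded by $\kappa_3$: apply the preimage estimate Lemma~\ref{lem-morphisms}(b) twice — once to $\sigma$ and once to $\tau$, whose domains $N_{\GL_{\kappa_1}(C)}(T^{\ast})$ and $N_{\GL_n(C)}(U)$ range over the bounded families $\calF_{ad}$ of Lemma~\ref{lem-bound5} — starting from the bounded family $\calF_{imt}(\GL_m(C))$ containing $\hat T$, which multiplies the exponents into $\kappa_2\cdot(\kappa_1^2+1)\cdot\max_i\binom{\kappa_1^2+1}{i}=\kappa_3$.

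I expect the toral/finite interface to be the main obstacle: neither $S=H^\circ/(H^\circ)^t$ nor the Jordan subgroup $\bar A\unlhd\rho(H)$ carries a degree bound, so one cannot simply quotient by, or enlarge to, either of them inside a bounded family, and a torus-by-finite group need not be virtually abelian, so Jordan can only be applied after the torus is removed. The device that resolves this — replacing $S$ and $\bar A$ by intersections of maximal tori of the ambient general linear group ($T^{\ast}$ and $\hat T$) — is what lets the four requirements coexist: such intersections are bounded by $1$ (Lemma~\ref{lem-maxitori}), are automatically invariant under any group permuting the tori being intersected, remain large enough to keep the index under $I(n)$, and are diagonalizable, which after pulling back is exactly what yields property (d). The remaining effort is the careful tracking of the exponents through the two invocations of Proposition~\ref{prop-embedding}, which is what records the value $\kappa_3$ and the stated estimate for $I(n)$.
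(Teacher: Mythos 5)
Your proof is correct and follows essentially the same route as the paper's: strip $H$ to the unipotent core $(H^\circ)^t$ and quotient via Proposition~\ref{prop-embedding}, replace the resulting torus by an intersection of maximal tori and quotient again, apply Jordan's theorem to the finite quotient and replace the abelian normal subgroup by another $\calF_{imt}$ element, then pull back through the two morphisms using Lemma~\ref{lem-morphisms}(b). The only difference is presentational: the paper packages the inner two layers as separate Lemmas~\ref{lem-case-finite} and \ref{lem-case-torus} and invokes them, whereas you inline all three layers into one construction — mathematically identical, and your exponent bookkeeping reproduces the paper's stated constants (including its slight imprecision: tracking $d^*$ honestly gives $\max_i\binom{\kappa_1^2+1}{i}^2$ rather than $\max_i\binom{\kappa_1^2+1}{i}$ in both $\kappa_3$ and the argument of $J$, a discrepancy already present between the proposition statement and Lemma~\ref{lem-case-torus}).
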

  We will show the proposition by separating three cases.
  \begin{lemma}
    \label{lem-case-finite}
     Proposition \ref{prop-familygroups} is true for finite groups with $I(n)=J(n)$ and $\calF$ is bounded by 1.
  \end{lemma}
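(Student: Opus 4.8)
The plan is to deduce the finite case from Jordan's theorem, taking for $H'$ not the abelian normal subgroup that theorem produces but the algebraic torus spanned by that subgroup, which is automatically cut out by linear equations and contains no nontrivial unipotent element.

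Concretely, given a finite $H\leq \GL_n(C)$, first apply Jordan's theorem in the form defining $J(n)$ to obtain a normal abelian subgroup $N\unlhd H$ with $[H:N]\leq J(n)$. Since $C$ is algebraically closed of characteristic zero, the elements of $N$ are commuting semisimple matrices, hence simultaneously diagonalizable; fix $g\in \GL_n(C)$ with $gNg^{-1}$ diagonal. Put $A=\mathrm{span}_C(N)\subseteq \Mat_n(C)$ and $H'=A\cap \GL_n(C)$. Because $N$ is a group, $A$ is a commutative subalgebra of $\Mat_n(C)$ containing $I_n$, and $gAg^{-1}$ is a subalgebra of the diagonal matrices.

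Next I would record the properties of $H'$. It is an algebraic subgroup: it is closed under multiplication because $A$ is an algebra, and closed under inversion because for invertible $x\in A$ the inverse $x^{-1}$ is a polynomial in $x$ by Cayley--Hamilton, hence again lies in $A$. It is bounded by $1$, being the intersection of $\GL_n(C)$ with the linear subspace $A$ of $\Mat_n(C)$. It has no nontrivial unipotent element, since every element of $H'$ is conjugate via $g$ to an invertible diagonal matrix, hence is semisimple, and a semisimple unipotent matrix is $I_n$; as $H^\circ=\{I_n\}$ this already gives $(d)$. And $N\subseteq A\cap \GL_n(C)=H'$.

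Finally, let $\calF$ be the family of all algebraic subgroups of $\GL_n(C)$ bounded by $1$ (equivalently one may take $\calF=\calF_{imt}(\GL_n(C))$, bounded by $1$ by Lemma~\ref{lem-maxitori}, once one checks that the torus $H'$ above, being conjugate to a ``block'' subtorus $\{t_i=t_j\text{ within blocks}\}$ of the diagonal torus, arises as an intersection of maximal tori of $\GL_n(C)$, e.g.\ intersecting conjugates of the diagonal torus by the matrices $I_n+E_{ij}$). Then $H'\in\calF$, and: $(a)$ holds since $H^\circ=\{I_n\}\leq H'$; $(d)$ holds as shown; for $(b)$, normality of $N$ in $H$ gives $hAh^{-1}=\mathrm{span}_C(hNh^{-1})=\mathrm{span}_C(N)=A$ for every $h\in H$, hence $hH'h^{-1}=H'$, so $H$ normalizes $H'$ and $H'\unlhd HH'\leq \GL_n(C)$; and $(c)$ follows from $N\subseteq H\cap H'$, giving $[H:H\cap H']\leq[H:N]\leq J(n)$, so one may take $I(n)=J(n)$. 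I do not expect a genuine obstacle here; the only point to get right is the simultaneous fulfilment of the three constraints on $H'$ (bounded by $1$, no nontrivial unipotents, normalized by $H$), and the choice $A=\mathrm{span}_C(N)$ rather than $N$ itself is precisely what arranges all three at once, since $N$ on its own need not be bounded by $1$.
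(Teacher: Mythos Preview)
Your proof is correct and follows essentially the same approach as the paper: both use Jordan's theorem to produce a normal abelian subgroup and then pass to its ``torus hull'' as the desired $H'$. The only cosmetic difference is that the paper defines $H'$ as the intersection of all maximal tori containing the abelian subgroup (so $H'\in\calF_{imt}(\GL_n(C))$ directly), whereas you describe the same group as the units of its $C$-linear span; these coincide, and your parenthetical remark about $\calF_{imt}$ already bridges the two descriptions.
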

  \begin{proof}
     Assume that $H\subset \GL_n(C)$ is a finite group.
     Let $\bar{H}$ be a normal abelian subgroup of $H$ with $[H:\bar{H}]\leq J(n)$. As a finite abelian subgroup of $\GL_n(C)$ is diagonalizable, $\bar{H}$ is contained in some maximal tori of $\GL_n(C)$. Let $H'$ be the intersection of maximal tori containing $\bar{H}$. Then $H'\in \calF_{imt}(\GL_n(C))$. Clearly, $H$ normalizes $H'$. Since $\bar{H}\subseteq H\cap H'$, $[H:H'\cap H]\leq [H:\bar{H}]\leq J(n)$. The only unipotent element of $H'$ is the identity. So $(a)-(d)$ hold for $H,H'$. The lemma follows from the fact that $\calF_{imt}(\GL_n(C))$ is bounded by $1$.
  \end{proof}
  \begin{lemma}
   \label{lem-case-torus}
   Assume that $H$ is a subgroup whose identity component is a torus. Then
      Proposition \ref{prop-familygroups} is true for $H$ with $I(n)=J\left(\max_i\left\{{n^2+1 \choose i}^2\right\}\right)$ and $\calF$ is bounded by
      $$(n^2+1)\max_i\left\{{n^2+1 \choose i}^2\right\}.$$
  \end{lemma}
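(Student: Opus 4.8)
The plan is to pass to a quotient in which $H$ becomes finite, apply the Jordan bound there, and pull the resulting abelian-by-finite picture back through the embedding machinery of Proposition~\ref{prop-embedding}, keeping every group that appears inside a bounded family. Write $S=H^\circ$, a torus, and let $R$ be the intersection of all maximal tori of $\GL_n(C)$ containing $S$. Then $S\subseteq R$ and $R\in\calF_{imt}(\GL_n(C))$, a family bounded by $1$ by Lemma~\ref{lem-maxitori}. Since $H$ normalizes $S$ it permutes the maximal tori containing $S$, hence it normalizes $R$; thus $H\subseteq N:=N_{\GL_n(C)}(R)$, and because $R$ is connected, Lemma~\ref{lem-bound5} puts $N$ in $\calF_{ad}(\GL_n(C))$, a family bounded by $n$.

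Next I would apply Proposition~\ref{prop-embedding} with $d=1$ and $\calF=\calF_{imt}(\GL_n(C))$ to the pair $(N,R)$: it produces a morphism $\tau:=\tau_{N,R}\colon N\to\GL_{d^*}(C)$ with $\ker\tau=R$, bounded by $n^*=(n^2+1)\max_i\{{n^2+1 \choose i}^2\}$, where $d^*=\max_i\{{n^2+1 \choose i}^2\}$. Since $S=H^\circ\subseteq R=\ker\tau$, the image $\bar H:=\tau(H)\cong H/(H\cap R)$ is a \emph{finite} subgroup of $\GL_{d^*}(C)$, so the Jordan bound gives an abelian normal subgroup $\bar A\unlhd\bar H$ with $[\bar H:\bar A]\le J(d^*)$; this is where $I(n)=J(d^*)$ enters. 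As $\bar A$ is finite abelian it is diagonalizable, so I take $\bar T$ to be the intersection of all maximal tori of $\GL_{d^*}(C)$ containing $\bar A$: then $\bar T\in\calF_{imt}(\GL_{d^*}(C))$ (bounded by $1$), $\bar A\subseteq\bar T$, and $\bar H$ normalizes $\bar T$ because it normalizes $\bar A$.

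Now set $H':=\tau^{-1}\bigl(\bar T\cap\tau(N)\bigr)$ and verify the four properties. (a) $S\subseteq R=\ker\tau\subseteq H'$. (b) $\bar H$ normalizes $\bar T\cap\tau(N)$, and $H\subseteq\tau^{-1}(\bar H)$, so $H$ normalizes $H'$. (c) $H\cap H'=\{h\in H:\tau(h)\in\bar T\}$ contains $\ker(\tau|_H)=H\cap R$, so $[H:H\cap H']=[\bar H:\bar H\cap\bar T]\le[\bar H:\bar A]\le J(d^*)=I(n)$. (d) if $u\in H'$ is unipotent then $\tau(u)$ is unipotent and lies in the diagonalizable group $\bar T$, hence $\tau(u)=I$ and $u\in\ker\tau=R$, a torus, so $u$ is semisimple and therefore $u=I\in H^\circ$. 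Finally, letting $\calF$ be the family of all groups $\tau_{N',R'}^{-1}\bigl(\bar T'\cap\tau_{N',R'}(N')\bigr)$ with $R'\in\calF_{imt}(\GL_n(C))$, $N'=N_{\GL_n(C)}(R')$ and $\bar T'\in\calF_{imt}(\GL_{d^*}(C))$, Lemma~\ref{lem-morphisms}(b) — with domain family $\calF_{ad}(\GL_n(C))$ bounded by $n$, morphisms bounded by $n^*$, and target family $\calF_{imt}(\GL_{d^*}(C))$ bounded by $1$ — shows $\calF$ is bounded by $\max\{n,n^*\}=n^*=(n^2+1)\max_i\{{n^2+1 \choose i}^2\}$, which is exactly the asserted bound.

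The step I expect to be most delicate is lining up the bounded-family bookkeeping rather than any single group-theoretic fact: the quotient $\tau$ must be built on $N_{\GL_n(C)}(R)$, which Lemma~\ref{lem-bound5} places in the bounded family $\calF_{ad}(\GL_n(C))$, rather than on the group $HR$ (which is not a priori bounded), so that the preimage $H'$ ends up in a bounded family via Lemma~\ref{lem-morphisms}(b). The conceptual point that makes Jordan's theorem applicable at all is simply the observation that $H^\circ\subseteq\ker\tau$ forces $\tau(H)$ to be finite.
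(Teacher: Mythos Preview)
Your proof is correct and follows essentially the same route as the paper's: your $R$, $N$, $\tau$, $\bar T$, $H'$ are exactly the paper's $M$, $N$, $\tau_{N,M}$, $\tilde M$, $H'$, and the degree bookkeeping via Lemma~\ref{lem-morphisms}(b) matches verbatim. The only cosmetic difference is that you unpack the finite case (Jordan bound plus passage to the intersection of maximal tori containing $\bar A$) inline, whereas the paper packages that step as a citation of Lemma~\ref{lem-case-finite}.
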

  \begin{proof}
      Let $M=(H^\circ)_{\calF_{imt}(\GL_n(C))}$ and  $N=N_{\GL_n(C)}(M)$. It is easy to verify that $H$ normalizes $M$ and thus $H\subseteq N$. Since $M$ lies in the family $\calF_{imt}(\GL_n(C))$
      bounded by $1$, Lemma \ref{lem-bound5} implies that $N$ lies in a family $\calF_{ad}(\GL_n(C))$ bounded by $n$. Let $\tilde{n}=\max_i\{{n^2+1\choose i}^2\}$. By Proposition \ref{prop-embedding}, there is a morphism
     $$
         \tau_{N,M}: N \longrightarrow \GL_{\tilde{n}}(C)
     $$
     satisfies that $\ker(\tau_{N,M})=M$ and $\tau_{N,M}$ is bounded by $\tilde{n}(n^2+1)$.
     As $H^\circ \subseteq M$, $\tau_{N,M}(H)$ is a finite subgroup of $\GL_{\tilde{n}}(C)$. From Lemma \ref{lem-case-finite}, there is $\tilde{M}\in \calF_{imt}(\GL_{\tilde{n}}(C))$ such that $(a)$-$(c)$ hold for $\tau_{N,M}(H), \tilde{M}$ (with $I(\tilde{n})=J(\tilde{n})$). Let $H'=\tau_{N,M}^{-1}(\tilde{M}\cap \tau_{N,M}(N))$. Note that $\calF_{imt}(\GL_{\tilde{n}}(C))$ is bounded by 1. By Lemma \ref{lem-morphisms}, $H'$ is bounded by $\tilde{n}(n^2+1)$. We will show that the $H'$ satisfy $(a)$-$(d)$ with $I(n)=J(\tilde{n})$. It is clear that $H^\circ \leq H'$. For any $h\in H$, since $T_{M,N}(H)$ normalizes $\tilde{M}$ and $\tau_{N,M}(N)$,
     $$
        \tau_{N,M}(hH'h^{-1})=\tau_{N,M}(h)(\tilde{M}\cap \tau_{N,M}(N))\tau_{N,M}(h)^{-1}=\tilde{M}\cap \tau_{N,M}(N)=\tau_{N,M}(H').
     $$
     Therefore $hH'h^{-1} \subseteq H'$. This indicate that $hH'h^{-1}=H'$ for any $h\in H$. In other words, $H$ normalizes $H'$. This proves $(b)$.
     Since both $HH'$ and $H'$ contain $M$,
     \begin{align*}
        [HH':H']&=[\tau_{N,M}(HH'):\tau_{N,M}(H')]=[\tau_{N,M}(H)\tau_{N,M}(H'):\tau_{N,M}(H')]\\
        &=[\tau_{N,M}(H):\tau_{N,M}(H)\cap\tau_{N,M}(H')]=[\tau_{N,M}(H):\tilde{M}\cap\tau_{N,M}(H)]\leq J(\tilde{n}).
     \end{align*}
     This proves $(c)$. Suppose that $h'$ is an unipotent element of $H'$. Then $\tau_{N,M}(h')$ is an unipotent element of $\tilde{M}$. However $\tilde{M}$ consists of semi-simple elements. Hence $\tau_{N,M}(h')=1$. Then $h'\in M$. But $M$ is contained in a torus, so $h'=1$. This proves $(d)$.
     \end{proof}

     Now we are ready to prove Proposition \ref{prop-familygroups} for the general case.
     \begin{proof}
     Let $U=(H^\circ)^t$. By Lemma \ref{lem-kernels}, $U$ is generated by unipotent elements. Then it follows from Lemma \ref{lem-unipotent} that $U$ is bounded by $(2n^3+1)^{8^{n^2}}$. Let $N=N_{\GL_n(C)}(U)$. Lemma \ref{lem-bound5} indicates that $N$ lies in $\calF_{ad}(\GL_n(C))$ that is bounded by $n$. Let $\kappa_1$ and $\kappa_2$ be as in (\ref{notation-kappa12}).
     Using Proposition \ref{prop-embedding} again, there is a morphism
     $$
         \phi_{N,U}: N \longrightarrow \GL_{\kappa_1}(C)
     $$
     such that $\ker(\phi_{N,U})=U$ and $\phi_{N,U}$ is bounded by $\kappa_2$.
     We first prove that $H \leq N$. For any $h\in H$ and any character $\chi$ of $H^\circ$, $\chi(hXh^{-1})$ is a character of $H^\circ$. Hence for any $u\in U$, $\chi(huh^{-1})=1$. So $huh^{-1}\in U$ for any $u\in U$. In other words, $H$ normalizes $U$. So $H\leq N$. As $H^\circ/U$ is a torus, $\phi_{N,U}(H)^\circ$ is a torus in $\GL_{\kappa_1}(C)$. Lemma \ref{lem-case-torus} implies that there is $M' \leq \GL_{\kappa_1}(C)$ bounded by $(\kappa_1^2+1)\max_i\left\{{\kappa_1^2+1 \choose i}^2\right\}$ such that $(a)$-$(d)$ hold for $\phi_{N,U}(H), M'$ with $I(n)=J\left(\max_i\left\{{\kappa_1^2+1 \choose i}\right\}\right)$. Let $H'=\phi^{-1}_U(M'\cap \phi_{N,U}(N))$. Then by Lemma \ref{lem-morphisms}, $H'$ is bounded by $\kappa_3$, where $\kappa_3$ is defined in (\ref{eqn-bound1}).
      The similar arguments as in the proof of Lemma \ref{lem-case-torus} implies $(a)$-$(c)$ hold for $H,H'$ with $I(n)=J\left(\max_i\left\{{\kappa_1^2+1 \choose i}\right\}\right)$. Now let us show that $(d)$ holds. Assume that $u$ is an unipotent element of $H'$. Then $\phi_{N,U}(u)$ is an unipotent element of $M'$. Since $M'$ and $\phi_{N,U}(H)^\circ$ satisfy $(d)$, $\phi_{N,U}(u)\in \phi_{N,U}(H)^\circ$. Whereas $\phi_{N,U}(H)^\circ$ is a torus, $\phi_{N,U}(u)=1$. Thus $u\in U\subseteq H^\circ$.
  \end{proof}
  \begin{prop}
  \label{prop-boundedfamily}
     Let $I(n)$ and $\kappa_3$ be as in Proposition \ref{prop-familygroups}.
     Then there exists a family $\tilde{\calF}$ of subgroups of $\GL_n(C)$ bounded by
     $$
        \tilde{d}\triangleq (\kappa_3)^{I(n)-1}
     $$
      with the following property. For any subgroup $H$ of $\GL_n(C)$, there exists $\tilde{H}\in \tilde{\calF}$ such that $H \leq \tilde{H}$, and every unipotent element of $\tilde{H}$ lies in $H^\circ$.
  \end{prop}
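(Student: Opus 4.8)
The plan is to obtain $\tilde\calF$ from the family $\calF$ of Proposition \ref{prop-familygroups} by the elementary device of replacing, for each $H$, the group $H'$ attached to it in that proposition by the product $HH'$. Concretely, I would take
$$
\tilde\calF \;=\; \{\, \tilde H \le \GL_n(C) \,:\, \exists\, H' \in \calF \text{ with } H' \unlhd \tilde H \text{ and } [\tilde H : H'] \le I(n) \,\}.
$$
Given an arbitrary algebraic subgroup $H$ of $\GL_n(C)$, choose $H'\in\calF$ as in Proposition \ref{prop-familygroups}, so that $(a)$--$(d)$ hold, and set $\tilde H := HH'$. By $(b)$ the group $H$ normalizes $H'$, so $HH'$ is a subgroup of $\GL_n(C)$; by $(c)$ it is the union of the $[HH':H'] = [H:H\cap H'] \le I(n)$ cosets of the Zariski closed subgroup $H'$, hence is itself Zariski closed, that is, an algebraic subgroup. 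Since $H' \unlhd HH'$ and $[HH':H'] \le I(n)$, we get $\tilde H \in \tilde\calF$, and trivially $H \le \tilde H$.

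Next I would check that every unipotent element of $\tilde H$ lies in $H^\circ$. As $H' \unlhd \tilde H$ and $\tilde H/H'$ is finite, the image of the connected group $\tilde H^\circ$ in $\tilde H/H'$ is trivial, so $\tilde H^\circ \subseteq H'$. On the other hand, in characteristic zero every unipotent element of an algebraic group lies in its identity component (a nontrivial unipotent $u$ lies in the connected one-dimensional subgroup $\{\exp(t\log u):t\in C\}$). Hence every unipotent element of $\tilde H$ lies in $\tilde H^\circ \subseteq H'$, so it is a unipotent element of $H'$, and by $(d)$ it lies in $H^\circ$, as required.

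It then remains to bound the family $\tilde\calF$. Any $\tilde H \in \tilde\calF$ is a union of at most $I(n)$ cosets $g_iH'$ of some $H'\in\calF$, and by Proposition \ref{prop-familygroups} the group $H'$ is defined by polynomials of degree $\le \kappa_3$; translating these polynomials presents each coset $g_iH'$ with the same degree bound, and forming the products of one defining polynomial from each coset (equivalently, adjoining the cosets to $H'$ one at a time and using the degree bound for a union of two varieties) presents $\tilde H$ by polynomials whose degrees, after the routine bookkeeping, do not exceed $\tilde d = \kappa_3^{I(n)-1}$. Thus $\tilde\calF$ is bounded by $\tilde d$, and the proposition follows.

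I expect the heart of the argument to be the second step, not the degree count. The choice $\tilde H = HH'$ is essentially forced, but it succeeds only because of the characteristic-zero fact that unipotent elements cannot escape the identity component: this is what allows the unipotent clause $(d)$, which Proposition \ref{prop-familygroups} guarantees only for the smaller group $H'$, to carry over to the finite overgroup $HH'$, via $\tilde H^\circ \subseteq H'$. Without that observation one would be tempted to hunt for a more intricate closure construction. The remaining points---closedness of $HH'$, membership in $\tilde\calF$, and the degree estimate for a union of cosets---are routine given Proposition \ref{prop-familygroups} and the elimination bounds of Proposition \ref{prop-elimination}.
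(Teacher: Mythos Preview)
Your proof is correct and follows essentially the same route as the paper: define $\tilde\calF$ as the groups containing some $H'\in\calF$ as a normal subgroup of index at most $I(n)$, take $\tilde H=HH'$, observe $\tilde H^\circ\subseteq H'$ so that unipotents of $\tilde H$ land in $H'$ and then in $H^\circ$ via $(d)$, and bound $\tilde\calF$ by the product-of-defining-polynomials trick for a union of cosets. Your write-up is in fact more detailed than the paper's (you justify why unipotents lie in the identity component and why $HH'$ is closed), but the argument is the same.
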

  \begin{proof}
      Let $\calF$ be as in Proposition \ref{prop-familygroups} and
      $$
         \tilde{\calF}=\{\bar{H}\,\, |\,\, \exists\,\, M\in \calF, M \unlhd \bar{H}, [\bar{H}:M]\leq I(n) \}.
      $$
       Every element of $\tilde{\calF}$ is the union of at most $I(n)$ cosets of some element in $\calF$. It is well-known that the union of two varieties is defined by the product of their corresponding defining polynomials.
      Hence $\tilde{\calF}$ is bounded by $\tilde{d}$.
      Assume that $H$ is a subgroup of $\GL_n(C)$. Let $H'$ be an element in $\calF$ such that $(a)$-$(d)$ in Proposition \ref{prop-familygroups} hold for $H,H'$. Let $\tilde{H}=HH'$. Then $\tilde{H}\in \tilde{\calF}$ by Proposition~\ref{prop-familygroups} (c). The unipotent elements of $\tilde{H}$ lie in $\tilde{H}^\circ$ and then in $(H')^\circ$. As the unipotent elements of $H'$ lie in $H^\circ$, so do the unipotent elements of $\tilde{H}$.
  \end{proof}
  \begin{cor}
  \label{cor-boundeddegree}
     Let $\tilde{\calF}$ be the family as in Proposition \ref{prop-boundedfamily}. Then for any subgroup $H$ of $\GL_n(k)$, there is $\tilde{H}\in \tilde{\calF}$ such that $$ (\tilde{H}^\circ)^t \unlhd H^\circ \leq H \leq \tilde{H}.$$
  \end{cor}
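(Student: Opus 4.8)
The plan is to deduce the corollary almost immediately from Proposition \ref{prop-boundedfamily}, using Lemma \ref{lem-kernels} to handle the normal-subgroup clause. Given an algebraic subgroup $H$ of $\GL_n(C)$ (the $\GL_n(k)$ in the statement should be read as $\GL_n(C)$, since the corollary will be applied to $\calG_F$), Proposition \ref{prop-boundedfamily} produces $\tilde{H}\in\tilde{\calF}$, bounded by $\tilde{d}$, with $H\leq\tilde{H}$ and with the property that every unipotent element of $\tilde{H}$ lies in $H^\circ$. This already gives the right-hand part of the claimed chain, $H^\circ\leq H\leq\tilde{H}$, so the work reduces to establishing $(\tilde{H}^\circ)^t\unlhd H^\circ$.

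For the inclusion $(\tilde{H}^\circ)^t\subseteq H^\circ$, I would invoke Lemma \ref{lem-kernels}: since $\tilde{H}^\circ$ is connected, $(\tilde{H}^\circ)^t$ is generated by the unipotent elements of $\tilde{H}^\circ$. Each such element is in particular a unipotent element of $\tilde{H}$, hence lies in $H^\circ$ by the property furnished by Proposition \ref{prop-boundedfamily}; therefore the whole subgroup $(\tilde{H}^\circ)^t$ they generate is contained in $H^\circ$.

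For normality, the key bookkeeping point is that $H^\circ$, being connected and contained in $\tilde{H}$, is contained in $\tilde{H}^\circ$. By construction $(\tilde{H}^\circ)^t$ is normal in $\tilde{H}^\circ$, and a normal subgroup of a group remains normal in any subgroup that contains it; applying this with $(\tilde{H}^\circ)^t\subseteq H^\circ\leq\tilde{H}^\circ$ yields $(\tilde{H}^\circ)^t\unlhd H^\circ$. Concatenating the three relations gives $(\tilde{H}^\circ)^t\unlhd H^\circ\leq H\leq\tilde{H}$.

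I do not expect any real obstacle here: the corollary is essentially a restatement of Proposition \ref{prop-boundedfamily} combined with the structural identification (Lemma \ref{lem-kernels}) of $(\tilde{H}^\circ)^t$ with the subgroup generated by unipotent elements of $\tilde{H}^\circ$. The only step needing a little care is tracking identity components — verifying $H^\circ\subseteq\tilde{H}^\circ$ so that the normality of $(\tilde{H}^\circ)^t$ descends from $\tilde{H}^\circ$ to $H^\circ$.
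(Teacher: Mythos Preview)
Your proposal is correct and follows essentially the same approach as the paper's proof: invoke Proposition~\ref{prop-boundedfamily} to obtain $\tilde{H}$, use Lemma~\ref{lem-kernels} to see that $(\tilde{H}^\circ)^t$ is generated by unipotent elements (hence contained in $H^\circ$), and use normality of $(\tilde{H}^\circ)^t$ in $\tilde{H}^\circ$ to conclude. Your version is slightly more explicit in justifying $H^\circ\subseteq\tilde{H}^\circ$ for the normality step, and your observation that $\GL_n(k)$ should read $\GL_n(C)$ is correct.
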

  \begin{proof}
     By Proposition \ref{prop-boundedfamily}, there is $\tilde{H}\in \tilde{\calF}$ such that $H\leq \tilde{H}$ and the unipotent elements of $\tilde{H}$ lie in $H^\circ$. Then the corollary follows from Lemma \ref{lem-kernels} and the fact that $(\tilde{H}^\circ)^t$ is normal in $\tilde{H}^\circ$.
  \end{proof}
  In the following, $H$ is assumed to be connected.
  Proposition \ref{prop-embedding} allows us to bound the degrees of generators of $X(H)$. $X(H)$ can be viewed as a subset of $C[H]$, the coordinate ring of $H$. The morphism $\varphi: H \rightarrow H'$ induces a group homomorphism $\varphi^\circ: X(H')\rightarrow X(H)$.
  \begin{prop}
  \label{prop-boundforcharacters}
   Let $\kappa_2$ be as in (\ref{notation-kappa12}). Then there are generators of $X(H)$, which are represented by polynomials bounded by $\kappa_2$.
  \end{prop}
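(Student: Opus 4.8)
The plan is to specialize Proposition~\ref{prop-embedding} to the pair $H^t \unlhd H$. Since $H$ is connected, Lemma~\ref{lem-kernels} gives that $H^t$ is generated by its unipotent elements, so $H^t$ lies in $\calF_{up}$ and hence, by Lemma~\ref{lem-unipotent}, is bounded by $d_0 := (2n^3+1)^{8^{n^2}}$; and $H^t$ is normal in $H$, being the intersection of the kernels of all characters of $H$. Applying Proposition~\ref{prop-embedding} to the one-element family $\calF = \{H^t\}$ (bounded by $d_0$) and to $H' = H$ produces a morphism $\tau := \tau_{H,H^t}\colon H \to \GL_{\kappa_1}(C)$ with $\ker(\tau) = H^t$, of the form $\tau(X) = \bigl(P_{k,l}(X)/Q(X)\bigr)_{k,l}$ with $\deg P_{k,l} \le \kappa_2$ and $\deg Q \le \kappa_2$, and with $Q(h) = \det([h])$ for $h \in H$. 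Here one uses the identity $\binom{n^2+d_0}{d_0} = \binom{n^2+d_0}{n^2}$ to see that the integers called $d^\ast$ and $n^\ast$ in Proposition~\ref{prop-embedding} equal $\kappa_1$ and $\kappa_2$, respectively.

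The image $T := \tau(H)$ is a closed subgroup of $\GL_{\kappa_1}(C)$ isomorphic to $H/H^t$; since $H$ is connected and the characters of $H$ separate the points of $H/H^t$, the group $T$ is a torus, and pullback along $\tau$ is an isomorphism $\tau^\ast\colon X(T) \xrightarrow{\ \sim\ } X(H)$ (every character of $H$ kills $H^t$ by definition of $H^t$). So it suffices to exhibit generators of $X(T)$ whose $\tau^\ast$-images have polynomial representatives of degree $\le \kappa_2$. Here I would diagonalize: choose $g \in \GL_{\kappa_1}(C)$ with $gTg^{-1}$ contained in the diagonal torus. Restriction of characters from a torus to a closed subgroup is surjective, so $X(gTg^{-1})$ is generated by the restrictions of the coordinate characters $Y \mapsto Y_{i,i}$; conjugating back, $X(T)$ is generated by $\chi_1,\dots,\chi_{\kappa_1}$, where $\chi_i$ is the restriction to $T$ of the polynomial $\ell_i(Y) = (gYg^{-1})_{i,i}$, a linear form in the entries of $Y$ with constant coefficients and no determinant in the denominator.

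Pulling back, $\chi_i\circ\tau$ is represented on $H$ by $\ell_i(\tau(X)) = N_i(X)/Q(X)$, where $N_i(X) = \sum_{k,l} g_{i,k}(g^{-1})_{l,i}\,P_{k,l}(X) \in C[x_{1,1},\dots,x_{n,n}]$ has degree $\le \kappa_2$. The crucial point is that the common denominator $Q(X)$, when restricted to $H$, is itself a polynomial representative of a character of $H$, namely $\psi_0\colon h \mapsto \det([h])$: this is a homomorphism $H \to C^\ast$, it is nowhere zero on $H$ since $[h]$ is invertible, and $\deg Q \le \kappa_2$. Consequently $N_i(X)\big|_H = (\chi_i\circ\tau)\cdot\psi_0$ is a product of characters of $H$, hence itself a character $\psi_i \in X(H)$, with the polynomial representative $N_i(X)$ of degree $\le \kappa_2$.

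Finally, $X(H) = \langle \chi_1\circ\tau,\dots,\chi_{\kappa_1}\circ\tau\rangle = \langle \psi_1\psi_0^{-1},\dots,\psi_{\kappa_1}\psi_0^{-1}\rangle$, so $X(H)$ is also generated by $\{\psi_0,\psi_1,\dots,\psi_{\kappa_1}\}$: each $\psi_i\psi_0^{-1}$ lies in the subgroup generated by this set, while conversely every $\psi_j$ is an element of $X(H)$, so the two subgroups of $X(H)$ coincide. All of $\psi_0,\dots,\psi_{\kappa_1}$ are represented by polynomials in $x_{1,1},\dots,x_{n,n}$ of degree at most $\kappa_2$, which is the assertion. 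I expect the only genuine obstacle to be the denominator-clearing step of the third paragraph — recognizing that the denominator $Q$ of the rational morphism $\tau$ is, on $H$, a polynomial of controlled degree representing a character — since otherwise the pullbacks of the coordinate characters would be merely regular functions on $H$ with no obvious degree bound; everything else is bookkeeping with the estimates already established in Proposition~\ref{prop-embedding}.
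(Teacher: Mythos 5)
Your proposal is correct and follows essentially the same route as the paper: apply Proposition~\ref{prop-embedding} to $H^t \unlhd H$ (using Lemmas~\ref{lem-kernels} and~\ref{lem-unipotent} to bound $H^t$), observe that the common denominator $Q^{H,H^t}(X)=\det([h])$ is itself a character of $H$ of degree $\le\kappa_2$, note that $X(\tau_{H,H^t}(H))$ is generated by linear forms because the image is a torus, and clear denominators by multiplying the pulled-back linear characters by $Q^{H,H^t}$. The only cosmetic differences are that you make the linear generators explicit via conjugation to the diagonal torus and you note that $\tau^\ast$ is an isomorphism rather than merely surjective, neither of which changes the argument.
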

  \begin{proof}
     From Lemmas \ref{lem-kernels} and \ref{lem-unipotent}, $H^t$ is bounded by $(2n^3+1)^{8^{n^2}}$. By Proposition \ref{prop-embedding}, there is a morphism $\tau_{H,H^t}: H \rightarrow \GL_{\kappa_1}(C)$ satisfying that
  $\ker(\tau_{H,H^t})=H^t$ and $\tau_{H,H^t}$ is bounded by $\kappa_2$, where $\kappa_1$ is as in (\ref{notation-kappa12}). $\tau_{H,H^t}$ is of the from
  $$
     \left(\frac{P^{H,H^t}_{i,j}(X)}{Q^{H,H^t}(X)}\right)\quad \mbox{where}\,\,\deg(P^{H,H^t}_{i,j}(X)) \leq \kappa_2,\,\, \deg(Q^{H,H^t}(X))\leq \kappa_2.
  $$
  From the proof of Proposition \ref{prop-embedding}, $Q^{H,H^t}(I_n)=1$ and for any $h,h'\in H$,
  $$Q^{H,H^t}(hh')=\det([hh'])=\det([h][h'])=\det([h])\det([h'])=Q^{H,H^t}(h)Q^{H,H^t}(h').$$
  It implies that $Q^{H,H^t}(X)\in X(H)$. Notice that $X((C^*)^{\kappa_1})$ is generated by the characters $y_1,\cdots, y_{\kappa_1}$ and so is the group of characters of any its subgroup.
  Since $\tau_{H,H^t}(H)$ is a torus in $\GL_{\kappa_1}(C)$, it is conjugate to a subgroup of $(C^*)^{\kappa_1}$. So $X(\tau_{H,H^t}(H))$ is generated by some linear polynomials. $\tau_{H,H^t}$ induces a group homomorphism:
  \begin{align*}
     \tau_{H,H^t}^\circ: X(\tau_{H,H^t}(H))&\rightarrow X(H)\\
                             \chi'&\rightarrow \chi'\circ \tau_{H,H^t}
  \end{align*}
  For any $\chi \in X(H)$ and $h\in H$, since $\chi(hh')=\chi(h)$ for all $h'\in H^t$, there is $g\in C[\tau_{H,H^t}(H)]$ such that
  $g\circ\tau_{H,H^t}=\chi$. One can verify that $g$ is actually a character of $\tau_{H,H^t}(H)$. Therefore $\tau_{H,H^t}^\circ$ is surjective. Let $L_1,\cdots, L_s$ be linear polynomials, which generate $X(\tau_{H,H^t}(H))$. Then $L_1\circ \tau_{H,H^t}, \cdots, L_s\circ \tau_{H,H^t} $ generate $X(H)$. Since $Q^{H,H^t}(X) \in X(H)$,
  $$Q^{H,H^t}(X), (L_1\circ\tau_{H,H^t})Q^{H,H^t}(X), \cdots, (L_s\circ\tau_{H,H^t})Q^{H,H^t}(X)$$
  still generate $X(H)$ and are polynomials bounded by $\kappa_2$.
  \end{proof}
 Let $P_1(X),\cdots, P_l(X)$ be in $C[x_{1,1}, x_{1,2}, \cdots, x_{n,n}]_{\leq \kappa_2}$ such that their images constitute a $C$-basis of  $C[x_{1,1}, x_{1,2}, \cdots, x_{n,n}]_{\leq \kappa_2}/(I(H))_{\leq \kappa_2}$, where $I(H)$ is the set of polynomials in $C[x_{1,1}, x_{1,2}, \cdots, x_{n,n}]$ that vanishes on $H$. Let $c_1,\cdots,c_l$ be indeterminates and $P_\bfc(X)=\sum_{i=1}^l c_i P_i(X)$, where
 $\bfc=(c_1,\cdots, c_l)$. Then the conditions
 $$P_\bfc (I_n)=1 \,\,\mbox{and}\,\,\forall \,\,h,h'\in H, \,\,P_\bfc(h)P_\bfc(h')-P_\bfc(hh')=0$$ induce a system of algebraic equations $J$ for $\bfc$. Moreover, we have the following proposition.
 \begin{prop}
    \label{prop-characters}
    $\dim(J)=0$ and for each zero $\bar{\bfc}$ of $J$ in $C^l$, $P_{\bar{\bfc}}(X)$ is a character of $H$.
 \end{prop}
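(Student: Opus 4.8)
The plan is to identify $\Zero(J)\subseteq C^l$ with a finite nonempty set of coordinate vectors whose associated polynomial restricts to a character of $H$. First I would unwind the construction of $J$: expanding $P_\bfc(h)P_\bfc(h')-P_\bfc(hh')$ as a polynomial in the coordinates of $h,h'$ ranging over $H$ (i.e. working modulo the vanishing ideal of $H\times H$) and equating its coefficients to zero yields finitely many polynomial equations in $\bfc$, and together with $P_\bfc(I_n)-1=0$ these are, by construction, precisely the equations of $J$. Consequently $\bar\bfc\in\Zero(J)$ exactly when $P_{\bar\bfc}(I_n)=1$ and $P_{\bar\bfc}(h)P_{\bar\bfc}(h')=P_{\bar\bfc}(hh')$ for all $h,h'\in H$.

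Given such a $\bar\bfc$, I claim $P_{\bar\bfc}$ is a character of $H$. It is a polynomial, hence a regular function on $H$; it is not identically zero on $H$ since $P_{\bar\bfc}(I_n)=1$; and multiplicativity gives $P_{\bar\bfc}(h)P_{\bar\bfc}(h^{-1})=P_{\bar\bfc}(I_n)=1$ for every $h\in H$, so $P_{\bar\bfc}$ takes its values in $C^*$. Thus $P_{\bar\bfc}\colon H\to C^*$ is a regular group homomorphism, i.e. an element of $X(H)$. This already proves the second assertion.

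It remains to prove $\dim(J)=0$, for which I would show $\Zero(J)$ is finite and nonempty. Nonemptiness: the constant polynomial $1$ has degree $0\le\kappa_2$, hence equals $\sum_i\bar c_iP_i$ modulo $(I(H))_{\le\kappa_2}$ for a unique $\bar\bfc$, and this $\bar\bfc$ satisfies the defining conditions of $J$. For finiteness, consider the map $\bar\bfc\mapsto P_{\bar\bfc}|_H\in C[H]$, the restriction to $H$ (reduction modulo $I(H)$). It is injective on all of $C^l$: if $P_{\bar\bfc}$ and $P_{\bar\bfc'}$ agree on $H$, then $P_{\bar\bfc}-P_{\bar\bfc'}$ lies in $I(H)\cap C[x_{1,1},\dots,x_{n,n}]_{\le\kappa_2}=(I(H))_{\le\kappa_2}$, so $\bar\bfc=\bar\bfc'$ because the $P_i$ form a $C$-basis of $C[x_{1,1},\dots,x_{n,n}]_{\le\kappa_2}/(I(H))_{\le\kappa_2}$. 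The image of this map lies in the $C$-span of the functions $P_1|_H,\dots,P_l|_H$, a subspace of $C[H]$ of dimension at most $l$. By the previous paragraph, the image of $\Zero(J)$ consists of pairwise distinct characters of $H$; by the linear independence of characters (Artin's lemma), such a set is $C$-linearly independent, so it has at most $l$ elements. Hence $|\Zero(J)|\le l$, and therefore $\dim(J)=0$.

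The only step requiring genuine care is the passage from the universally quantified multiplicativity condition to the finite system $J$, which is precisely where the coordinate ring of $H$ enters; everything else is formal, the single substantive external input being the linear independence of characters. I also note that, combined with Proposition~\ref{prop-boundforcharacters}, this proposition shows the zeros of $J$ already contain a generating set of $X(H)$, although that stronger statement is not part of what is asserted here.
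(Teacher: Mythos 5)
Your proof is correct, and the first half (the injectivity of $\bar\bfc\mapsto P_{\bar\bfc}|_H$, using that $P_1,\dots,P_l$ form a basis of $C[X]_{\le\kappa_2}/(I(H))_{\le\kappa_2}$, and the observation that $P_{\bar\bfc}$ is a nonvanishing regular homomorphism on $H$) coincides with the paper's. Where you diverge is in deducing $\dim(J)=0$. The paper argues by contradiction with cardinality: if $\dim(J)>0$ then $\Zero(J)\cap C^l$ is uncountable (a positive-dimensional variety over the algebraically closed field $C$), and hence $X(H)$ would be uncountable via the injection, contradicting the known countability of $X(H)$ (for connected $H$, $X(H)\cong\bZ^r$). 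You instead invoke Dedekind--Artin linear independence of characters: the image of $\Zero(J)$ is a set of pairwise distinct characters lying in the $l$-dimensional span of $P_1|_H,\dots,P_l|_H$, so there are at most $l$ of them. Your route is more elementary (it does not need the structure theorem for $X(H)$, only that distinct characters are linearly independent as functions), it yields the explicit bound $|\Zero(J)|\le l$, and your explicit nonemptiness check (the constant character $1$) is a genuine improvement on the paper's proof, which passes over that point silently even though $\dim(J)=0$ asserts more than finiteness. Both proofs are valid; yours is self-contained where the paper's leans on structure theory and a transcendence-degree count.
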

 \begin{proof}
    Evidently, for each $\bar{\bfc}\in \Zero(J)\cap C^l$, $P_{\bar{\bfc}}(X)$ is a morphism from $H$ to $C^*$ and thus a character of $H$. Suppose that $\bar{\bfc},\bar{\bfc}'\in \Zero(J)\cap C^l$ and $P_{\bar{\bfc}}(h)=P_{\bar{\bfc}'}(h)$ for all $h\in H$. Then it implies that $P_{\bar{\bfc}}(X)-P_{\bar{\bfc}'}(X)\in (I(H))_{\leq \kappa_2}$. Hence
    $$
        \sum_{i=1}^l (\bar{c}_i-\bar{c}_i')P_i(X) \equiv 0 \mod (I(H))_{\kappa_2}
    $$
    where $\bar{\bfc}=(\bar{c_1},\cdots, \bar{c}_l)$ and $\bar{\bfc}'=(\bar{c_1}',\cdots, \bar{c}_l')$. Since $P_1(X), \cdots, P_l(X)$  modulo $(I(H))_{\kappa_2}$ are linearly independent over $C$.
    So $\bar{\bfc}=\bar{\bfc}'$. That is to say, the map $\varphi: \Zero(J)\cap C^l \rightarrow X(H)$ defined by $\varphi(\bar{\bfc})=P_{\bar{\bfc}}(X)$ is injective. Now assume that $\dim(J)>0$. Then $\Zero(J)\cap C^l$ is an uncountable set and so is $X(H)$. However, it is known that $X(H)$ is a countable set. This contradiction concludes the proposition.
 \end{proof}
 \begin{remark}
 \label{rem-characters}
    Given a connect subgroup $H$ of $\GL_n(C)$, Propositions \ref{prop-boundforcharacters} and \ref{prop-characters} allow us to compute a set of generators of $X(H)$ that are represented by polynomials in $C[x_{1,1},x_{1,2},\cdots, x_{n,n}]$.
 \end{remark}
 \end{appendix}

\end{document}